\newcommand{\mypar}[1]{\vspace{1mm}\textit{#1.}}
\newcommand{\bigex}{%
\mathop{\lower0.75ex\hbox{%
   \scalebox{1.7}{\ensuremath{\exists}}}}\limits}
\newcommand{\numf}[1]{\num[group-separator={,}, group-minimum-digits={3}]{#1}}
\newcommand{\rone}{(\emph{i})\xspace}
\newcommand{\rtwo}{(\emph{ii})\xspace}
\newcommand{\rthree}{(\emph{iii})\xspace}
\newcommand{\rfour}{(\emph{iv})\xspace}
\newcommand{\typeone}{Type I}
\newcommand{\typetwo}{Type II}
\let\oldnl\nl% Store \nl in \oldnl
\newcommand{\nonl}{\renewcommand{\nl}{\let\nl\oldnl}}% Remove line number for one line
\newcommand{\Omit}[1]{}
\newcommand{\query}{\Psi}
\newcommand{\bestoverset}{\mathit{SCons}_{\lang}}
\newcommand{\bestunderset}{\mathit{WImpl}_{\lang}}
\newcommand{\querymod}{\query}
\newcommand{\queryrem}{\query_{rem}}
\newcommand{\queryremwpp}{\query_\wppre}
\newcommand{\lang}{\mathcal{L}}
\newcommand{\langover}{\mathcal{O}}
\newcommand{\langunder}{\mathcal{U}}
\newcommand{\langneg}{\overline{\lang}}
\newcommand{\langoverneg}{\overline{\mathcal{O}}}
\newcommand{\fls}{\bot}
\newcommand{\tru}{\top}
\newcommand{\lpropertyp}[1]{$#1$-property\xspace}
\newcommand{\loverpropertyp}[1]{$#1$-consequence\xspace}
\newcommand{\lunderpropertyp}[1]{$#1$-implicant\xspace}
\newcommand{\lproperty}{\lpropertyp{\lang}}
\newcommand{\loverproperty}{\loverpropertyp{\lang}}
\newcommand{\lunderproperty}{\lunderpropertyp{\lang}}
\newcommand{\lpropertiesp}[1]{$#1$-properties\xspace}
\newcommand{\loverpropertiesp}[1]{$#1$-consequences\xspace}
\newcommand{\lunderpropertiesp}[1]{$#1$-implicants\xspace}
\newcommand{\lproperties}{\lpropertiesp{\lang}}
\newcommand{\loverproperties}{\loverpropertiesp{\lang}}
\newcommand{\lunderproperties}{\lunderpropertiesp{\lang}}
\newcommand{\lconjunctionp}[1]{$#1$-conjunction\xspace}
\newcommand{\lconjunctionsp}[1]{$#1$-conjunctions\xspace}
\newcommand{\ldisjunctionp}[1]{$#1$-disjunction\xspace}
\newcommand{\ldisjunctionsp}[1]{$#1$-disjunctions\xspace}
\newcommand{\lconjunction}{\lconjunctionp{\lang}}
\newcommand{\lconjunctions}{\lconjunctionsp{\lang}}
\newcommand{\ldisjunction}{\ldisjunctionp{\lang}}
\newcommand{\ldisjunctions}{\ldisjunctionsp{\lang}}
\newcommand{\phiquery}{\exists \ecolor{h}.\, \phiprog(\ucolor{v}, \ecolor{h})}
\newcommand{\phiprog}{\psi}
\newcommand{\phiand}{\varphi_{\wedge}}
\newcommand{\phior}{\varphi_{\vee}}
\newcommand{\phiinit}{\varphi_{init}}
\newcommand{\philast}{\varphi_{last}}
\newcommand{\phil}{\varphi}
\newcommand{\hoaretriple}[3]{\{#1\}\,#2\,\{#3\}}
\newcommand{\incortriple}[3]{[#1]\,#2\,[#3]}
\newcommand{\wlpre}{\mathit{wlp}}
\newcommand{\wppre}{\mathit{wpp}}
\newcommand{\spost}{\mathit{spo}}
\newcommand{\wpost}{\mathit{wupo}}
\newcommand{\lwlp}{\lwlpparam{\lang}}
\newcommand{\lspo}{\lspoparam{\lang}}
\newcommand{\lwpo}{\lwpoparam{\lang}}
\newcommand{\lwpp}{\lwppparam{\lang}}
\newcommand{\lwlpparam}[1]{$#1$-weakest liberal precondition\xspace}
\newcommand{\lspoparam}[1]{$#1$-strongest postcondition\xspace}
\newcommand{\lwpoparam}[1]{$#1$-weakest under-approximate postcondition\xspace}
\newcommand{\lwppparam}[1]{$#1$-weakest possible precondition\xspace}
\newcommand{\wlprelang}{\wlprelangp{\lang}}
\newcommand{\wpprelang}{\wpprelangp{\lang}}
\newcommand{\spostlang}{\spostlangp{\lang}}
\newcommand{\wpostlang}{\wpostlangp{\lang}}
\newcommand{\wprelangp}{\wprelangp{\lang}}
\newcommand{\wlprelangp}[1]{\wlpre_{#1}}
\newcommand{\wpprelangp}[1]{\wppre_{#1}}
\newcommand{\spostlangp}[1]{\spost_{#1}}
\newcommand{\wpostlangp}[1]{\wpost_{#1}}
\newcommand{\progstate}{\sigma}
\newcommand{\ex}{e}
\newcommand{\posex}{\ex^+}
\newcommand{\negex}{\ex^-}
\newcommand{\eplus}{E^+}
\newcommand{\eminus}{E^-}
\newcommand{\eminusmay}{E^-_{may}}
\newcommand{\eplusmay}{E^+_{may}}
\newcommand{\overapprox}[1]{\textsc{IsCons}_{#1}\xspace}
\newcommand{\underapprox}[1]{\textsc{IsImpl}_{#1}\xspace}
\newcommand{\checkimpl}{\textnormal{\textsc{CheckImplication}}\xspace}
\newcommand{\cs}{\textnormal{{\textsc{CheckSoundness}}}\xspace}
\newcommand{\cp}{\textnormal{{\textsc{CheckPrecision}}}\xspace}
\newcommand{\cpover}{\textnormal{{\textsc{CheckStrongest}}}\xspace}
\newcommand{\cpunder}{\textnormal{{\textsc{CheckWeakest}}}\xspace}
\newcommand{\synth}{\textnormal{{\textsc{Synthesize}}}\xspace}
\newcommand{\synthproperty}{\textnormal{{\textsc{SynthStrongestConsequence}}}\xspace}
\newcommand{\synthoverproperty}{\textnormal{{\textsc{SynthStrongestConsequence}}}\xspace}
\newcommand{\synthunderproperty}{\textnormal{{\textsc{SynthWeakestImplicant}}}\xspace}
\newcommand{\synthoverproperties}{\textnormal{{\textsc{SynthStrongestConjunction}}}\xspace}
\newcommand{\synthunderproperties}{\textnormal{{\textsc{SynthWeakestDisjunction}}}\xspace}
\newcommand{\interp}[1]{\llbracket #1 \rrbracket}
\newcommand{\poscolor}[1]{\hlg{#1}}
\newcommand{\negcolor}[1]{\hlr{#1}}
\newcommand{\diffcolor}[1]{\textcolor{violet}{#1}}
\newcommand{\keyword}[1]{{\textbf{\texttt{#1}}}}
\newcommand{\cegqi}{\textnormal{\textsc{CEGQI}}\xspace}
\newcommand{\genex}{\textnormal{\textsc{GenCandidateNegEx}}\xspace}
\newcommand{\checkex}{\textnormal{\textsc{CheckCandidateNegEx}}\xspace}
\newcommand{\genspec}{\textnormal{\textsc{GenCandidateSpecNegEx}}\xspace}
\newcommand{\framework}{\textsc{loud}\xspace}
\newcommand{\name}{\textsc{aspire}\xspace}
\newcommand{\spyro}{\textsc{spyro}\xspace}
\newcommand{\sketch}{\textsc{sketch}\xspace}
\newcommand{\sygus}{\textsc{SyGuS}\xspace}
\newcommand{\synquid}{\textsc{Synquid}\xspace}
\newcommand{\exname}[1]{\texttt{#1}\xspace}
\newcommand{\x}{x}
\newcommand{\y}{y}
\newcommand{\vara}{a}
\newcommand{\varl}{l_{in}}
\newcommand{\varmodulus}{M}
\newcommand{\lout}{l_{out}}
\newcommand{\modhash}{\exname{modhash}}
\newcommand{\modhasharg}[3]{#1#3\bmod #2}
\newcommand{\remhash}{\exname{remhash}}
\newcommand{\isprime}{\exname{isPrime}}
\newcommand{\remainderop}{\exname{\%}}
\newcommand{\sort}{\exname{sort}\xspace}
\newcommand{\maxtwo}{\exname{max2}}
\newcommand{\maxthree}{\exname{max3}}
\newcommand{\maxfour}{\exname{max4}}
\newcommand{\maxfive}{\exname{max5}}
\newcommand{\diff}{\exname{diff}}
\newcommand{\arraytwo}{\exname{arrSearch2}}
\newcommand{\arraythree}{\exname{arrSearch3}}
\newcommand{\lappend}{\exname{append}}
\newcommand{\ldelete}{\exname{deleteFirst}}
\newcommand{\ldeleteall}{\exname{delete}}
\newcommand{\ldrop}{\exname{drop}}
\newcommand{\lreverse}{\exname{reverse}\xspace}
\newcommand{\qenqueue}{\exname{enqueue}}
\newcommand{\iaabs}[1]{\exname{abs{#1}}}
\newcommand{\ialinsum}[1]{\exname{linSum{#1}}}
\newcommand{\ianonlinsum}[1]{\exname{nonlinSum{#1}}}
\newcommand{\size}{\exname{len}}
\newcommand{\forallm}{\forall}
\newcommand{\existsm}{\exists}
\newcommand{\nondisj}{D}
\newcommand{\nonap}{AP}
\newcommand{\nonint}{I}
\definecolor{lgreen}{RGB}{248,255,248}
\definecolor{dgreen}{RGB}{0,128,0}
\definecolor{titlecol}{RGB}{170,199,250}
\definecolor{dred}{RGB}{200,0,0}
\definecolor{lyellow}{RGB}{255,255,40}
\definecolor{plum}{rgb}{0.56, 0.27, 0.52}
\definecolor{dblue}{RGB}{0, 0, 200}
\definecolor{bviolet}{RGB}{71, 57, 146}
\definecolor{dyellow}{RGB}{253,180,00}
\newcommand{\hlg}[1]{\colorbox{lime}{$#1$}}
\newcommand{\hlr}[1]{\colorbox{pink}{$#1$}}
\newcommand{\hln}[1]{\colorbox{gray!20}{$#1$}}
\newcommand{\ecolor}[1]{\textcolor{dblue}{#1}}
\newcommand{\ucolor}[1]{\textcolor{dred}{#1}}
\newcommand{\inlinef}[1]{``#1''}
\newcommand{\isPrime}{\exname{isPrime}}
\newcommand{\shuff}[1]{\exname{shuffle{#1}}}
\newcommand{\wppBubble}[1]{\exname{bubble{#1}}}
\newcommand{\wppSwap}[1]{\exname{swap{#1}}}
\newcommand{\wppHash}[1]{\exname{hashcoll}{#1}}
\newcommand{\coin}{\exname{coin}}
\newcommand{\rsum}{\exname{rsum}}
\newcommand{\rsquaresum}{\exname{rsqsum}}
\newcommand{\rcubicsum}{\exname{rcubsum}}
\newcommand{\mergesort}{\exname{mergesort}}
\newcommand{\jain}[1]{\exname{jain{#1}}}
\newcommand{\race}[1]{\exname{race{#1}}}
\newcommand{\resource}[1]{\exname{rsrc{#1}}}
\newcommand{\philosopher}{\exname{philo}}
\newcommand{\obdet}{\exname{obdet}}
\newcommand{\remwpp}{\exname{remwpp}}
\newcommand{\remwupo}{\exname{remwupo}}
\newcommand{\incwpp}[1]{\exname{inc{#1}wpp}}
\newcommand{\incwupo}[1]{\exname{inc{#1}wupo}}
\newcommand{\arith}[1]{\exname{arit{#1}}}
\newcommand{\rggame}{\exname{rg}}
\newcommand{\numgame}[1]{\exname{num{#1}}}
\newcommand{\decision}{\exname{decision}\xspace}
\newcommand{\playgame}{\text{play}\xspace}
\newcommand{\playerA}{player-1\xspace}
\newcommand{\playerB}{player-2\xspace}
\newcommand{\gtrue}{T}
\newcommand{\gfalse}{F}
\newcommand{\strategyA}{\ucolor{\alpha}}
\newcommand{\strategyB}{\ecolor{\beta}}
\newcommand{\nimgame}[1]{\exname{nim{#1}}}
\newcommand{\tempgame}{\exname{temp}}
\newcommand{\atomcons}[1]{\ucolor{ac_{#1}}}
\newcommand{\atomp}{\textit{atom}}
\newcommand{\numSpec}{35\xspace}
\newcommand{\numSpecSyGuS}{7\xspace}
\newcommand{\numSpecImp}{6\xspace}
\newcommand{\numNondConc}{8\xspace}
\newcommand{\numILPaperWPP}{3\xspace}
\newcommand{\numConcProb}{3\xspace}
\newcommand{\cachespd}{1.4x\xspace}
\newenvironment{mybox}[1][gray!20]{
	\begin{tcolorbox}[   %% Adjust the following parameters at will.
		breakable,
		left=0pt,
		right=0pt,
		top=0pt,
		bottom=-1pt,
		colback=#1,
		colframe=#1,
		width=\dimexpr\textwidth\relax,
		%enlarge left by=0mm,
		boxsep=2pt,
		arc=0pt,outer arc=0pt,
		%after={\vspace{-0.8cm}},
		]
	}{
\end{tcolorbox}
}
\newtheorem{example}{Example}[section]
\newtheorem{definition}{Definition}[section]
\setlist[itemize]{align=parleft,left=0pt..1em, topsep=2pt}
\setlist[description]{topsep=2pt}
\begin{document}

\title{\framework: Synthesizing Strongest and Weakest Specifications}

\author{Kanghee Park}
\authornote{Kanghee Park and Xuanyu Peng contributed equally to this work.}
\orcid{0009-0005-7983-233X}
\affiliation{%
  \institution{University of California San Diego}
  \country{USA}
}
\email{kap022@ucsd.edu}

\author{Xuanyu Peng}
\authornotemark[1]
\orcid{0000-0001-8613-3506}
\affiliation{%
  \institution{University of California San Diego}
  \country{USA}
}
\email{xup002@ucsd.edu}

\author{Loris D’Antoni}
\orcid{0000-0001-9625-4037}
\affiliation{%
  \institution{University of California San Diego}
  \country{USA}
}
\email{ldantoni@ucsd.edu}

\begin{abstract}
%% 1. State the problem
This paper tackles the problem of synthesizing specifications for nondeterministic programs. 
For such programs, useful specifications can capture demonic properties, which hold for \emph{every} nondeterministic execution, but also
angelic properties, which hold for \emph{some} nondeterministic execution.
We build on top of a recently proposed framework by \citet{park2023specification} in which given 
\rone a \textit{quantifier-free} query $\query$ posed about a set of function definitions (i.e., the behavior for which we want to generate a specification), and 
\rtwo a language $\lang$ in which each extracted property is to be expressed (we call properties in the language \lproperties), the goal is to synthesize a conjunction $\bigwedge_i \phil_i$ of \lproperties such that each of the $\phil_i$ is a \emph{strongest \loverproperty} for $\query$:
$\phil_i$ is an over-approximation of $\query$ and there is no other \lproperty that over-approximates $\query$ and is strictly more precise than $\phil_i$.
This framework does not apply to nondeterministic programs for two reasons: it does not support existential quantifiers in queries (which are necessary to expressing nondeterminism) and it can only compute \loverproperties, i.e., it is unsuitable for capturing both angelic and demonic properties.

This paper addresses these two limitations and presents a framework, \framework, for synthesizing both \emph{strongest \loverproperties} and \emph{weakest \lunderproperties} (i.e., under-approximations of the query $\query$) for queries that can involve \textit{existential quantifiers}.
% %% 4. Say what follows from your solution
We devise algorithms for handling the quantifiers appearing in \framework queries and implement them in a solver, \name, for problems expressed in \framework which can be used to describe and identify sources of bugs in both deterministic and nondeterministic programs,
extract properties from concurrent programs,
and synthesize winning strategies in two-player games.
\end{abstract}

\begin{CCSXML}
<ccs2012>
   <concept>
       <concept_id>10003752.10010124.10010138.10010140</concept_id>
       <concept_desc>Theory of computation~Program specifications</concept_desc>
       <concept_significance>500</concept_significance>
       </concept>
   <concept>
       <concept_id>10011007.10010940.10010992.10010998.10011000</concept_id>
       <concept_desc>Software and its engineering~Automated static analysis</concept_desc>
       <concept_significance>500</concept_significance>
       </concept>
   <concept>
       <concept_id>10003752.10003790.10011119</concept_id>
       <concept_desc>Theory of computation~Abstraction</concept_desc>
       <concept_significance>500</concept_significance>
       </concept>
   <concept>
       <concept_id>10011007.10011074.10011092.10011782</concept_id>
       <concept_desc>Software and its engineering~Automatic programming</concept_desc>
       <concept_significance>500</concept_significance>
       </concept>
 </ccs2012>
\end{CCSXML}

\ccsdesc[500]{Theory of computation~Program specifications}
\ccsdesc[500]{Software and its engineering~Automated static analysis}
\ccsdesc[500]{Theory of computation~Abstraction}
\ccsdesc[500]{Software and its engineering~Automatic programming}

\keywords{Program Specifications, Program Synthesis}

\maketitle

\section{Introduction}
\label{Se:Introduction}

% The idea of automatically generating specifications from a given piece of software has found application in a variety of domains.
% 
%
Specifications allow us to understand what programs do, but are often hard to write and maintain.
Writing specifications is especially hard for programs involving nondeterminism,
a construct necessary to model many practical applications such as concurrency, random execution, and games.
Part of what makes writing such specifications hard is that specifications for programs involving nondeterminism might capture two types of properties:
\textit{demonic properties}, which hold for \emph{every} nondeterministic execution,
and \textit{angelic properties}, which hold for \emph{some} nondeterministic execution.

Consider a program that nondeterministically shuffles the elements of a list.
A possible valid demonic property could state that the output list is always a permutation of the input list.
However, \citet{edsko2011reverse} argued that this specification alone is ``incomplete'';
a sorting function would also meet this specification, even though it would not behave correctly as a shuffle function.
A better specification would include an angelic property that states that the shuffle function can in fact produce all permutations of the input list---i.e., for any permutation of the input list, there exists \emph{some} nondeterministic execution that can generate it.

The goal of this paper is to devise a unified logical framework for synthesizing provably sound angelic and demonic specifications for nondeterministic programs.
Most approaches that automatically generate specifications from code~\cite{DBLP:journals/tse/ErnstCGN01,DBLP:journals/scp/ErnstPGMPTX07,smallbone2017quick,DBLP:journals/pacmpl/AstorgaSDWMX21} rely on a finite set of dynamically generated input executions and cannot guarantee soundness even when the programs are deterministic.
To our knowledge, \spyro~\cite{park2023specification} is the only framework for synthesizing specifications that are provably sound, but it is fundamentally limited to deterministic programs.
In this paper, we redesign the \spyro framework to support nondeterminism, and both angelic and demonic reasoning.

\paragraph{Limitations of \spyro}
In \spyro, the problem of synthesizing a specification is phrased as follows:
Given \rone a \textit{quantifier-free} query $\query$ posed about a set of function definitions, and \rtwo a domain-specific language $\lang$ in which each extracted property is to be expressed (we call properties in the language \lproperties),
the goal is to synthesize a conjunction $\bigwedge_i \phil_i$ of
% set $\{\phil_1, \ldots , \phil_n\}$ of 
logically incomparable \lproperties such that each $\phil_i$ is an \emph{over-approximation} of $\query$ and is a strongest in $\lang$---i.e., there is no other \lproperty that over-approximates $\query$ and that strictly implies $\phil_i$.
For example, for a query $\query := (\lout = \lreverse(\varl))$ describing a list-reverse function, and a language $\lang$ of arithmetic formulas over variables and their lengths,
the property $\phil := \size(\lout) \leq \size(\varl)$ is an \loverproperty of $\query$, but not a strongest one,
because $\phil_1 := \size(\lout) = \size(\varl)$ is a stronger \loverproperty of $\query$.
% In this example, the properties $\phil_1 := \size(\lout) = \size(\varl)$ and $\phil_2 := \size(\varl) \leq 1 \Rightarrow \varl = \lout$ are logically incomparable strongest \lproperties that over-approximate $\query$.

Because the query $\Psi$ and DSL $\lang$ can be provided by a user, the \spyro framework can be applied to many domains.
By setting $\lang$ to capture the syntax of restricted refinement types, \spyro has been used for extracting refinement types from data-structure transformations~\cite{polikarpova2016program}, whereas by setting $\lang$ to capture algebraic specification, \spyro could synthesize interfaces for software modules~\cite{DBLP:conf/fmcad/ParkJDR23}.

\spyro is very expressive, but it does not support nondeterministic programs for two reasons.
First, \spyro's synthesis algorithms are fundamentally limited to \textit{quantifier-free queries}.
Without existential quantifiers, \spyro's queries cannot model nondeterministic programs, concurrent programs, or uncertainty.
Second, \spyro is \textit{limited to synthesizing over-approximations} of the program behavior.
While over-approximations can capture what must happen for \emph{every} (nondeterministic) execution (the demonic properties), reasoning about the actual behaviors that \emph{some} (nondeterministic) execution can exhibit (the angelic properties) requires under-approximated specifications.

\paragraph{The \framework Framework}
This paper addresses the two limitations of \spyro and presents \framework, a general framework for solving the following problem:
\begin{mybox}
  Given an \textbf{existentially quantified query} $\query$ posed about a set of function definitions and a language $\lang$, find \rone a \textbf{strongest conjunction} of \loverproperties that is implied by $\query$, and
  \rtwo a \textbf{weakest disjunction} of \lunderproperties that implies $\query$.
\end{mybox}

The \framework framework is our key contribution. While \framework superficially looks like a small modification of \spyro~\cite{park2023specification}, the minimal extension of including existential quantifications in queries allows \framework to elegantly capture many new complex scenarios into a single unified logical framework.
Specifically, existentially quantified queries allow \framework to reason about both angelic and demonic properties of programs involving nondeterminism.
For example, consider the dining-philosophers problem where $n$ philosophers are arranged in a circle, each concurrently (and nondeterministically) acquiring and releasing contended resources placed on either of their sides.
We can model what combinations of actions and scheduling lead to a deadlock using an existentially quantified query such
as $\exists \ecolor{s}.\, \ucolor{dl} = \philosopher(\ecolor{s}, \ucolor{p_1}, \ldots, \ucolor{p_n})$, where $\ucolor{p_i} \in \{L, R\}$ indicates which resource the philosopher $\ucolor{p_i}$ tries to take first and
$\ucolor{dl}$ denotes that a deadlock has happened; $\ecolor{s}$ is the nondeterministic sequence of order in which threads are scheduled (detailed description in \Cref{se:eval-concurrency}).

Supporting both over- and under-approximate reasoning (i.e., computing both \loverproperties and \lunderproperties) enables new applications.
Let's say we are interested in understanding what philosophers' actions \textit{may} lead to a deadlock for \textit{some} possible schedule.
When given an appropriate language $\lang$, a possible under-approximation (i.e., \lunderproperty) of the query $\exists \ecolor{s}.\, \ucolor{dl} = \philosopher(\ecolor{s}, \ucolor{p_1}, \ldots, \ucolor{p_n})$ is $\ucolor{dl} \wedge \ucolor{p_1} = \cdots = \ucolor{p_n}$, which states that deadlock can happen when all the philosophers prefer the same direction.
On the other hand, if we instead are interested in what philosophers' choices  \textit{must} prevent deadlocks for \textit{any} possible schedule, we can resort to over-approximation. A possible over-approximation (i.e. \loverproperty) in the \framework framework is $\ucolor{p_1} \neq \ucolor{p_2}  \Rightarrow \neg \ucolor{dl}$, which states that deadlock will not happen when processes $\ucolor{p_1}$ and $\ucolor{p_2}$ disagree on their fork choice.
The above two example properties show that, for nondeterministic programs, consequences hold for \textit{every possible} nondeterministic choice (the demonic perspective), whereas implicants hold for \textit{at least one} nondeterministic choice (the angelic perspective).

Thanks to its generality,
\framework can also capture reasoning capabilities of Hoare logic~\cite{hoare69axiomatic} (e.g., computing weakest liberal precondition and strongest postcondition) and incorrectness logic~\cite{Peter2019Incorrectness,edsko2011reverse} (e.g., computing weakest possible precondition and weakest under-approximate postcondition).

\paragraph{New Synthesis Algorithms in \framework}

Existentially quantified queries and the ability to synthesize both over- and under-approximations make the \framework framework more expressive than \spyro, but also make synthesis more challenging, thus requiring new algorithmic insights.

% To understand why it is challenging, it is useful to understand how synthesis problems expressible in \spyro are solved.

\citet{park2023specification} presented a counterexample-guided synthesis (CEGIS) algorithm for solving problems in the \spyro framework.
% 
% At a high level, 
The algorithm accumulates positive and negative examples of possible program behaviors with respect to the given query and synthesizes \lproperties consistent with them.
A primitive called \cs checks if a candidate property is indeed sound and, if not, it produces a new positive example that the property fails to accept.
% Using a primitive called \cs, the algorithm checks if a candidate property is indeed sound and, if not, it produces a new positive example that the property fails to accept.
% 
To ensure \loverproperties are strongest, a primitive \cp checks if the current \loverproperty is strongest;
% To find strongest \loverproperties (and not just sound ones), a primitive called \cp checks if the current \loverproperty is strongest;
%\lproperty is precise;
if it is not, \cp returns a new \lproperty that accepts all positive examples, rejects all negative examples, and rejects one more negative example (which is also returned).
By alternating calls to these primitives, the algorithm eventually finds a strongest \loverproperty.

Key contributions and innovations in how we algorithmically solve \framework problems include
\rone generalized \cs and \cp primitives so that each operation has a dual form that can be used to synthesize both \loverproperties and \lunderproperties, and
\rtwo how we implement these primitives in the presence of existential quantifiers.
% One key contribution and innovation in how we algorithmically solve \framework problems consists of generalized \cs and \cp primitives so that each operation has a dual form that can be used to synthesize both \loverproperties and \lunderproperties.
% % 
% Aside from providing elegant dual definitions, one key innovation in our work lies in how we implement these primitives in the presence of existential quantifiers.
% 
Specifically, proving that an \loverproperty is strongest and proving that an \lunderproperty is sound require solving a constraint with quantifier alternations of the form $\exists \ucolor{\ex}.\, \forall \ecolor{h}.\, \neg \phiprog(\ucolor{\ex}, \ecolor{h}) \land \phil(\ucolor{\ex})$.
To perform this check, we integrate a counterexample-guided quantifier instantiation algorithm (CEGQI) that operates in tandem with the overall CEGIS algorithm.
The CEGIS algorithm accumulates examples that approximate the behavior of the query, while the CEGQI algorithm accumulates instances of the quantified variable $\ecolor{h}$ that show if an example is positive or negative.
To our knowledge, our algorithm is the first one to combine CEGIS and CEGQI to deal with multiple nested quantifiers.

We implement a tool, called \name, to solve the synthesis problems in the \framework framework. \name can describe and identify sources of bugs in both deterministic and nondeterministic programs,
extract properties from concurrent programs, and synthesize winning strategies in two-player games. Because \name is built on the top of the \sketch program synthesizer~\cite{DBLP:journals/sttt/Solar-Lezama13}, it is only sound for programs in which inputs, recursion, and loops are bounded.
In the future, this limitation can be lifted by considering more general (though less efficient) program synthesizers~\cite{MessySemGuSTool-2022}.

\mypar{Contributions} Our work makes the following contributions:
\begin{itemize}
  \item
        A unified logical framework, \framework, for the problem of synthesizing strongest \loverproperties and weakest \lunderproperties for existentially quantified queries (\S\ref{se:framework}).
  \item
        Algorithms for solving \framework problems using four simple well-defined primitives: \synth, \checkimpl, \cpover and \cpunder (\S\ref{se:algorithm}).
  \item
        An algorithm that combines CEGQI and CEGIS to efficiently implement the primitives \checkimpl and \cpover for existentially quantified queries (\S\ref{se:cegqi}).
  \item
        A tool that implements our framework, called \name (\S\ref{se:implementation}).
  \item
        Multiple instantiations of \framework, showing its capability across a wide range of applications, e.g., reasoning about nondeterministic/concurrent programs and synthesizing game strategies (\S\ref{se:evaluation}).
\end{itemize}
\S\ref{se:related-work} discusses related work.
\S\ref{se:conclusion} concludes.
In the appendix,
\S\ref{app:relation-to-program-logics} relates \framework to program logics;
\S\ref{app:alg-allproperties} contains further details about algorithms;
\S\ref{app:proof} contains proofs;
and \S\ref{App:eval} contains further evaluation details.
\section{Motivating Examples}
\label{se:motivating-example}

In this section, we illustrate how the \framework framework can be used to synthesize useful
over-approximated (\S\ref{se:example-reachability-over}) and under-approximated (\S\ref{se:example-reachability-under}) 
properties of programs.

\begin{figure}[t!]    
    \centering
    \begin{subfigure}{0.28\textwidth}
        \centering
        \input{code-figure/query}
        \vspace{-10pt}
        \caption{Query $\querymod$}
        \label{fig:modhash-query}
    \end{subfigure}
    \hspace{1em}\vrule\hspace{0.5em}
    \begin{subfigure}{0.31\textwidth}
        \centering
        \input{code-figure/dsl-over}
        \vspace{-10pt}
        \caption{DSL $\langover$}
        \label{fig:hash-over-grammar}
    \end{subfigure}
    \hspace{1em}\vrule\hspace{0.5em}
    \begin{subfigure}{0.28\textwidth}
        \centering
        \input{code-figure/property-over}
        % \vspace{-10pt}
        \caption{\loverpropertiesp{\langover}}
        \label{fig:modhash-over-prop}
    \end{subfigure}
    \vspace{-2mm}
    \caption{
    (a) A query $\querymod$ for identifying properties of \modhash that hold for any choice of input $\ecolor{\x}$. 
    Users declare variables and label existentially quantified ones with the keyword \texttt{\textbf{exist}}.
    (b) A DSL $\langover$ for over-approximations. \texttt{\textbackslash/[AP, 0..6]} is a shorthand for the disjunction of 0 to 6 atomic propositions. 
    (c) \loverpropertiesp{\langover} by our tool \name when given the query $\querymod$ and the DSL $\langover$.
    We write $p \Rightarrow q$ instead of $\neg p \vee q$ for readability.}
\end{figure}

Consider the parametric hash function shown in \Cref{fig:modhash}, where $\x$ is an integer input and $\vara$ and $\varmodulus > 0$ are possible parameters---i.e., $\vara$ and $\varmodulus$ are fixed in a specific implementation of $\modhash$. 
Intuitively, \modhash~can be viewed as a family of hash functions where the variable $\x$ is the input.

\begin{wrapfigure}{r}{0.33\textwidth}
\vspace{-4mm}
    \centering
    \begin{lstlisting}[ tabsize=3, 
    basicstyle= \tt \footnotesize, 
    keywordstyle=\color{black}\bfseries, 
    commentstyle=\color{gray}, 
    escapeinside=``,
    language = C,
    morekeywords = {mod},
    numbers = none
    ]
int modhash (int a, M, x) {
    return a * x mod M; }
\end{lstlisting}
\vspace{-4mm}
\caption{\texttt{modhash} function}
\label{fig:modhash}
\vspace{-4mm}    
\end{wrapfigure} 
In \framework, to reason about the behavior of a program, one provides a logical query they are interested in over- or under-approximating with properties in a given language.
For example, one may provide the query $\querymod$ in \Cref{fig:modhash-query},
which is equivalent to the existentially quantified formula 
$\existsm \ecolor{\x}.\, \ucolor{\y} = \ucolor{\vara}\ecolor{\x} \bmod{\ucolor{\varmodulus}}$.
In this example, our goal is to identify which choices of parameters $\ucolor{\vara}$ and $\ucolor{\varmodulus}$ will make \modhash~surjective onto $\mathbb{Z}_{\ucolor{\varmodulus}} = \{0, 1, \ldots, \ucolor{\varmodulus} - 1\}$. Specifically, we seek properties that capture the relationship between the output variable $\ucolor{\y}$ and the parameters $\ucolor{\vara}$ and $\ucolor{\varmodulus}$ (colored in red). 
To do so, we treat the input $\ecolor{\x}$ as a nondeterministic parameter (existentially quantified and thus colored in blue).

% \kh{Moved subsubsection header to here}
\subsection{Over-approximate Reasoning} 
\label{se:example-reachability-over}

We start with properties that are consequences (over-approximations) of the query in \Cref{fig:modhash-query}. 
That is, we want formulae $\phil(\ucolor{\y}, \ucolor{\vara}, \ucolor{\varmodulus})$ such that
% \begin{equation}
% \label{eq:hash-over-condition}
$\forallm \ucolor{\y}, \ucolor{\vara}, \ucolor{\varmodulus}.\, 
(\existsm \ecolor{\x}.\, \ucolor{\y} = \ucolor{\vara}\ecolor{\x} \bmod{\ucolor{\varmodulus}})
\Rightarrow \phil(\ucolor{\y}, \ucolor{\vara}, \ucolor{\varmodulus})$.
% \end{equation}

% \input{code-figure/dsl-over}
As argued by \citet{park2023specification}, different applications---e.g., generating type judgments or generating algebraic specifications---require formulae to adhere to a specific syntactic fragment.
Thus in our framework, users are in charge of providing a DSL $\lang$ as an input to express properties they are interested in; we call properties expressible in this DSL \lproperties. 
We note that the DSL $\lang$ might contain user-given functions for which an implementation should be provided.
Furthermore, we say an \lproperty $\phil$ is an \loverproperty if it is a consequence of the given query formula.
One goal of our framework is to synthesize a set of incomparable \emph{strongest} \loverproperties.

In our example, the user provides the DSL $\langover$ shown in \Cref{fig:hash-over-grammar}, which includes the constant $0$, all free variables and comparison operations between them, and also the user-defined function \isprime~(together with its implementation) that is potentially related to the problem.
An incomparable set of strongest \loverpropertiesp{\langover} for the query $\querymod$ is shown in \Cref{fig:modhash-over-prop}.
% I intentianally didn't use the word L-consequence because the conjunction may not be in L, but too long to explain it.
Since conjoining two consequences of $\query$ results in a stronger consequence of $\query$, 
we interpret the set of properties as their conjunction and thus call the set an \lconjunctionp{\langover}.
The properties in \Cref{fig:modhash-over-prop} give us insights into the behavior of the function \modhash---e.g., that setting the value of $\ucolor{\vara}$ to be equal to $\ucolor{\varmodulus}$ or $-\ucolor{\varmodulus}$ is probably not a good idea as it would result in a function that always returns $\ucolor{\y}=0$.
For our discussion, we focus our attention on the first two properties, which imply that the output $\ucolor{\y}$ falls within $0 \leq \ucolor{\y} < \ucolor{\varmodulus}$.
A well-designed hash function with a set $S$ as range should be surjective onto the set $S$, 
meaning that for every value $v$ in $S$, there should be inputs that yield $v$ as output. 
However, because the formulae in \Cref{fig:modhash-over-prop} are over-approximations, we are not guaranteed that all the values in $0 \leq \ucolor{\y} < \ucolor{\varmodulus}$ are indeed possible outputs of \modhash.

\subsection{Under-approximate Reasoning}
\label{se:example-reachability-under}

Over-approximation alone cannot capture whether a specific program behavior \emph{can} occur---i.e., is reachable.
For a formula $\phil(\ucolor{\y}, \ucolor{\vara}, \ucolor{\varmodulus})$ to define a reachability condition (i.e., a behavior that \textit{must} happen) of $\ucolor{\y}$ from some input $\ecolor{\x}$, the formula $\phil(\ucolor{\y}, \ucolor{\vara}, \ucolor{\varmodulus})$ must be an implicant (under-approximation) of the query $\querymod$, which formally can be stated as follows:
% \begin{equation}
% \label{eq:hash-under-condition}
$\forallm \ucolor{\y}, \ucolor{\vara}, \ucolor{\varmodulus}.\, \phil(\ucolor{\y}, \ucolor{\vara}, \ucolor{\varmodulus}) \Rightarrow \exists \ecolor{\x}.\, \ucolor{\y} = \ucolor{\vara}\ecolor{\x} \bmod{\ucolor{\varmodulus}}$
% \end{equation}

We say an \lproperty $\phil$ is an \lunderproperty if it is an implicant of the given query formula.
Another goal of our framework is to synthesize a set of incomparable \emph{weakest} \lunderproperties.

% In our example, we define the DSL $\langunder$ (for under-approximation) using the  rules for $\nonap$ and $\nonint$ as shown in \eqref{eq:hash-over-grammar-ex}, but replace disjunction rules (nonterminal $\nondisj$) with the following conjunction rules:
% \begin{equation}
% \label{eq:hash-under-grammar-ex}
% \begin{array}{rcl}
%     \nonconj & := & \bot \mid \nonap \mid \nonap \land \nonap \mid \cdots \mid \nonap \land \nonap \land \nonap \land \nonap \land \nonap \land \nonap
% \end{array}
% \end{equation}
% \loris{make a note at this point in next para somewhere that throughout paper we'll use O for lang from figure for overapprox and U for language here for underapprox for our examples}
In our example, the user provides the DSL $\langunder$ using the rules for \texttt{\nonap} and \texttt{\nonint} as shown in \Cref{fig:hash-over-grammar}, but replaces disjunction rules (nonterminal \texttt{\nondisj}) with the conjunction rules \texttt{C -> /\textbackslash[AP, 0..6]}.
Throughout the paper we will use $\langover$ to denote the language from \Cref{fig:hash-over-grammar} in examples involving over-approximations, and $\langunder$ for the language described here in examples involving under-approximation.
For instance, $\ucolor{\vara} = 0 \land \ucolor{\y} = 0$ is a \lunderpropertyp{\langunder} for query $\querymod$, but not a weakest one, as it strictly implies a \lunderpropertyp{\langunder} $\ucolor{\y} = 0$.

\begin{wrapfigure}{r}{0.43\textwidth}
\vspace{-8mm}
    \centering
    \begin{lstlisting}[ tabsize=3, 
    basicstyle= \tt \footnotesize, 
    keywordstyle=\color{black}\bfseries, 
    commentstyle=\color{gray}, 
    xleftmargin=-2.5em, 
    escapeinside=``,
    language = C,
    morekeywords = {Language, bool},
    emph={a, M, y}, emphstyle=\color{dred},  
    emph={[2]x}, emphstyle={[2]\color{dblue}}, 
    numbers = none
    ]
    I1: y == 0
    I2: 0 <= a /\ a < M /\ a == y
    I3: 0 <= y /\ y < M /\ -M < a 
        /\ a < M /\ a != 0 /\ isPrime(M)
    
\end{lstlisting}
\vspace{-4mm}
\caption{Synthesized \lunderpropertiesp{\langunder}. }
\vspace{-4mm}
\label{fig:modhash-under-prop} 
\end{wrapfigure}
A mutually incomparable set of weakest \lunderpropertiesp{\langunder} for the query $\querymod$ is shown in \Cref{fig:modhash-under-prop}. 
We interpret the set of properties as their disjunction and refer to the set as a \ldisjunctionp{\langunder}.
Each formula provides a sufficient condition for reachability of the output $\ucolor{\y}$---that is, if $\ucolor{\y}$, $\ucolor{\vara}$ and $\ucolor{\varmodulus}$ satisfy any formula in \Cref{fig:modhash-under-prop}, 
then there exists an input $\ecolor{\x}$ such that $\ucolor{\y} = \ucolor{\vara}\ecolor{\x} \bmod{\ucolor{\varmodulus}}$.
Crucially, the last formula provides a sufficient condition for \modhash~to be surjective onto $\mathbb{Z}_{\ucolor{\varmodulus}} = \{0, 1, \ldots, \ucolor{\varmodulus} - 1\}$---i.e., for a prime value of $\ucolor{\varmodulus}$ and non-zero value of $\ucolor{\vara}$ selected from the range $- \ucolor{\varmodulus} < \ucolor{\vara} < \ucolor{\varmodulus}$, 
all values of $\ucolor{\y}$ in $\mathbb{Z}_{\ucolor{\varmodulus}}$ are attainable from some choice of input $\ecolor{\x}$.

While our primary motivation is to model nondeterminism, 
the generality of our framework enables a variety of other applications. 
In particular, existential quantifiers can be used to model both forward and backward reasoning in the style of Hoare and incorrectness program logics for both deterministic and nondeterministic programs. 
For example, given a program and a predicate over its output, one can synthesize preconditions on what input may produce (over-approximate) or must produce (under-approximate) an output that satisfies the given predicate.
% 
% A detailed discussion of various applications of our framework is provided in \Cref{se:evaluation} and \Cref{app:relation-to-program-logics}.

A detailed discussion of various applications of our framework is provided in \Cref{se:evaluation} and \Cref{app:relation-to-program-logics}.
\section{Strongest Consequences and Weakest Implicants}
\label{se:framework}

In this section, we describe our framework, which extends the strongest \loverproperties synthesis framework of \citet{park2023specification} in two key ways: 
\rone allowing existentially quantified variables, and
\rtwo enabling the synthesis of both the strongest \loverproperties and weakest \lunderproperties.

We describe what inputs a user of the framework provides, and what they obtain as output.

\noindent
\textbf{Input 1: Query.} 
The \textit{query} $\query$ is a first-order formula of the form $\phiquery$, where $\phiprog$ is a quantifier-free formula $\phiprog$.
% \loris{somewhere make a note that vars can be tuples of course}
% 
The inclusion of the existentially quantified variables $\ecolor{h}$ in the query is a key novelty of this paper: it enables many new applications such as reasoning about nondeterministic programs, and forward and backward reasoning in program logics (\Cref{se:eval-incorrectness}).

We use the symbol $\ecolor{h}$ (for hidden) to represent existentially quantified variables and the symbol $\ucolor{v}$ (for visible) to represent free variables. In practice, both $\ecolor{h}$ and $\ucolor{v}$ can be tuples and denote multiple variables.
In our motivating examples, queries are given in \Cref{fig:modhash-query}.

% \kh{Quantifier free formula notation $\phiprog$ seems ugly, is there a better one?}
% \loris{maybe just phi(v,h) vs phi(v)?}
% \kh{we can use $\varphi_\query$ as quantifier free formula and add quantifier everywhere?}

\noindent\textbf{Input 2: Grammar of \lproperties.}
The grammar of the DSL $\lang$ in which the synthesizer is to express properties for the query.
Each formula $\phil$ in the DSL $\lang$ is a predicate defined over the free variables $\ucolor{v}$ of the query $\phiquery$.
An example of a DSL is the language $\langover$ in \Cref{fig:hash-over-grammar}.

\noindent\textbf{Input 3: Semantics of the program and operators.}
Semantics of the function symbols that appear in query $\query$ (e.g., $\modhash$) and in the DSL $\lang$ (e.g., \isPrime).
In our implementation, semantic definitions are given as a program in the $\sketch$ language~\cite{DBLP:journals/sttt/Solar-Lezama13}, which automatically transforms them into first-order formulas.
For example, one may provide a C-style function written in \sketch that checks whether a number between $2$ and $\sqrt{n}$ is a divisor of $n$ as the semantic of \isprime.
We discuss in \Cref{se:implementation} how \sketch works and examine the limitations of this approach.

\noindent\textbf{Output: Strongest \loverproperties and weakest \lunderproperties.}
Our goal is to synthesize a set of incomparable \emph{strongest} \loverproperties and a set of incomparable \emph{weakest} \lunderproperties of query $\query$. Ideally, both strongest \loverproperty and weakest \lunderproperty would be the formula that is exactly equivalent to $\phiquery$, but in general, the DSL $\lang$ might not be expressive enough to do so. 
As argued by \citet{park2023specification}, this feature is actually a desired one as it allows for the application of our framework to various use cases, as demonstrated in \Cref{se:evaluation}.
Because in general there might not be an \loverproperty and an \lunderproperty that are equivalent to $\phiquery$, the goal becomes instead to find \lproperties that tightly approximate $\phiquery$.

We denote the set of models (over the free variables of $\query$) of a formula $\phil$ as $\interp{\phil}$.
For example in \Cref{se:example-reachability-over}, $\interp{\ucolor{\y} \geq 0}$ represents the set of models $\{ (\ucolor{\y}, \ucolor{\vara}, \ucolor{\varmodulus}) \mid \ucolor{\y} \geq 0 \}$. 
We say $\phil$ is stronger than $\phil'$ (or $\phil'$ is weaker than $\phil$) when $\interp{\phil} \subseteq \interp{\phil'}$, 
and $\phil$ is strictly stronger than $\phil'$ 
% (or $\phil'$ is strictly weaker than $\phil$) 
when $\interp{\phil} \subset \interp{\phil'}$.

% \loris{it looks like you already write (v,h) for QF formula as I said above, so just do same above and drop the QF superscript}
% \loris{very confusing to use phi for both query and formula. Can we use two diff symbols?
% \loris{can we use $\Psi$ and $\psi_\Psi$ for query? Too many phis. Also, if you just write for a query $\Psi = ...\psi(...)$ then you can drop the subscript everywhere throughout}

%
\setlength{\columnseprule}{0.4pt}
\begin{multicols}{2}
\begin{definition}[A strongest \loverproperty]
\label{dfn:a-best-over}
An \lproperty $\phil$ is \emph{a strongest \loverproperty} for a query $\query$
if and only if 

\noindent \rone $\phil$ is a \emph{consequence} of the query $\query$: 
\[
\overapprox{\query}(\phil) := \forall \ucolor{v}.\, [\exists \ecolor{h}.\, \phiprog(\ucolor{v}, \ecolor{h}) \Rightarrow \phil(\ucolor{v})]
\]
\rtwo $\phil$ is \emph{strongest} with respect to $\query$ and $\lang$: 
\[
\neg\exists \phil' \in \lang.\, \overapprox{\query}(\phil') \land \interp{\phil'} \subset \interp{\phil}
\]
%
% .
\end{definition}
%
% \vfill\null
\columnbreak
\begin{definition}[A weakest \lunderproperty]
\label{dfn:a-best-under}
An \lproperty $\phil$ is \emph{a weakest \lunderproperty} for a query $\query$
if and only if 

\noindent \rone $\phil$ is an \emph{implicant} of the query $\query$: 
\[
\underapprox{\query}(\phil) := \forall \ucolor{v}.\, [\phil(\ucolor{v}) \Rightarrow \exists \ecolor{h}.\, \phiprog(\ucolor{v}, \ecolor{h})]
\]
\rtwo $\phil$ is \emph{weakest} with respect to $\query$ and $\lang$: 
\[
\neg\exists \phil' \in \lang.\,
\underapprox{\query}(\phil') \land \interp{\phil'} \supset \interp{\phil}
\]
%
% We use $\bestunderset(\signatures)$ to denote the set of all best \lunderproperties for $\signatures$.
\end{definition}
% \vfill\null
\end{multicols}
% \kh{better to use $\exists \ecolor{h}.\, \phiprog(\ucolor{v}, \ecolor{h})$ instead $\phiquery(\ucolor{v})$?}

Throughout the paper, we also use the term \emph{most-precise} to mean strongest for \loverproperties and weakest for \lunderproperties.
We use $\bestoverset(\query)$ and $\bestunderset(\query)$ to denote the set of all strongest \loverproperties and the set of all best \lunderproperties for $\query$, respectively.
Because $\lang$ may not be closed under conjunction (and disjunction), strongest \loverproperties (and weakest \lunderproperties) may not be semantically unique.
In \Cref{se:example-reachability-over}, both formulae $0 \leq \ucolor{\y}$ and $\ucolor{\y} < \ucolor{\varmodulus}$ are strongest \loverproperties of query $\querymod$, and neither implies the other.

The goal of our framework is to find a semantically strongest conjunction of incomparable strongest \loverproperties and a weakest disjunction of incomparable weakest \lunderproperties.

% \kh{Make sure this definition is not broken}
\begin{multicols}{2}
\begin{definition}[Best \lconjunction]
\label{dfn:best-conjunctions}
A potentially infinite set of \loverproperties $\Pi = \{\phil_i\}$ forms a \emph{best \lconjunction} $\phiand=\bigwedge_{i} \phil_i$ for query $\query$ if and only if

\noindent \rone
    each  $\phil_i\in \Pi$ is a strongest \loverproperty of $\query$;
    
\noindent \rtwo
    every distinct $\phil_i, \phil_j \in \Pi$ are \emph{incomparable}---i.e.,  $\interp{\phil_i}\setminus \interp{\phil_j}\neq \emptyset$ and $\interp{\phil_j}\setminus \interp{\phil_i}\neq \emptyset$;
    
\noindent \rthree
    the set is
    \emph{semantically minimal}---i.e.,
    for every strongest \loverproperty $\phil$ we have $\interp{\phiand}\subseteq \interp{\phil}$.
\end{definition}
%
% \vfill\null
\columnbreak
\begin{definition}[Best \ldisjunction]
\label{dfn:best-disjunctions}
A potentially infinite set of \lunderproperties $\Pi = \{\phil_i\}$ forms a \emph{best \ldisjunction} $\phior=\bigvee_{i} \phil_i$ for query $\query$ if and only if

\noindent \rone
    each  $\phil_i\in \Pi$ is a weakest \lunderproperty of $\query$;
    
\noindent \rtwo
    every distinct $\phil_i, \phil_j \in \Pi$ are \emph{incomparable}---i.e.,  $\interp{\phil_i}\setminus \interp{\phil_j}\neq \emptyset$ and $\interp{\phil_j}\setminus \interp{\phil_i}\neq \emptyset$; 
    
\noindent \rthree
    the set is
    \emph{semantically maximal}---i.e.,
    for every weakest \lunderproperty $\phil$ we have $\interp{\phior}\supseteq \interp{\phil}$.
\end{definition}
%
% \vfill\null
\end{multicols}

Best \lconjunctions and best \ldisjunctions are not necessarily unique, but they are all logically equivalent.
Specifically, a best \lconjunction is equivalent to the conjunction of all possible strongest \loverproperties, and a best \ldisjunction is equivalent to the disjunction of all possible weakest \lunderproperties.
Note that \textit{best} means more than semantic optimality because a strongest \lconjunction is not necessarily a best \lconjunction; predicates $\phil_1(\x) := \x \geq 0$ and $\phil_2(\x) := \x \geq 1$ could form a strongest \lconjunction, but it is not a best one because $\phil_1$ is strictly stronger than $\phil_2$--i.e., $\phil_1$ and $\phil_2$ are comparable.

% \loris{give thm a name}
\begin{restatable}[Semantic Optimality]{theorem}{bestset}
\label{thm:bestset}
If $\phiand$ is a best \lconjunction, then its interpretation coincides with the conjunction of all possible strongest \loverproperties:
$\interp{\phiand}=\interp{\bigwedge_{\phil\in \bestoverset(\query)} \phil}$.
If $\phior$ is a best \ldisjunction, then its interpretation coincides with the disjunction of all possible best \lunderproperties:
$\interp{\phior}=\interp{\bigvee_{\phil\in \bestunderset(\query)} \phil}$.
\end{restatable}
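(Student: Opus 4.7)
The plan is to prove each equality by verifying two containments between sets of models, relying mainly on condition (\emph{i}) of Definitions~\ref{dfn:best-conjunctions} and~\ref{dfn:best-disjunctions} for one direction and the semantic extremality condition (\emph{iii}) for the other. Condition (\emph{ii}) about pairwise incomparability plays no role in semantic equivalence and can be set aside; it is enforced by the definitions to ensure the syntactic representation is compact rather than to affect the model set.

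For the \lconjunction statement, first I would unfold both sides as intersections of model sets: $\interp{\phiand} = \bigcap_{i} \interp{\phil_i}$ and $\interp{\bigwedge_{\phil \in \bestoverset(\query)} \phil} = \bigcap_{\phil \in \bestoverset(\query)} \interp{\phil}$. To show the containment $\interp{\bigwedge_{\phil \in \bestoverset(\query)} \phil} \subseteq \interp{\phiand}$, I would invoke condition (\emph{i}) of Definition~\ref{dfn:best-conjunctions}, which guarantees $\{\phil_i\} \subseteq \bestoverset(\query)$, so intersecting over the larger index set can only shrink the resulting model set. For the reverse containment $\interp{\phiand} \subseteq \interp{\bigwedge_{\phil \in \bestoverset(\query)} \phil}$, I would apply condition (\emph{iii}) pointwise: since $\interp{\phiand} \subseteq \interp{\phil}$ holds for every strongest \loverproperty $\phil$, the set $\interp{\phiand}$ is contained in the intersection taken over all $\phil \in \bestoverset(\query)$.

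The \ldisjunction statement follows by an entirely dual argument. I would unfold both sides as unions of model sets, namely $\interp{\phior} = \bigcup_i \interp{\phil_i}$ and $\interp{\bigvee_{\phil \in \bestunderset(\query)} \phil} = \bigcup_{\phil \in \bestunderset(\query)} \interp{\phil}$. Using condition (\emph{i}) of Definition~\ref{dfn:best-disjunctions}, I have $\{\phil_i\} \subseteq \bestunderset(\query)$, so taking the union over the larger index set gives the containment $\interp{\phior} \subseteq \interp{\bigvee_{\phil \in \bestunderset(\query)} \phil}$. For the reverse direction, condition (\emph{iii})'s semantic maximality gives $\interp{\phil} \subseteq \interp{\phior}$ for every weakest \lunderproperty $\phil$, and taking the union over all such $\phil$ yields $\interp{\bigvee_{\phil \in \bestunderset(\query)} \phil} \subseteq \interp{\phior}$.

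Since the proof is essentially an exercise in unfolding the definitions and tracking the order-reversing duality between conjunction (intersection) and disjunction (union), there is no serious technical obstacle. The only subtlety to flag is that $\bestoverset(\query)$ and $\bestunderset(\query)$ may be infinite, so the conjunction and disjunction over them are not a priori syntactic formulas in $\lang$; however, the statement concerns only their interpretations as intersections and unions of model sets, and both containment arguments remain valid in that purely set-theoretic reading.
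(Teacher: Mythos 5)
Your proof is correct: both equalities follow immediately by unfolding $\interp{\phiand}$ and $\interp{\phior}$ as an intersection and a union, with one containment given by condition (\emph{i}) of Definitions~\ref{dfn:best-conjunctions} and~\ref{dfn:best-disjunctions} (the set $\Pi$ is a subset of $\bestoverset(\query)$, resp.\ $\bestunderset(\query)$) and the other being a direct restatement of condition (\emph{iii}); this is exactly the intended argument, which the paper treats as immediate and does not spell out in its appendix (only the relative-completeness theorem receives an explicit proof there). Your observations that condition (\emph{ii}) is irrelevant to the semantic claim and that the possibly infinite conjunction/disjunction must be read set-theoretically are both accurate.
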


We are now ready to state our problem definition:
% \loris{I would name the def something like "Best l... synthesis"}
\begin{definition}[Best \lconjunction and \ldisjunction Synthesis]
  Given query $\query$,
  the concrete semantics for the function symbols in $\query$, and a domain-specific language $\lang$ with its corresponding semantic definition, synthesize a best \lconjunction and \ldisjunction for $\query$. 
\end{definition}

\paragraph{Practical remarks.}

The algorithm presented in \Cref{se:algorithm} computes finite \lconjunctions and \ldisjunctions, but it is possible that a solution to a given \framework problem instance requires an infinite number of conjuncts (or disjuncts).
Even in this case, because the algorithm incrementally computes incomparable strongest \loverproperties (or weakest \lunderproperties), one can stop it at any point and all computed \loverproperties (or \lunderproperties) will form a valid \lconjunction (or \ldisjunction), just not a best one.
The benchmarks considered in \Cref{se:evaluation} do not encounter this problem since they consider finite (though very large) languages.

Note that \Cref{dfn:best-conjunctions} and \ref{dfn:best-disjunctions} only guarantee that each \textit{pair} of synthesized properties is incomparable.
However, ensuring that each $\phil_i$ is incomparable to the entire set of properties $\Pi \setminus \{\phil_i\}$ only requires checking a single implication for each $\phil_i$, which can be trivially done in a postprocessing phase.
In practice, it is quite rare to encounter a redundant conjunct or disjunct. 

\section{Counterexample-Guided Inductive Specification Synthesis}
\label{se:algorithm}

% \kh{Update: Change terms once we finish sec 3}
In this section, we present algorithms for synthesizing best \lconjunctions and best \ldisjunctions.
We follow the example-guided approach proposed by \citet{park2023specification} that synthesizes strongest \loverproperties.
In tandem, we present a dual algorithm to synthesize weakest \lunderproperties, and we extend both algorithms to allow existentially quantified query formulas.

We first present the primitives necessary to instantiate the synthesis algorithms (\Cref{se:examples}).
Then, we present the algorithms for synthesizing a single \loverproperty or \lunderproperty that is incomparable to all the ones synthesized so far (\Cref{se:oneproperty}).
Finally, we present the algorithms for iteratively synthesizing the properties forming an \lconjunction or an \ldisjunction (\Cref{se:allproperties}).

\subsection{Synthesis from Positive and Negative Examples}
\label{se:examples}

The algorithms for synthesizing strongest \loverproperties and weakest \lunderproperties maintain two sets of examples: a set of positive examples $\eplus$, which should be accepted by the synthesized predicates, and negative examples $\eminus$, which should be rejected by the synthesized predicates.

\begin{definition}[Examples]
    Given a query $\query := \phiquery$ and a model $\ucolor{\ex}$ over the free variable $\ucolor{v}$ of query $\query$,
    we say that $\ucolor{\ex}$ is a \emph{positive example} if $\phiprog(\ucolor{e}, \ecolor{h})$ holds true for \emph{some} value of $\ecolor{h}$ (i.e., $\ucolor{\ex} \in \interp{\query}$) and a \emph{negative example} if $\phiprog(\ucolor{e}, \ecolor{h})$ does not hold for \emph{all} values of $\ecolor{h}$ (i.e., $\ucolor{\ex} \notin \interp{\query}$).
\end{definition}

% Given a formula $\phil$ and an example $\ex$, we write $\phil(\ex)$ to denote $\ex\in\interp{\phil}$ and $\neg\phil(\ex)$ to denote $\ex\not\in\interp{\phil}$.
% 
%
\begin{example}[Positive and Negative Examples]
    \label{ex:pos-neg}
    Given the query $\query := \exists \ecolor{\x}.\, \ucolor{\y} = \modhasharg{\ucolor{\vara}}{\ucolor{\varmodulus}}{\ecolor{\x}}$,
    the model that assigns $\ucolor{\y}$ to the integer $1$, $\ucolor{\vara}$ to the integer $6$, and $\ucolor{\varmodulus}$ to the integer $5$ is a positive example, because the choice of value $\ecolor{x} = 1$ makes the equation $1 = 6 \bmod{5}$ holds true.
    For brevity, we represent such example as $\poscolor{(1, 6, 5)}$, where it denotes a valuation to the tuple $(\ucolor{\y}, \ucolor{\vara}, \ucolor{\varmodulus})$.
    The following examples are negative ones because no value of $\ecolor{\x}$ satisfies $\ucolor{\y} = \modhasharg{\ucolor{\vara}}{\ucolor{\varmodulus}}{\ecolor{\x}}$: $\negcolor{(-1, 1, 3)}$, $\negcolor{(3, 1, 3)}$, $\negcolor{(3, 2, 6)}$.
\end{example}

Because the DSL $\lang$ might not be expressive enough to capture the exact behavior of the query $\query$,
in general there is no predicate capable of accepting all the positive examples and rejecting all the negative examples.
Intuitively, a strongest \loverproperty must accept all positive examples while also excluding as many negative examples as possible.
\begin{example}[Examples and \loverproperties]
    \label{ex:pos-neg-over}
    Consider again the query $\query := \exists \ecolor{\x}.\, \ucolor{\y} = \modhasharg{\ucolor{\vara}}{\ucolor{\varmodulus}}{\ecolor{\x}}$ and the set of strongest \loverpropertiesp{\langover}
    $\{ 0 \leq \ucolor{\y}, \ucolor{\y} < \ucolor{\varmodulus},\,
        \ucolor{\vara} = 0 \Rightarrow \ucolor{\y} = 0,\,
        \ucolor{\vara} = \ucolor{\varmodulus} \Rightarrow \ucolor{\y} = 0,\,
        \ucolor{\vara} = -\ucolor{\varmodulus} \Rightarrow \ucolor{\y} = 0 \}$ from \Cref{fig:modhash-over-prop}.
    While a positive example $(\ucolor{\y}, \ucolor{\vara}, \ucolor{\varmodulus}) = \poscolor{(1, 6, 5)}$ is accepted by all \loverpropertiesp{\langover}, the negative example $\negcolor{(3, 2, 6)}$ is not rejected by any of them.
    In fact, the \loverpropertiesp{\langover} in \Cref{fig:modhash-over-prop} form a best \lconjunction, so $\negcolor{(3, 2, 6)}$ cannot be rejected by \emph{any} strongest \loverpropertyp{\langover}.
\end{example}

As illustrated by \Cref{ex:pos-neg-over} when attempting to synthesize \loverproperties, we can consider positive examples as hard constraints but need to treat negative examples as soft constraints.

For \lunderproperties, the role of positive and negative examples is inverted.
A weakest \lunderproperty must reject all negative examples while also accepting as many positive examples as possible.

% Negative examples are considered as hard constraints, and positive examples as soft constraints. 
% 
%
\begin{example}[Examples and \lunderproperties]
    \label{ex:pos-neg-under}
    Consider again the query $\query := \exists \ecolor{\x}.\, \ucolor{\y} = \modhasharg{\ucolor{\vara}}{\ucolor{\varmodulus}}{\ecolor{\x}}$ and the set of weakest \lunderpropertiesp{\langunder}
    $\{ \ucolor{\y} = 0,\, 0 \leq \ucolor{\vara} < \ucolor{\varmodulus} \land \ucolor{\vara} = \ucolor{\y},\,
        0 \leq \ucolor{\y} < \ucolor{\varmodulus} \land -\ucolor{\varmodulus} < \ucolor{\vara} < \ucolor{\varmodulus} \land \ucolor{\vara} \neq 0 \land \isprime(\ucolor{\varmodulus}) \}$
    from \Cref{fig:modhash-under-prop}.
    While the negative example $(\ucolor{\y}, \ucolor{\vara}, \ucolor{\varmodulus}) = \negcolor{(3, 2, 6)}$ is rejected by all \lunderpropertiesp{\langunder}, the positive example $\poscolor{(1, 6, 5)}$ is not accepted by any of them.
    The \lunderpropertiesp{\langunder} in \Cref{fig:modhash-under-prop} form a best \ldisjunction, so $\poscolor{(1, 6, 5)}$ must be rejected by \emph{every} weakest \lunderpropertiesp{\langunder}.
\end{example}

We are now ready to introduce the generalizations of the key operations used by \citet{park2023specification} to synthesize strongest \loverproperties: \synth(\Cref{se:synth}), \checkimpl(\Cref{se:CheckingImpl}) and \cpover(\Cref{se:CheckingPrecision}).
Additionally, we introduce \cpunder(\Cref{se:CheckingPrecision}), an operation used alongside \synth and \checkimpl to synthesize weakest \lunderproperties.

\subsubsection{Synthesis from Examples}
\label{se:synth}

While strongest \loverproperties and weakest \lunderproperties can effectively treat some of the examples as soft constraints, as we will show in \Cref{se:oneproperty}, our synthesis algorithm can find such properties by iteratively calling a synthesis primitive $\synth$ that treats a carefully chosen set of examples as hard constraints.
Avoiding soft constraints was one of the key innovations of \citet{park2023specification} with respect to prior work~\cite{DBLP:journals/pacmpl/KalitaMDRR22}.

Given a set of positive examples $\eplus$ and a set of negative examples $\eminus$, the procedure $\synth(\eplus, \eminus)$ returns an \lproperty $\phil$ that accepts all the positive examples in $\eplus$ and rejects all the negative examples in $\eminus$, if such an \lproperty $\phil$ exists. If no such \lproperty exists, then $\synth(\eplus, \eminus)$ returns $\bot$.
Given a set of examples $E$, we write $\phil(E)$ to denote the conjunction $\bigwedge_{\ex\in E} \phil(\ex)$ and $\neg\phil(E)$ to denote the conjunction $\bigwedge_{\ex\in E} \neg\phil(\ex)$.
The operation $\synth(\eplus, \eminus)$ can be expressed as the formula $\exists \phil.\, \phil(\eplus) \wedge \neg\phil(\eminus)$.
\begin{example}[\synth]
    \label{ex:synth}
    \Cref{ex:pos-neg-over} showed there can be a negative example that no \loverproperties can reject.
    With the DSL $\langover$ defined in \Cref{fig:hash-over-grammar},
    if $\eplus = \{\poscolor{(1, 6, 5)}\}$ and $\eminus = \{\negcolor{(3, 2, 6)}\}$, then $\synth(\eplus, \eminus)$ can return the formula $\phil(\ucolor{\y}, \ucolor{\vara}, \ucolor{\varmodulus}) := \ucolor{\y} < \ucolor{\vara}$, which is not a consequence of the query $\query := \exists \ecolor{\x}.\, \ucolor{\y} = \modhasharg{\ucolor{\vara}}{\ucolor{\varmodulus}}{\ecolor{\x}}$.
    In this case, once more positive examples are added to $\eplus$ (which is something our synthesis algorithm automatically takes care of), \synth will return $\bot$.
    For example, if $\eplus$ is augmented to the set $\{ \poscolor{(1, 6, 5)}, \poscolor{(1, 1, 5)}, \poscolor{(1, -4, 5)}, \poscolor{(6, 2, 8)} \}$, then $\synth(\eplus, \eminus)$ returns $\bot$---i.e., the negative example $\negcolor{(3, 2, 6)}$ cannot be rejected by any \loverpropertiesp{\langover} and our synthesis algorithm will later remove it from $\eminus$.
\end{example}

% The \synth procedure can be implemented using an example-based program synthesizer.

\subsubsection{Checking Implication}
\label{se:CheckingImpl}

The $\checkimpl$ primitive described in this section allows us to check whether a formula is valid \lunderproperty or a valid \loverproperty.

Given two predicates $\phil$ and $\phil'$, the primitive $\checkimpl(\phil, \phil')$ checks whether $\phil$ is an implicant of $\phil'$ (or dually, whether $\phil'$ is consequence of $\phil$).
In logical terms, $\checkimpl(\phil, \phil')$  checks whether there does not exist an example $\ex$ that is accepted by $\phil$ but rejected by $\phil'$; it returns $\top$ or an example if the check fails. 
This check can be expressed as $\neg \exists e.\, \neg \phil'(e) \land \phil(e)$.

$\checkimpl(\query, \phil)$ returns $\top$ if a predicate $\phil$ is a consequence of $\query$.
Similarly, $\checkimpl(\phil, \query)$  returns $\top$ if a predicate $\phil$ is an implicant of $\query$.

\begin{example}[\checkimpl]
    \label{ex:check-impl}
    Consider again the query $\query := \exists \ecolor{\x}.\, \ucolor{\y} = \modhasharg{\ucolor{\vara}}{\ucolor{\varmodulus}}{\ecolor{\x}}$.
    Because the formula $\phil(\ucolor{\y}, \ucolor{\vara}, \ucolor{\varmodulus}) := \ucolor{\y} < \ucolor{\vara}$ is not a \emph{consequence} of $\query$, the primitive
    $\checkimpl(\query, \phil)$ would return a positive example that is rejected by $\phil$, such as $\poscolor{(1, -4, 5)}$.
    On the other hand, calling $\checkimpl(\query, \phil')$ on the formula $\phil'(\ucolor{\y}, \ucolor{\vara}, \ucolor{\varmodulus}) := \ucolor{\y} \geq 0$ would instead return $\top$ because the formula $\phil'$ is indeed a \emph{consequence} of $\query$.

    Similarly, for the formula $\phil(\ucolor{\y}, \ucolor{\vara}, \ucolor{\varmodulus}) := \ucolor{\y} < \ucolor{\vara}$, which is also not a \emph{implicant} of $\query$,
    running $\checkimpl(\phil, \query)$ (where this time the query $\query$ is the second parameter)  would return a negative example that is accepted by $\phil$, such as $\negcolor{(-1, 1, 3)}$.
    On the other hand, running $\checkimpl(\phil', \query)$ on the formula $\phil'(\ucolor{\y}, \ucolor{\vara}, \ucolor{\varmodulus}) := \ucolor{\y} = 0$ would instead return $\top$ because the formula $\phil'$ is an \emph{implicant} of $\query$.
\end{example}

The \checkimpl procedure can be implemented using a constraint solver.
However, the presence of quantifiers in implicants can result in a constraint with alternating quantifiers, making the check computationally harder, and most importantly, outside the capabilities of solvers that do not support quantifiers.
We discuss a practical procedure for performing this check in \Cref{se:cegqi}.

\subsubsection{Checking Precision}
\label{se:CheckingPrecision}

Checking precision---i.e., whether an \loverproperty is strongest or whether an \lunderproperty is weakest---requires more sophisticated queries than the one described above.
Specifically, one cannot simply ask whether there exists a negative example that is accepted by $\phil$ to check whether $\phil$ is a strongest \loverproperty, because, as shown in \Cref{ex:pos-neg-over}, there might be some negative example that must be accepted by every strongest \loverproperty.

In theory, to prove or disprove that an \loverproperty $\phil$ is strongest one needs to check whether there exists an \lproperty $\phil'$ that is \rone a consequence of the query $\query$ (i.e., $\phil'$ accepts all the positive examples) and \rtwo strictly stronger than $\phil$ (i.e., $\phil'$ rejects at least one more negative example while rejecting all the negative examples that were already rejected by $\phil$).
Such a check would be too expensive as it effectively asks one to synthesize a provably sound \loverproperty.

Our algorithm does not require such a powerful primitive, and instead approximates \rone and \rtwo using a set of positive examples $\eplus$ accepted by $\phil$ and a set of negative examples $\eminus$ rejected by $\phil$.
By combining implication and precision checks in a counterexample-guided fashion, our algorithm improves the approximation over time and is thus sound.

Given an \loverproperty $\phil$, a set of positive examples $\eplus$ accepted by $\phil$, a set of negative examples $\eminus$ rejected by $\phil$,
a query $\query$,
and the \loverproperties $\phiand$ we have already synthesized,
% denoting the set from which examples can be drawn, 
$\cpover(\phil, \phiand, \query,  \eplus, \eminus)$ checks 
if there do not exist an \lproperty $\phil'$ and an negative example $\ex \notin \interp{\query}$ satisfying $\phiand$ such that:
\rone $\phil'$ accepts all the positive examples in $\eplus$;
\rtwo $\phil'$ rejects $\ex$ and all the negative examples in $\eminus$, whereas $\phil$ accepts $\ex$.
In our algorithm, the formula $\phiand$ is used to ensure that the example produced by \cpover is not already rejected by best \loverproperties we have already synthesized.
The above check can be logically stated as follows:
\begin{equation}
    \label{eqn:cpover}
    \cpover(\phil, \hspace{-1pt}\phiand, \hspace{-2pt}\query\hspace{-1pt},  \hspace{-1pt}\eplus\hspace{-2pt}, \hspace{-1pt}\eminus)
    \hspace{-1pt}=\hspace{-1pt}
    \neg \exists \phil'\hspace{-2pt}, \ex.
    \hln{\neg \query(\ex) \hspace{-1pt} \land \hspace{-1pt}
        \phiand(\ex) \hspace{-1pt} \land \hspace{-1pt}
        \phil(\ex) \hspace{-1pt} \land \hspace{-1pt}
        \neg\phil'(\ex)} \hspace{-1pt} \land \hspace{-1pt}
    \phil'(\eplus) \hspace{-2pt} \land \hspace{-2pt}
    \neg\phil'(\eminus)
\end{equation}
The highlighted part of the formula is what changes when checking if the formula is weakest.

\begin{example}[\cpover]
    \label{ex:check-strongest}
    Consider again the query $\query := \exists \ecolor{\x}.\, \ucolor{\y} = \modhasharg{\ucolor{\vara}}{\ucolor{\varmodulus}}{\ecolor{\x}}$, and an \loverpropertyp{\langover} $\phil(\ucolor{\y}, \ucolor{\vara}, \ucolor{\varmodulus}) := \ucolor{\y} \neq \ucolor{\varmodulus}$, which is \emph{not} a strongest one.

    $\cpover(\phil, \top, \query, \{\poscolor{(1, 6, 5)}\}, \{\negcolor{(3, 1, 3)}\})$ can return a strictly stronger \loverpropertyp{\langover}
    $\phil'_1(\ucolor{\y}, \ucolor{\vara}, \ucolor{\varmodulus}) := \ucolor{\y} < \ucolor{\varmodulus}$ with a negative example $\negex_1 = \negcolor{(6, 1, 5)}$.

    However, because $\cpover$ only considers whether the formula $\phil$ is strongest with respect to the examples $\{\poscolor{(1, 6, 5)}\}, \{\negcolor{(3, 1, 3)}\}$, it can also alternatively return a property that is not an actual \loverpropertyp{\langover}---e.g., $\phil'_2(\ucolor{\y}, \ucolor{\vara}, \ucolor{\varmodulus}) := \ucolor{\y} < \ucolor{\vara}$ with a negative example $\negex_2 = \negcolor{(6, 1, 5)}$.
    The formula $\phil'_2$ is not a \loverpropertyp{\langover} of $\query$ as it rejects the positive example $\poscolor{(1, -4, 5)}$.
\end{example}

%\loris{rewrite next sentence. When checking if ... is weakest, we can../}
When checking if an \lunderproperty $\phil$ is a weakest one, 
we can perform a dual check and ask if there does not exist an \lproperty that can accept one more example than the current formula. That is,
$\cpunder(\phil, \phior, \query, \eplus, \eminus)$ checks whether 
there do not exist an \lproperty $\phil'$ and
a positive example $\ex \in \interp{\query}$ satisfying $\phior$ such that:
\rone $\phil'$ accepts all the positive examples in $\eplus$;
\rtwo $\phil'$ accepts $\ex$ and all the positive examples in $\eplus$, whereas $\phil$ rejects $\ex$.
This check can be logically stated as
\begin{equation}
    \label{eqn:cpunder}
    \cpunder(\phil,\hspace{-1pt} \phior,\hspace{-2pt}  \query\hspace{-1pt}, \eplus\hspace{-2pt}, \eminus)
    \hspace{-1pt}=\hspace{-1pt}
    \neg \exists \phil',\hspace{-2pt} \ex.
    \hln{\query(\ex) \hspace{-1pt} \land \hspace{-1pt}
        \neg \phior(\ex) \hspace{-1pt} \land \hspace{-1pt}
        \neg\phil(\ex) \hspace{-1pt} \land \hspace{-1pt}
        \phil'(\ex)} \hspace{-1pt} \land \hspace{-1pt}
    \phil'(\eplus) \hspace{-1pt} \land \hspace{-1pt}
    \neg\phil'(\eminus)
\end{equation}

\begin{example}[\cpunder]
    \label{ex:check-weakest}
    Consider again the query $\query := \exists \ecolor{\x}.\, \ucolor{\y} = \modhasharg{\ucolor{\vara}}{\ucolor{\varmodulus}}{\ecolor{\x}}$, and a \lunderpropertyp{\langunder} $\phil(\ucolor{\y}, \ucolor{\vara}, \ucolor{\varmodulus}) := \ucolor{\y} = 0 \wedge \ucolor{\vara} = 0$, which is \emph{not} a weakest \lunderpropertyp{\langunder}.
    \cpunder$(\phil, \bot, \query, \{\poscolor{(0, 0, 5)}\}, \{\negcolor{(3, 2, 6)}\})$ can return a strictly weaker \lunderpropertyp{\langunder}
    $\phil'_1(\ucolor{\y}, \ucolor{\vara}, \ucolor{\varmodulus}) := \ucolor{\y} = 0$ with a positive example $\posex_1 = \poscolor{(0, 1, 5)}$.
    However, for the same reasons outlined in \Cref{ex:check-strongest}, the returned property may not be a \lunderpropertyp{\langunder}.
    % because we are approximating the check using examples, $\cpunder(\phil, \query, \{\poscolor{(0, 0, 5)}\}, \{\negcolor{(3, 2, 6)}\})$ can alternatively return $\phil'_2(\ucolor{\y}, \ucolor{\vara}, \ucolor{\varmodulus}) := \ucolor{\y} = \ucolor{\vara}$ with a positive example $\posex_2 = \negcolor{(1, 1, 5)}$. 
    % Although $\phil'_2$ satisfies all the constraints of \cpunder, the formula $\phil'_2$ is not an implicant of $\query$ as it incorrectly accepts the negative example $\negcolor{(6, 6, 5)}$.
\end{example}

The \cpover and \cpunder procedures are effectively solving a synthesis problem---i.e., they are looking for a formula---and implementing them requires a form of example-based synthesis.
Again, the presence of quantifiers in the negation of the query $\neg \query$ for \cpover can result in constraint \eqref{eqn:cpover} containing alternating quantifiers, thus bringing us outside of the capability of many program synthesizers.
We discuss a practical procedure for sidestepping the quantifier-alternation problem and performing this check in \Cref{se:cegqi}.

\subsection{Synthesizing One Strongest \loverproperty and One Weakest \lunderproperty}
\label{se:oneproperty}

We are now ready to describe our main procedures: \synthoverproperty (Algorithm~\ref{alg:SynthesizeProperty}) and \synthunderproperty (Algorithm~\ref{alg:SynthesizeUnderProperty}).
We first recall the description of \synthoverproperty by \citet{park2023specification}, the algorithm that synthesizes a strongest \loverproperty that is incomparable with the \loverproperties we already synthesized (the algorithm will be used in \Cref{se:allproperties} to synthesize one \loverproperty at a time).
We then describe one of the contributions of this paper, i.e., how the algorithm changes for its under-approximated dual \synthunderproperty.

\subsubsection{Synthesizing One Strongest \loverproperty}
\label{se:synth-over-prop}

Given a query formula $\query$ and a conjunction of \loverproperties we have already synthesized $\phiand$, the procedure $\synthoverproperty$ synthesizes a strongest \loverproperty $\phil$ for the query $\query$ that is incomparable to the already synthesized formulas in $\phiand$.
We say an \loverproperty $\phil$ for the query $\query$ is strongest \emph{with respect to} $\phiand$ if there does not exist an \loverproperty $\phil'$ for $\query$ such that $\phil' \land \phiand$ is strictly stronger than $\phil \land \phiand$---i.e., the \loverproperty $\phil$ is incomparable to all the \loverproperties in $\phiand$.
\begin{figure}[!t]
    \begin{minipage}{0.50\linewidth}
        \begin{algorithm}[H]
            \footnotesize
            {\it
            \caption{\fontsize{7.8}{10}\selectfont\newline$\synthoverproperty(\query, \phiand, \phiinit, \eplus, \eminus)$}
            \label{alg:SynthesizeProperty}
            \DontPrintSemicolon
            $\phil, \philast \gets \phiinit\text{; } \eminusmay \gets \emptyset$
            %\tcp*{initial sound \lproperty that rejects $\eminus$}
            % \setlength{\baselineskip}{1em} % reduce spacing between rows

            \While{true}{
            \label{Li:LoopStart}

            $\posex \gets \checkimpl(\query, \phil)$
            % \tcp*{check soundness first}  
            \label{Li:CallCheckSoundness}

            \eIf {$\posex \neq \top$} {
            $\eplus \gets \eplus \cup \{\posex\}$
            % \tcp*{unsound, update positive examples} 
            \label{Li:UpdatePlus}

            $\phil' \gets \synth(\eplus, \eminus \cup \eminusmay)$
            %\tcp*{learn a new \lproperty}    
            \label{Li:CallSynthesize}

            \eIf {$\phil' \neq \bot$} {
                $\phil \gets \phil'$
                % \tcp*{new candidate {\lproperty}}
            }{
                $\phil \gets \philast\text{; } \quad\eminusmay \gets \emptyset$
                % \tcp*{no sound \lproperty rejects $\eminusmay$, revert to $\philast$} 
                \label{Li:RevertLastSound}
            }
            }{
            $\eminus \gets \eminus \cup \eminusmay\text{; } \quad\eminusmay \gets \emptyset$
            \label{Li:MergeMayIntoMust}

            $\philast \gets \phil$
            \label{Li:StorePhiLast}
            % \tcp*{sound, so $\eminusmay$ example is added to $\eminus$} 

            % \tcp*{remember sound \lproperty}
            % \label{Li:RememberSoundProperty}

            $\negex\hspace{-3pt}, \phil' {\gets} \cpover(\phil,\phiand, \query, \hspace{-1pt}\eplus, \hspace{-1pt}\eminus)\hspace{-8pt}$
            % \tcp*{sound, check precision}    
            \label{Li:CallCheckPrecision}

            \eIf {$\negex \neq \top$} {
                $\eminusmay \gets \{\negex\}$
                % \tcp*{update negative examples}  
                \label{Li:ReinitializeMinusMay}

                $\phil \gets \phil'$
                % \tcp*{new candidate formula}             
                \label{Li:ReinitializePhi}
            }{
                \Return $\phil, \eplus, \eminus$
                % \tcp*{sound and precise}    
                \label{Li:SynthPropReturn}
            }
            }
            }
            }
        \end{algorithm}
    \end{minipage}
    \begin{minipage}{0.49\linewidth}
        \begin{algorithm}[H]
            \footnotesize
            {\it
            \caption{\newline\fontsize{7.8}{10}\synthunderproperty($\query, \diffcolor{\phior}, \phiinit, \eplus, \eminus$)}
            \label{alg:SynthesizeUnderProperty}
            \DontPrintSemicolon
            $\phil, \philast \gets \phiinit\text{; } \diffcolor{\eplusmay} \gets \emptyset$
            %

            % \tcp*{initial sound \lproperty that rejects $\eminus$}
            % \setlength{\baselineskip}{1em} % reduce spacing between rows

            \While{true}{
            {$\diffcolor{\negex} \gets \checkimpl(\diffcolor{\phil}, \diffcolor{\query})$}
            % \tcp*{check soundness first}  
            % \label{Li:CallCheckSoundness}

            \eIf {$\diffcolor{\negex} \neq \top$} {
            {$\diffcolor{\eminus} \gets \diffcolor{\eminus \cup \{\negex\}}$}
            % \tcp*{unsound, update negative examples}  
            % \label{Li:UpdatePlus}

            $\phil' \gets \synth(\diffcolor{\eplus \cup \eplusmay}, \diffcolor{\eminus})$
            % \tcp*{learn a new \lproperty}    
            % \label{Li:CallSynthesize}

            \eIf {$\phil' \neq \bot$} {
                $\phil \gets \phil'$
                % \tcp*{new candidate {\lproperty}}
            }{
                $\phil \gets \philast\text{; } \quad\diffcolor{\eplusmay} \gets \emptyset$
                % \tcp*{no sound \lunderproperty accepts $\eplusmay$, revert to $\philast$} 
                % \label{Li:RevertLastSound}
            }
            }{
            $\diffcolor{\eplus} \gets \diffcolor{\eplus \cup \eplusmay}\text{; } \quad \diffcolor{\eplusmay} \gets \emptyset$

            $\philast \gets \phil$
            % \tcp*{sound, so $\eminusmay$ example is added to $\eplus$} 
            % \label{Li:MergeMayIntoMust}

            % $\philast \gets \phil$ 
            % \tcp*{remember sound \lproperty}
            % \label{Li:RememberSoundProperty}

            $\diffcolor{\posex}, \phil' {\gets} \diffcolor{\cpunder}(\phil, \diffcolor{\phior}, \query, \eplus, \eminus)$
            % \tcp*{sound, check precision}    
            % \label{Li:CallCheckPrecision}

            \eIf {$\diffcolor{\posex} \neq \top$} {
                $\diffcolor{\eplusmay} \gets \diffcolor{\{\posex\}}$
                % \tcp*{update negative examples}  
                % \label{Li:ReinitializeMinusMay}

                $\phil \gets \phil'$
                % \tcp*{new candidate formula}          
                % \label{Li:ReinitializePhi}
            }{
                \Return $\phil, \eplus, \eminus$ % \tcp*{sound and precise}    \label{Li:SynthPropReturn}
            }
            }
            }
            }
        \end{algorithm}
    \end{minipage}
    \vspace{-4mm}
\end{figure}

In each iteration, \synthoverproperty performs two steps.
First, it uses \checkimpl to check whether the current candidate $\phil$ is a consequence of $\query$ (line \ref{Li:CallCheckSoundness}).
Second, if the candidate $\phil$ is a consequence of $\query$, it uses \cpover to check whether $\phil$ is strongest with respect to $\phiand$ (line \ref{Li:CallCheckPrecision}).
The algorithm terminates once a formula passes both checks (line \ref{Li:SynthPropReturn}).

If the current candidate $\phil$ is not a consequence of $\query$, \checkimpl returns a positive example $\posex$ (line \ref{Li:CallCheckSoundness}). The algorithm then adds $\posex$ to the set of positive examples $\eplus$ and uses it to \synth a new candidate \lproperty (lines \ref{Li:UpdatePlus} and \ref{Li:CallSynthesize}).

% \loris{following sentence seems misleading, it doesn't bring in the examples, it makes it sound like check strongest returns whether it's strongest, but not with repsect to x}
If the current candidate $\phil$ is a consequence of $\query$ but there is an \lproperty $\phil'$ that aligns with the current set of positive and negative examples, $\eplus$ and $\eminus$, and can reject one more negative example $\negex$, \cpover returns this property $\phil'$ along with the negative example $\negex$.
The example $\negex$ is then temporarily stored in $\eminusmay$ without immediately updating $\eminus$ (line \ref{Li:ReinitializeMinusMay}).
Updating $\eminus$ is delayed because $\phil'$ may not be a consequence of the query $\query$ (\Cref{ex:check-strongest}), and in the worst case, there might not exist an \loverproperty that rejects $\negex$ (\Cref{ex:pos-neg-over}).
The example stored in $\eminusmay$ can be safely added to $\eminus$ when \checkimpl verifies that \lproperty $\phil$ returned by \cpover is indeed a consequence of the query $\query$ (line \ref{Li:MergeMayIntoMust}); at this point we are certain that the example in $\eminusmay$ can be rejected by at least one \loverproperty, as witnessed by $\phil$.\footnote{
    Delaying the update of negative examples enables treating all negative examples in $\eminus$ as hard constraints.
    This is a key innovation by \citet{park2023specification} over prior work \cite{DBLP:journals/pacmpl/KalitaMDRR22}.}
% \loris{maybe footnote that this was a key innovation of spyro vs amurth}

The candidate \lproperty returned by \cpover in line \ref{Li:CallCheckPrecision} is only guaranteed to be consistent with the examples, and therefore must be checked again by \checkimpl in line \ref{Li:CallCheckSoundness}.
If the candidate fails to pass \checkimpl,
the algorithm keeps adding more positive examples until either \rone it finds a \loverproperty that rejects all the negative examples in $\eminus \cup \eminusmay$, or \rtwo \synth in line \ref{Li:CallSynthesize} fails to find an \lproperty.
In the latter case, \synthoverproperty concludes that the example in $\eminusmay$ cannot be rejected by any \loverproperty and thus restarts after discarding the example in $\eminusmay$ (line \ref{Li:RevertLastSound}).
For efficiency, whenever $\eminus$ is updated in line \ref{Li:MergeMayIntoMust}, the algorithm stores the current \loverproperty $\phil$ that rejects all the negative examples in $\eminus$ in a variable $\philast$ (line \ref{Li:StorePhiLast}).
In this way, the algorithm can revert to $\philast$ when $\eminusmay$ is discarded (line \ref{Li:RevertLastSound}).

\begin{example}[\Cref{alg:SynthesizeProperty} Run]
    \label{ex:synth-strongest-consequence}
        Consider the query $\query := \exists \ecolor{\x}.\, \ucolor{\y} = \modhasharg{\ucolor{\vara}}{\ucolor{\varmodulus}}{\ecolor{\x}}$.
        %
        % and a call to $\synthoverproperty(\query, \top, \top, \emptyset, \emptyset)$.
        %
        % The table below shows a possible execution of $\synthoverproperty(\query, \top, \top, \emptyset, \emptyset)$. 
        %
        The table below shows the last 4 iterations in a possible execution of \synthoverproperty$(\query, \top, \top, \emptyset, \emptyset)$. 
        Specifically, it shows the value of $\phil$, $\eplus$, and $\eminus$ at the start of each iteration (\Cref{Li:LoopStart}). 
        In the table, $\posex_1 = (\ucolor{y}, \ucolor{a}, \ucolor{M}) = \poscolor{(1,6,5)}, \posex_2 = \poscolor{(1,1,5)}, \posex_3 = \poscolor{(1,-4,5)}, \posex_4 = \poscolor{(6,2,8)}$, $\negex_1 = \negcolor{(8,1,8)}$, $\negex_2 = \negcolor{(3,2,6)}$, $\negex_3 = \negcolor{(6,2,5)}$.
        % For simplicity, only the last 4 iterations are shown.

\noindent
        \noindent\textul{Iteration $n-3$.} So far the algorithm has computed an \loverpropertyp{\langover} $\ucolor{y} \neq \ucolor{M}$ that is not a strongest one.
        Therefore, \checkimpl passes but \cpover fails, and the execution reaches \Cref{Li:ReinitializeMinusMay}.
        Then $\phil$ is set to a new candidiate $\isprime(\ucolor{M})$, and a negative example $\negex_2$ is added to $\eminusmay$.
        
\noindent
        \noindent\textul{Iteration $n-2$.} The property $\isprime(\ucolor{M})$ is not an \loverpropertyp{\langover}, so \checkimpl fails. The execution reaches \Cref{Li:UpdatePlus}, and a positive example $\posex_4$ is added to $\eplus$.
        As discussed in \Cref{ex:synth}, $\negex_2$ cannot be rejected by any \loverpropertyp{\langover}, so \synth fails, $\phil$ is reverted to $\ucolor{y} \neq \ucolor{M}$, and $\eminusmay$ is cleared.

\noindent
\begin{minipage}{.50\textwidth}
\vspace{6pt}
        \noindent\textul{Iteration $n-1$.} Similar to iteration $n-3$ but with a new positive example $\posex_4$ added, \cpover returns a new candidiate $\ucolor{y}<\ucolor{M}$ and a new negative example $\negex_3$, which is added to $\eminusmay$.

        \noindent\textul{Iteration $n$.} The property $\ucolor{y}<\ucolor{M}$ is indeed a strongest \loverpropertyp{\langover}, so it passes both the \checkimpl and \cpover and is finally returned.
\end{minipage}
% \hspace{3pt}
\begin{minipage}{.49\textwidth}
\centering
\small
\begin{tabular}{c|c@{\hskip 4pt}c@{\hskip 3pt}c@{\hskip 3pt}c}  
        \hline
        \hline
        \textbf{Iter.} & $\phil$                      & $\eplus$                                     & $\eminus$      & $\eminusmay$   \\ \hline
        $n-3$              & $\ucolor{y} \neq \ucolor{M}$ & $\{\posex_1, \posex_2, \posex_3\}$           & $\{\negex_1\}$ & $\emptyset$    \\ %\hline
        $n-2$              & $\isprime(\ucolor{M})$       & $\{\posex_1, \posex_2, \posex_3\}$           & $\{\negex_1\}$ & $\{\negex_2\}$ \\ %\hline
        $n-1$              & $\ucolor{y} \neq \ucolor{M}$ & $\{\posex_1, \posex_2, \posex_3, \posex_4\}$ & $\{\negex_1\}$ & $\emptyset$    \\ %\hline
        $n$                & $\ucolor{y} < \ucolor{M}$    & $\{\posex_1, \posex_2, \posex_3, \posex_4\}$ & $\{\negex_1\}$ & $\{\negex_3\}$ \\ %\hline
        \hline
        \hline
    \end{tabular}
\end{minipage}

\end{example}

% As proved by \citet{park2023specification}, \synthoverproperty, once it terminates, returns a strongest \loverproperty for $\query$ (with respect to $\phiand$) that accepts all the examples in $\eplus$ and rejects all the examples in $\eminus$.

Once \synthoverproperty terminates, it returns a strongest \loverproperty for $\query$ (with respect to $\phiand$) that accepts all the examples in $\eplus$ and rejects all the examples in $\eminus$.
Informally, if \checkimpl returns $\top$, then the property is an \loverproperty, and if \cpover returns $\top$, then the property is a strongest property.

\synthoverproperties is also guaranteed to terminate for a finite DSL, when every call to primitives \synth, \checkimpl and \cpover terminates.
Informally, \checkimpl at \Cref{Li:CallCheckSoundness} can only return a counterexample (i.e., $\phil$ is not a consequence) finitely many times, and \cpover at \Cref{Li:CallCheckPrecision} can only return a counterexample (i.e., $\phil$ is a consequence but not a strongest one) finitely many times between successive instances where \checkimpl returns $\top$.
The full proof is in \Cref{app:proof}.
% The full proof is in Appendix C.

%\loris{maybe say it always terminates if finite lang and primitive terminate?}

\subsubsection{Synthesizing One Weakest \lunderproperty}

\label{se:synth-under-prop}
Given a query formula $\query$, a disjunction of \lunderproperties we already synthesized $\phior$, the goal of the procedure $\synthunderproperty$ is to synthesize a strongest \lunderproperty $\phil$ for the query $\query$ that is incomparable to the already synthesized formulas in $\phior$.

We say an \lunderproperty $\phil$ for the query $\query$ is weakest \emph{with respect to} $\phior$ if there does not exist an \lunderproperty $\phil'$ for $\query$ such that $\phil' \lor \phior$ is strictly weaker than $\phil \lor \phior$---i.e., the \lunderproperty $\phil$ is incomparable to the already synthesized \lunderproperties.

\synthunderproperty solves the dual of the problem solved by \synthoverproperty, and the two algorithms share the same structure.
Due to the duality
\rone the roles of positive and negative examples are inverted;
\rtwo \checkimpl in line \ref{Li:CallCheckSoundness} checks whether $\phil$ is an \emph{implicant} of $\query$,
instead of checking that $\phil$ is a \emph{consequence} of $\query$; and
\rthree precision is checked by \cpunder instead of \cpover.
These changes are highlighted in \diffcolor{violet} in Algorithm~\ref{alg:SynthesizeUnderProperty}.

% \loris{if time small example}

% In the proof of this lemma in Appendix~\ref{App:Proofs}, we show that the lemma holds even if we remove line~\ref{Li:MergeMayIntoMust} from \synthoverproperty---i.e., line~\ref{Li:MergeMayIntoMust} is an optimization.

\subsection{Synthesizing a Best \lconjunction and \ldisjunction}
\label{se:allproperties}

We conclude by briefly recalling how \synthoverproperties works (as described by \citet{park2023specification}) and present the dual algorithm \synthunderproperties.
These two algorithms use \synthoverproperty and \synthunderproperty to synthesize a best \lconjunction and \ldisjunction, respectively.
The detailed algorithms are illustrated in \Cref{app:alg-allproperties}.
% The detailed algorithms are illustrated in Appendix B.

% \subsubsection{Synthesizing a best \lconjunction}
% \label{se:synth-conjunction}

The algorithm \synthoverproperties iteratively synthesizes incomparable strongest \loverproperties.
At each iteration, \synthoverproperties keeps track of the conjunction of synthesized strongest \loverproperties $\phiand$,
and calls \synthoverproperty to synthesize a strongest \loverproperty for $\query$ with respect to $\phiand$.

If \synthoverproperty returns an \loverproperty $\phil$ that does not reject any example that was not already rejected by $\phiand$, the formula $\phiand$ is a best \lconjunction, and thus \synthoverproperties returns the set of synthesized \loverproperties.

If \synthoverproperty returns an \loverproperty $\phil$ that rejects some example that was not rejected by $\phiand$,
\synthoverproperties needs to further strengthen $\phil$ to a strongest \loverproperty for $\query$ with respect to examples that might already be rejected by $\phiand$.
Without this step the returned \loverproperty may be imprecise for examples that were not considered by \synthoverproperty because they were outside of $\phiand$.
To achieve this further strengthening, \synthoverproperties makes another call to \synthoverproperty with the example sets $\eplus$ and $\eminus$ returned by the previous call to \synthoverproperty together with $\phil$ and $\phiinit := \phil$, but with $\phiand := \tru$.

Again, because \synthunderproperties solves the dual problem of the one solved by \synthoverproperties, the two algorithms share the same structure.
\synthunderproperties uses \synthunderproperty in a similar manner, but it maintains the disjunction of synthesized weakest \lunderproperties instead of conjunction. For weakening, it also makes another call to \synthunderproperty, but $\phiand := \tru$ is replaced by $\phior := \bot$.

Using the argument of \citet{park2023specification} for \synthoverproperties, assuming all the primitives always terminate, our algorithms satisfy the following soundness and completeness theorems,
Specifically, the number of iterations in \Cref{alg:SynthesizeProperty} and \ref{alg:SynthesizeUnderProperty} is bounded by the number of properties in the DSL $\lang$ and the number of examples in the domain, whichever is smaller.
%In particular, \Cref{thm:finite-completeness} states that the algorithm is complete if the DSL $\lang$ only contains finitely many properties, which is the practical case we are interested in when performing our evaluation.
%
\begin{restatable}[Soundness]{theorem}{soundnessconj}
    \label{thm:soundness}
    If \synthoverproperties terminates, $\phiand$ is a best \lconjunction for $\query$.
    If \synthunderproperties terminates, $\phior$ is a best \ldisjunction for $\query$.
    % If \synthoverproperties terminates, it returns a best \lconjunction for $\query$.
    % If \synthunderproperties terminates, it returns a best \ldisjunction for $\query$.
\end{restatable}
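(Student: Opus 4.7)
My plan is to prove the two claims in parallel, exploiting a clean duality that swaps positive and negative examples, consequence and implicant, \cpover and \cpunder, and $\land$ with $\lor$. I will focus on the \lconjunction case (using \Cref{alg:SynthesizeProperty} and \synthoverproperties) and treat the \ldisjunction case as a mechanical mirror via \Cref{alg:SynthesizeUnderProperty}. The first step is to establish a correctness lemma for the inner procedure: if $\synthoverproperty(\query, \phiand, \phiinit, \eplus_0, \eminus_0)$ terminates and returns $(\phil, \eplus, \eminus)$, then (a) $\phil$ is a consequence of $\query$; (b) every $\ucolor{\ex} \in \eplus$ is in $\interp{\query}$ and every $\ucolor{\ex} \in \eminus$ is outside; (c) $\phil$ accepts $\eplus$ and rejects $\eminus$; and (d) no \lproperty consistent with $\eplus, \eminus$ can be strictly stronger than $\phil$ on the region $\interp{\phiand} \setminus \interp{\query}$.

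\textbf{Proof of the lemma.} Properties (a)--(c) follow from invariants maintained across \Cref{alg:SynthesizeProperty}: examples added to $\eplus$ at line~\ref{Li:UpdatePlus} are witnesses produced by \checkimpl (hence in $\interp{\query}$); examples promoted from $\eminusmay$ to $\eminus$ at line~\ref{Li:MergeMayIntoMust} have just been corroborated as rejectable by some \loverproperty (hence outside $\interp{\query}$); and \synth's contract keeps $\phil$ consistent with the current examples. The exit branch at line~\ref{Li:SynthPropReturn} is guarded by \checkimpl returning $\top$ (giving (a)) and \cpover returning $\top$ (giving (d) via equation~\eqref{eqn:cpover}).

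\textbf{From the lemma to the theorem.} I then verify the three clauses of Definition~\ref{dfn:best-conjunctions} for the returned $\phiand = \bigwedge_i \phil_i$. Clause (i) follows because every appended conjunct $\phil_i$ arises from the further-strengthening call to \synthoverproperty with its $\phiand$ parameter set to $\tru$, so clause (d) of the lemma specializes to: $\phil_i$ is a strongest \loverproperty of $\query$ globally. Clause (ii) is immediate since \synthoverproperty was asked to produce a $\phil_i$ incomparable with the previously accumulated conjunction $\phil_1 \land \cdots \land \phil_{i-1}$, and the paper's post-processing remark following Definition~\ref{dfn:best-conjunctions} dispatches any lingering corner case.

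\textbf{The main obstacle: semantic minimality.} Clause (iii) is the subtle step, and it is where the exit condition of \synthoverproperties earns its keep. The outer loop terminates only when the latest call returns a $\phil$ with $\interp{\phiand \land \phil} = \interp{\phiand}$, i.e.\ $\phiand \Rightarrow \phil$. I argue by contradiction: suppose some strongest \loverproperty $\phil^* \in \bestoverset(\query)$ had $\interp{\phiand} \not\subseteq \interp{\phil^*}$ and pick $\ucolor{\ex}^* \in \interp{\phiand} \setminus \interp{\phil^*}$. Then $\ucolor{\ex}^* \notin \interp{\query}$ (because $\phil^*$ is a consequence), so $\ucolor{\ex}^*$ is a genuine negative example satisfying $\phiand$ and also $\phil$ (since $\phiand \Rightarrow \phil$). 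The delicate step is to feed this witness back to \cpover's constraint~\eqref{eqn:cpover} to contradict termination. The main hazard is that $\phil^*$ itself may accept some elements of $\eminus$, so it cannot always be used verbatim as the witness $\phil'$; I plan to overcome this by showing that the monotone growth of $\eminus$ across calls, combined with the strongest-with-respect-to-$\phiand$ property of the returned $\phil$ from the lemma, forces the existence of \emph{some} \lproperty $\phil'$ consistent with $\eplus,\eminus$ and rejecting $\ucolor{\ex}^*$ (a candidate being, e.g., an element of $\bestoverset(\query)$ that refines $\phil^*$ relative to the finite set $\eminus$, if necessary obtained by iterating the lemma). Applying Theorem~\ref{thm:bestset} then converts clause (iii) into the semantic identity $\interp{\phiand} = \interp{\bigwedge_{\phil \in \bestoverset(\query)} \phil}$, completing the proof. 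The \ldisjunction half is obtained by dualizing every step.
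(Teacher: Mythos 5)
Your decomposition (inner-procedure lemma, then the three clauses of \Cref{dfn:best-conjunctions}) is sound, and it is in fact more detailed than what the paper itself writes down: the paper only gives the informal remark at the end of \S\ref{se:synth-over-prop} (``if \checkimpl returns $\top$ the property is an \loverproperty; if \cpover returns $\top$ it is strongest'') and defers the rest to \citet{park2023specification}; the appendix proof is only for relative completeness. Your lemma parts (a)--(d) and the derivation of clauses \rone and \rtwo are correct (clause \rone works because, with $\phiand=\tru$, any consequence $\phil'$ strictly stronger than $\phil_i$ automatically accepts $\eplus$, rejects $\eminus$, and rejects a genuine negative example accepted by $\phil_i$, contradicting $\cpover=\top$).

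The one place where your argument wobbles is exactly the step you flag in clause \rthree. Your proposed repair --- replacing $\phil^*$ by ``an element of $\bestoverset(\query)$ that refines $\phil^*$ relative to the finite set $\eminus$, if necessary obtained by iterating the lemma'' --- is not justified: there is no general reason a strongest \loverproperty rejecting both $\ex^*$ and all of $\eminus$ should exist, and ``iterating the lemma'' does not produce one. Fortunately the hazard you are guarding against cannot occur. Inspect the exit path of \synthoverproperties (\Cref{alg:SynthesizeAllOverProperties}): the procedure returns only from the branch guarded by $\eminus=\emptyset$, after \textsc{IsSat}$(\phiand\wedge\neg\phil)$ fails (lines \ref{Li:CheckForImprovement}--\ref{Li:SynthPropertiesReturn}). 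Moreover \synthoverproperty returns exactly the $\eplus,\eminus$ used in its final, successful \cpover call. So on the terminating iteration the check that passed was $\cpover(\phil,\phiand,\query,\eplus,\emptyset)=\top$, and the conjunct $\neg\phil'(\eminus)$ in \eqref{eqn:cpover} is vacuous. Your witness pair $(\phil^*,\ex^*)$ therefore satisfies every conjunct of \eqref{eqn:cpover} directly: $\phil^*$ accepts $\eplus$ because it is a consequence of $\query$, it rejects $\ex^*$ by choice of $\ex^*$, and $\neg\query(\ex^*)\wedge\phiand(\ex^*)\wedge\phil(\ex^*)$ holds as you argued. This contradicts $\cpover=\top$ with no detour. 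With that substitution your proof closes; the dualization to \synthunderproperties and best \ldisjunctions goes through verbatim.
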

\begin{restatable}[Relative Completeness]{theorem}{finitecompleteness}
    \label{thm:finite-completeness}
    Suppose that either $\lang$ contains finitely many formulas,
    or the example domain is finite.

    If \synth, \checkimpl and \cpover are decidable on $\lang$,
    then \synthoverproperty and \synthoverproperties always terminate.

    If \synth, \checkimpl and \cpunder are decidable on $\lang$,
    then \synthunderproperty and \synthunderproperties always terminate.
\end{restatable}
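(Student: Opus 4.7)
My plan is to prove termination of the inner procedure \synthoverproperty (Algorithm~\ref{alg:SynthesizeProperty}) under each finiteness hypothesis, then lift termination to the outer \synthoverproperties; the dual statements for \synthunderproperty and \synthunderproperties follow by an entirely symmetric argument since Algorithm~\ref{alg:SynthesizeUnderProperty} differs from Algorithm~\ref{alg:SynthesizeProperty} only by swapping the roles of $\eplus/\eminus$ and of \cpover/\cpunder. Since we have assumed decidability of the three primitives, termination of the procedure as a whole reduces to bounding the number of iterations of its main loop.

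The first step is to identify two quantities that grow monotonically across loop iterations: the positive example set $\eplus$ and the negative example set $\eminus$ (the scratch set $\eminusmay$ may shrink, so it is handled separately). When \checkimpl returns a counterexample at line~\ref{Li:CallCheckSoundness}, the returned positive example is fresh: every element of $\eplus$ is accepted by the current $\phil$ by construction of $\phil$ through \synth, while the counterexample is not. Analogously, when line~\ref{Li:MergeMayIntoMust} merges $\eminusmay$ into $\eminus$, each newly added example $\negex$ is fresh because the preceding \cpover call returned a $\phil'$ that rejects $\eminus$ together with $\negex$ (so $\negex \notin \eminus$), and because $\philast$ accepts $\negex$ (so $\philast$ is eliminated from consideration). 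This gives a natural potential $\mu(\eplus, \eminus) = |\{[\phil] \in \lang/{\equiv} : \phil(\eplus) \wedge \neg\phil(\eminus)\}|$, where $\equiv$ is semantic equivalence over the example domain. Under either hypothesis, $\mu$ is finite: if $\lang$ is finite then $|\lang/{\equiv}| \le |\lang|$, and if the example domain is finite then $|\lang/{\equiv}| \le 2^{|\mathrm{dom}|}$. The observations above show that $\mu$ strictly decreases every time $\eplus$ or $\eminus$ grows, since in each case we exhibit a formula (the current $\phil$ or $\philast$) that was present in the old set and is eliminated by the new example.

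The main obstacle, and the part requiring the most careful case analysis, is bounding the number of loop iterations between two consecutive updates to $\eplus \cup \eminus$. I would argue by tracing the control flow that any iteration which does not grow either set must fall into one of a few stereotyped patterns: either it reaches the return at line~\ref{Li:SynthPropReturn} and terminates, or it executes lines~\ref{Li:MergeMayIntoMust}--\ref{Li:CallCheckPrecision} with $\eminusmay = \emptyset$ (so the merge is a no-op) and populates $\eminusmay$ with a single fresh candidate negative example. In the following iteration, \checkimpl on the new $\phil$ either succeeds---triggering a growth of $\eminus$---or fails, triggering a growth of $\eplus$; in the \synth-fails branch (line~\ref{Li:RevertLastSound}) the clearing of $\eminusmay$ and revert to $\philast$ still leaves $\eplus$ strictly larger than before. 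Hence at most a constant number of iterations separate successive strict decreases of $\mu$, so the inner loop performs $O(\mu)$ iterations in total, and $\mu$ is finite by hypothesis.

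Lifting to the outer procedures, each successful call to \synthoverproperty inside \synthoverproperties returns a strongest \loverproperty $\phil$ and grows the conjunction $\phiand$ only when the post-call strengthening step produces a $\phil$ whose models are not already contained in $\interp{\phiand}$. Each such growth strictly decreases the number of equivalence classes of \loverproperties consistent with $\phiand$---equivalently, strictly shrinks $\interp{\phiand}$ by removing at least one example witness---and that quantity is again bounded by either $|\lang/{\equiv}|$ or $2^{|\mathrm{dom}|}$. Thus \synthoverproperties makes finitely many outer iterations, each performing a terminating call to \synthoverproperty, giving overall termination; the same reasoning applied to \synthunderproperties with $\phior$ in place of $\phiand$ completes the proof.
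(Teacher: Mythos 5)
Your proof is correct and follows essentially the same route as the paper's: both arguments rest on the freshness of the counterexamples returned by \checkimpl and \cpover (each returned positive example is rejected by the current candidate, and each merged negative example is accepted by the last sound candidate), so that every semantically distinct candidate is visited at most once, with the finite-example-domain case reduced to there being finitely many semantically inequivalent \lproperties. Your explicit potential $\mu$ and the constant bound on iterations between growths of $\eplus \cup \eminus$ are just more careful bookkeeping of the paper's terser claim that a candidate ``will not occur again as an argument'' of the respective primitive, and your outer-loop argument (each accepted property strictly shrinks $\interp{\phiand}$) makes explicit what the paper's one-line justification for \synthoverproperties leaves implicit.
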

Part of the proof is stated in the end of \Cref{se:synth-over-prop}. 
The full proof is in \Cref{app:proof}.
% The full proof is in Appendix C.

Also, our procedure for solving problems expressed in the \framework framework iteratively synthesizes strongest \loverproperties (resp. \lunderproperties) that can strengthen (resp. weaken) the current \lconjunction (resp. \ldisjunction) until no further strengthening (resp. weakening) is possible.
Thanks to this iterative approach, even when the procedure does not terminate, one can output intermediate results, which are properties that are guaranteed to be strongest (resp. weakest), though they might not form a best \lconjunction (resp. \ldisjunction).

\section{Counterexample-Guided Quantifier Instantiation}
\label{se:cegqi}

In \Cref{se:algorithm}, when we discussed the main specification-synthesis loop, we assumed we were given implementations of all the needed primitives.
In this section, we explain how each primitive can be implemented for existentially quantified queries.

% In this section, we present a procedure for handling \framework primitives that involve constraints with alternating quantifiers.
% 
\cpover (line \ref{Li:CallCheckPrecision}) in \synthoverproperty and  \checkimpl (line \ref{Li:CallCheckSoundness}) in \synthunderproperty check for the existence of a new negative example (along with additional constraints).
However, when dealing with an existentially quantified query $\query := \phiquery$,
a negative example $\ucolor{e}$ must be such that the formula $\neg\phiprog(\ucolor{e}, \ecolor{h})$ is valid for \textit{all} values of the existentially quantified variable $\ecolor{h}$.
Therefore, checking the existence of a negative example $\ucolor{e}$ requires solving a formula that has alternating quantifiers.
To handle these primitives involving quantifier alternation, we propose a CounterExample-Guided Quantifier Instantiation (CEGQI) algorithm similar to the one by \citet{Reynolds2015CEQI}, which can implement the primitives that require finding negative counterexamples using only existentially-quantified formulas.

\subsubsection*{Counterexample-Guided Quantifier Instantiation for Weakest \lunderproperty}
\label{se:cegqi-under}

We start with the simpler of the two queries, \checkimpl in \synthunderproperty (line \ref{Li:CallCheckSoundness}), which requires solving a formula with alternating quantifiers of the following form (by negating formulas in \S\ref{se:CheckingImpl}):
% \loris{point to original equation number so reader can find it as well}
% \kh{We inlined \checkimpl formula so didn't give equation number}\loris{sec number?}
\begin{equation}
    \label{eq:alternating-quant-neg}
    \exists \ucolor{\ex}.\, \forall \ecolor{h}.\, \neg \phiprog(\ucolor{\ex}, \ecolor{h}) \land \phil(\ucolor{\ex})
\end{equation}

The CEGQI algorithm for solving \Cref{eq:alternating-quant-neg} iteratively builds a set $H$ of possible values for $\ecolor{h}$ and finds a value of $\ucolor{\ex}$ that is consistent with the finite set of values $H$.
The set $H$ is updated by repeating the following two operations until a solution that holds for all values of $\ecolor{h}$ is found.
% \loris{weird ordering of arguments since phi and psi are swapped}
% \rone $\genex(\phil, \phiprog, H)$, given a finite set of examples $H$, generates an example $\ucolor{\ex}$ that satisfies formula $\neg \phiprog(\ucolor{\ex}, \ecolor{h})$ for all values of $\ecolor{h}$ in the set $H$ \loris{what about phi?}, and
% \rtwo \checkex, which checks whether an example $\ucolor{\ex}$ satisfies the constraint $\neg\phiprog(\ucolor{\ex}, \ecolor{h})$ for every possible value of $\ecolor{h}$--i.e., whether the example $\ucolor{\ex}$ is indeed a negative example.

\mypar{Generating Candidate Negative Example}
% \loris{bit weird that phi and psi are swapped compared to abve}
Given formulae $\phil$, $\phiprog$, and a finite set $H$ of values the existentially-quantified variable $\ecolor{h}$ can take, $\genex(\phil, \phiprog, H)$ generates an example $\ucolor{\ex} \in \interp{\phil}$ such that the formula $\phiprog(\ucolor{\ex}, \ecolor{h})$ does not hold for all the values $\ecolor{h}$ in the set $H$, if such an example $\ucolor{\ex}$ exists.
If no such example exists, $\genex(\phil, \phiprog, H)$ returns $\bot$.
% 
% \loris{maybe update this subsection with colors blue/red like in prev sections?}
Formally:
\begin{equation}
    \label{eq:genex}
    {\textstyle
        \genex(\phil, \phiprog, H)=\exists \ucolor{\ex}.\, \bigwedge_{\ecolor{h} \in H} \neg\phiprog(\ucolor{\ex}, \ecolor{h}) \land \phil(\ucolor{\ex})
    }
\end{equation}

\mypar{Checking Candidate Negative Example}
% \loris{if using forall text has to say forall}
Given a formula $\phiprog$, and a candidate negative example $\ucolor{\ex}$, the function $\checkex(\phiprog, \ucolor{\ex})$ checks if there does not exist a value for the existentially quantified variable $\ecolor{h}$ such that $\phiprog(\ucolor{\ex}, \ecolor{h})$ holds true (i.e., whether there exists a value of $\ecolor{h}$ that makes the example actually positive); it returns $\top$ or the value of $\ecolor{h}$ if the check fails. Formally:
\begin{equation}
    \label{eq:checkex}
    \checkex(\phiprog, \ucolor{\ex})= 
    \neg \exists \ecolor{h}.\, \phiprog(\ucolor{\ex}, \ecolor{h})
\end{equation}

\mypar{Counterexample-Guided Quantifier Instantiation}
The CEGQI algorithm for \checkimpl (\Cref{alg:CEGIS-Loop}) iteratively generates candidate negative examples using \genex and checks whether they are actually negative using \checkex.
Across iterations, it maintains the set of values of $\ecolor{h}$ returned by \checkex in $H$, and uses \genex to find an example $\ex$ that behaves well for all the values in $H$ discovered so far---i.e., $\ucolor{\ex}$ satisfies $\neg\phiprog(\ucolor{\ex}, H) \land \phil(\ucolor{\ex})$ (line \ref{Li:CallGenEx}).

If \genex fails to find an example, it means that there is no example $\ucolor{\ex}$ satisfying $\neg\phiprog(\ucolor{\ex}, H) \land \phil(\ucolor{\ex})$, thereby a stronger condition in \Cref{eq:alternating-quant-neg} also cannot be satisfied. Therefore, \cegqi returns $\top$ (line \ref{Li:ReturnBot})---i.e., there does not exist a valid negative example.

If \genex returns an example $\ucolor{\ex}$, the example is tested by $\checkex$ to check whether $\phiprog(\ucolor{\ex}, \ecolor{h})$ does not hold for every possible value of $\ecolor{h}$ and not only for values found so far (line \ref{Li:CallCheckEx}).
The algorithm returns the example $\ucolor{\ex}$ once it passes the check (line \ref{Li:ReturnEx}), but if it fails the check, a new counterexample $\ecolor{h}$ returned by \checkex is added to the set $H$, and the algorithm restarts at line \ref{Li:CallGenEx}.
Note that the set of instances $H$ can be cached and reused across different calls to \cegqi.

\noindent
\begin{minipage}{0.73\textwidth}
\vspace{6pt}
\begin{example}[\Cref{alg:CEGIS-Loop} Run]
    \label{ex:cegqi}
    Consider again the query $\query :=  \exists \ecolor{\x}.\, \ucolor{\y} = \modhasharg{\ucolor{\vara}}{\ucolor{\varmodulus}}{\ecolor{\x}}$ and the formula $\phil := 0 \le \ucolor{y} < \ucolor{M}$.
        The table on the right shows a possible execution of how $\checkimpl(\phil, \query) := \forall \ucolor{\ex}.\, \phil(\ucolor{\ex}) \Rightarrow \exists \ecolor{h}.\, \phiprog(\ucolor{\ex}, \ecolor{h})$ can find a negative example and prove that $\phil$ is not a \lunderpropertyp{\langunder}.
        Specifically, it shows the values of $\ucolor{e} = (\ucolor{y},\ucolor{a},\ucolor{M})$ and $\ecolor{h}= \ecolor{x}$ at the end of each iteration (\Cref{Li:UpdateH}) within \Cref{alg:CEGIS-Loop}.
        For each iteration, $\phil(\ucolor{e})$ is always true for $\ucolor{e} = (\ucolor{y}, \ucolor{a}, \ucolor{M})$ ,
        while $\phiprog(\ucolor{e}, \ecolor{h})$ is false for all previous $\ecolor{h} = \ecolor{x}$ but true for $\ecolor{h}$ in the current iteration.
\end{example}
\end{minipage}
% \hfill
\begin{minipage}{0.26\textwidth}
    \centering
{\footnotesize
    \begin{tabular}{c|ccc|c}
        \toprule
         & \multicolumn{3}{c|}{$\ucolor{e}$} & $\ecolor{h}$ \\
        \textbf{Iter}. & $\ucolor{y}$ & $\ucolor{a}$ & $\ucolor{M}$ & $\ecolor{x}$ \\
        \midrule
        \textbf{1} & 2 & 2 & 4 & 1 \\
        \textbf{2} & 3 & 1 & 4 & 3 \\
        \textbf{3} & 0 & 2 & 4 & 2 \\
        \textbf{4} & 0 & 3 & 4 & 0 \\
        \textbf{5} & 3 & 2 & 4 & $\top$ \\
        \bottomrule
    \end{tabular}
}
\end{minipage}

\begin{figure}[!t]
    \begin{minipage}{0.44\textwidth}
        \begin{algorithm}[H]
            \footnotesize{
                \it
                \caption{\fontsize{7.8}{10}$\checkimpl(\phil, \query)$}
                \label{alg:CEGIS-Loop}
                \DontPrintSemicolon
                % $H \gets \emptyset$ \\
                assume $\query = \exists h.\, \phiprog(v, h)$\\
                $H \gets \emptyset$ \label{Li:InitializeH}\\
                \While{$\tru$}{
                    $\ex \gets \genex (\phil, \phiprog, H)$
                    \label{Li:CallGenEx}
                    \\

                    \eIf{$\ex = \bot$}
                    {
                        \Return $\top$
                        \label{Li:ReturnBot}\\
                    }
                    {
                        $h \hspace{-1pt}\gets\hspace{-1pt} \checkex(\phiprog,\hspace{-1pt} \ex)$
                        \label{Li:CallCheckEx}\\
                        \eIf{$h = \top$}
                        {
                            \Return $\ex$
                            \label{Li:ReturnEx}\\
                        }
                        {
                            $H \gets H \cup \{h\}$
                            \label{Li:UpdateH}
                        }
                    }
                }
            }
        \end{algorithm}
    \end{minipage}
    \begin{minipage}{0.54\textwidth}
        \begin{algorithm}[H]
            \footnotesize{
                \it
                \caption{\fontsize{7.8}{10}$\cpover(\phil, \phiand, \query,  \eplus, \eminus)$}
                \label{alg:CEGQI-Loop-Over}
                \DontPrintSemicolon
                assume $\query = \exists h.\, \phiprog(v, h)$\\
                $H \gets \emptyset$\\
                \While{$\tru$}{
                    $\ex$,\hspace{-1pt} \diffcolor{$\phil'$ \hspace{-2pt}}$\gets$\hspace{-2pt}
                    \diffcolor{\genspec} ($\phil$,\hspace{-1pt} $\phiprog$,\hspace{-1pt}
                    \diffcolor{$\phiand$,\hspace{-1pt} $\eplus$\hspace{-1pt},\hspace{-1pt} $\eminus$\hspace{-2pt}},\hspace{-1pt} $H$)
                    % \label{Li:CallGenEx}
                    \\

                    \eIf{$\ex = \bot$}
                    {
                        \Return $\top$ \\
                        % \label{Li:ReturnBot}\\
                    }
                    {
                        $h \hspace{-1pt}\gets\hspace{-1pt} \checkex(\phiprog,\hspace{-1pt} \ex)$ \\
                        \eIf{$h = \top$}
                        {
                            \Return $\ex$\\
                            % \label{Li:ReturnEx}\\
                        }
                        {
                            $H \gets H \cup \{h\}$
                            % \label{Li:UpdateH}
                        }
                    }
                }
            }
        \end{algorithm}
    \end{minipage}
    % \vspace{-5mm}
\end{figure}

% \begin{table}[t!]
%     \centering
%     \begin{tabular}{|c|c|c|c|c|c|}
%         \hline
%         \textbf{Iteration}        & $1$         & $2$         & $3$         & $4$         & $5$         \\ \hline
%         $\ucolor{e} := (y, a, M)$ & $(2, 2, 4)$ & $(3, 1, 4)$ & $(0, 2, 4)$ & $(0, 3, 4)$ & $(3, 2, 4)$ \\ \hline
%         $\ecolor{h} := x$         & $1$         & $3$         & $2$         & $0$         & $\top$      \\ \hline
%     \end{tabular}
%     \caption{
%         The values of $\ucolor{e}$ and $\ecolor{h}$ at the end of each iteration (\Cref{Li:UpdateH}) of a possible execution of \Cref{alg:CEGIS-Loop}.
%         }
%     \label{tab:cegqi-example}
% \end{table}

\subsubsection*{Counterexample-Guided Quantifier Instantiation for Strongest \loverproperty}
\label{se:cegqi-over}

The call to \cpover in \synthoverproperty (line \ref{Li:CallCheckPrecision}) requires solving a formula that has alternating quantifiers
and the following form (by negating appropriate formulas in \S\ref{se:CheckingPrecision}):
\begin{equation}
    \label{eq:alternating-quant-neg-with-formula}
    \exists \ucolor{\ex}, \phil'.\, \forall \ecolor{h}.\,
    \neg\phiprog(\ucolor{\ex}, \ecolor{h}) \land
    \phiand(\ex) \land
    \phil(\ex) \land
    \neg\phil'(\ex) \land
    \phil'(\eplus) \land
    \neg\phil'(\eminus)
\end{equation}

This formula looks more complicated due to the presence of the existential variable $\phil'$.
However, a similar \cegqi approach to the one presented in \Cref{se:cegqi-under} can also be used to solve \Cref{eq:alternating-quant-neg-with-formula}, by finding a negative example and a formula in tandem.

The only change in the CEGQI algorithm for solving \cpover
(\Cref{alg:CEGQI-Loop-Over}) is that \genex is replaced by a new operation,
\genspec, defined as follows.
Given formulae $\phil$, $\phiprog$, $\phiand$, set of examples $\eplus$, $\eminus$, and a finite set $H$ of values the existentially-quantified variable $\ecolor{h}$ can take, $\genspec (\phil, \phiprog, \phiand, \eplus, \eminus, H)$ generates an \lproperty $\phil'$ and an example $\ucolor{\ex}$ satisfying $\phiand$ such that
\rone the formula $\phiprog(\ucolor{\ex}, \ecolor{h})$ does not hold for all the values $\ecolor{h}$ in the set $H$
\rtwo $\phil'$ accepts all the positive examples in $\eplus$;
\rthree $\phil'$ rejects $\ucolor{\ex}$ and all the negative examples in $\eminus$, whereas $\phil$ accepts $\ucolor{\ex}$,
if such an example $\ucolor{\ex}$ and formula $\phil'$ exist.
If no such example exists, then $\genspec (\phil, \phiprog, \phiand, \eplus, \eminus, H)$ returns $\bot$.
% 
% \loris{maybe update this subsection with colors blue/red like in prev sections?}
Stated formally:
% \loris{eqnumber?}
\begin{multline}
    \label{eq:genspec}
    \genspec (\phil, \phiprog, \phiand, \eplus, \eminus, H)= \\
    {\textstyle
    \exists \ucolor{\ex}, \phil'. \bigwedge_{\ecolor{h} \in H} \neg\phiprog(\ucolor{\ex}, \ecolor{h}) \land
    \phiand(\ex) \land
    \phil(\ex) \land
    \neg\phil'(\ex) \land
    \phil'(\eplus) \land
    \neg\phil'(\eminus)
    }
\end{multline}

Because the variable $\ecolor{h}$ only appears in the constraint $\neg\phiprog(\ucolor{e}, \ecolor{h})$, whether $\ucolor{e}$ is indeed a negative example can still be tested using \checkex~\eqref{eq:checkex}.

Similar to the \Cref{alg:CEGIS-Loop}, if \genex fails to find an example, it means that there is no example $\ucolor{\ex}$ satisfying  \Cref{eq:genspec}, thereby a stronger condition in \Cref{eq:alternating-quant-neg-with-formula} also cannot be satisfied.
The example is only returned after it has been tested by \checkex to ensure that $\phiprog(\ucolor{\ex}, \ecolor{h})$ does not hold for every possible value of $\ecolor{h}$.

\subsubsection*{Correctness}
The above observations are summarized as the following soundness theorem.
% \loris{just one thm is enough with at the end saying, if domain finite it terminates}

% \loris{should e be red in thm?}
\begin{restatable}[Soundness of CEGQI]{theorem}{soundnesscegqi}
    \label{thm:soundness-cegqi}
    \rone If \checkimpl terminates with an example $\ucolor{\ex}$, the example $\ucolor{\ex}$ is a valid solution to the existential quantifier in \Cref{eq:alternating-quant-neg}.
    If \checkimpl terminates with $\bot$, there is no example $\ucolor{\ex}$ that satisfies \Cref{eq:alternating-quant-neg}.
    \rtwo If \cpover terminates with an example $\ucolor{\ex}$ and $\phil'$, the example $\ucolor{\ex}$ and $\phil'$ are valid solution to the existential quantifier in \Cref{eq:alternating-quant-neg-with-formula}.
    If \cpover terminates with $\bot$, there is no example $\ucolor{\ex}$ and $\phil'$ that satisfy \Cref{eq:alternating-quant-neg-with-formula}.
\end{restatable}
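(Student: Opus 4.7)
The plan is to prove each of the four statements by a case analysis on how the corresponding CEGQI procedure terminates. Both \checkimpl (Algorithm~\ref{alg:CEGIS-Loop}) and \cpover (Algorithm~\ref{alg:CEGQI-Loop-Over}) have exactly two exit points: returning $\top$ when \genex (resp.\ \genspec) fails to find a candidate, and returning an example $\ucolor{\ex}$ when \checkex confirms it. So the proof reduces to verifying, in each case, that the semantic specification of the returning line is equivalent to the corresponding half of \Cref{eq:alternating-quant-neg} or \Cref{eq:alternating-quant-neg-with-formula}.

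For part \rone (the simpler case), when \checkimpl returns $\ucolor{\ex}$ at line~\ref{Li:ReturnEx}, I would combine the defining formula of \genex in \Cref{eq:genex}, which ensures $\phil(\ucolor{\ex})$ (and $\neg\phiprog(\ucolor{\ex},\ecolor{h})$ for $\ecolor{h}\in H$), with the defining formula of \checkex in \Cref{eq:checkex}, which at a successful return guarantees $\neg\exists\ecolor{h}.\,\phiprog(\ucolor{\ex},\ecolor{h})$, i.e.\ $\forall\ecolor{h}.\,\neg\phiprog(\ucolor{\ex},\ecolor{h})$. Their conjunction is exactly the body of the existential in \Cref{eq:alternating-quant-neg}, so $\ucolor{\ex}$ is a witness. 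For the $\bot$-case, I would observe that the universally-quantified condition $\forall\ecolor{h}.\,\neg\phiprog(\ucolor{\ex},\ecolor{h})$ logically implies the finite conjunction $\bigwedge_{\ecolor{h}\in H}\neg\phiprog(\ucolor{\ex},\ecolor{h})$ for every finite $H$ ever assembled by the loop. Hence, if \genex reports no $\ucolor{\ex}$ satisfies the (weaker) finite conjunction together with $\phil(\ucolor{\ex})$, then no $\ucolor{\ex}$ can satisfy the (stronger) universal version in \Cref{eq:alternating-quant-neg}. A small side-argument is that this reasoning is invariant under the loop: every $\ecolor{h}$ ever added to $H$ was genuinely obtained from \checkex as a witness to $\phiprog(\ucolor{\ex},\ecolor{h})$, so $H$ is always a sound under-approximation of ``all values of $\ecolor{h}$.''

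Part \rtwo is almost verbatim the same, with \genex replaced by \genspec and the body of \genspec given by \Cref{eq:genspec}. When \cpover returns the pair $(\ucolor{\ex},\phil')$ at the successful line, I would conjoin the specification of \genspec (which already supplies $\phiand(\ex)\land\phil(\ex)\land\neg\phil'(\ex)\land\phil'(\eplus)\land\neg\phil'(\eminus)$ and $\neg\phiprog(\ucolor{\ex},\ecolor{h})$ for $\ecolor{h}\in H$) with the \checkex guarantee that $\forall\ecolor{h}.\,\neg\phiprog(\ucolor{\ex},\ecolor{h})$; the conjunction matches the body of the existential in \Cref{eq:alternating-quant-neg-with-formula}. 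The key observation is that $\ecolor{h}$ only occurs inside $\phiprog$ in that formula, so strengthening the $H$-indexed conjunction to a true universal leaves every other conjunct unaffected, and the pair remains a witness. For the $\bot$-branch, the same monotonicity argument as in part \rone applies: infeasibility of the weaker, $H$-indexed formula \Cref{eq:genspec} entails infeasibility of its universal strengthening \Cref{eq:alternating-quant-neg-with-formula}.

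The main obstacle I anticipate is making the ``$H$ is a sound under-approximation'' lemma fully rigorous; concretely, I will need to show that each counterexample $\ecolor{h}$ added at line~\ref{Li:UpdateH} is genuinely a value of the quantified variable (not a new free variable introduced by the solver), and that nothing about its origin restricts the subsequent universal quantification in \Cref{eq:alternating-quant-neg} or \Cref{eq:alternating-quant-neg-with-formula}. Once that loop invariant is stated precisely, the rest is a mechanical conjunction of the specifications of \genex, \genspec, and \checkex with the semantics of the two target formulas.
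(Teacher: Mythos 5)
Your proposal is correct and follows essentially the same argument the paper gives (in the prose of \Cref{se:cegqi} preceding the theorem): on the successful exit, conjoin the \genex/\genspec guarantees with the universal guarantee from \checkex, noting that $\ecolor{h}$ occurs only in $\phiprog$; on the $\bot$ exit, use that the finite conjunction over $H$ is weaker than the universal quantification, so its unsatisfiability implies unsatisfiability of \Cref{eq:alternating-quant-neg} (resp.\ \Cref{eq:alternating-quant-neg-with-formula}). The only inessential extra is your ``$H$ is a sound under-approximation'' lemma---for soundness the provenance of the elements of $H$ is irrelevant (any finite set of values of $\ecolor{h}$ yields a weaker formula); it matters only for termination.
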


Because \Cref{alg:CEGIS-Loop} and \ref{alg:CEGQI-Loop-Over} monotonically increases the size of the set $H$, as long as the domain of one of the variables $\ucolor{\ex}$ and $\ecolor{h}$ is finite, both algorithms always terminate.
\begin{restatable}[Completeness of CEGQI]{theorem}{finite-completeness-cegqi}
    \label{thm:finite-completeness-cegqi}
    Suppose at least one of the domains of the variables $\ucolor{\ex}$ or $\ecolor{h}$ is finite.
    If \genex and \checkex are decidable for $\phiprog$ and $\phil$, then \checkimpl always terminates.
    If \genspec and \checkex are decidable for $\phiprog$ and $\phil$, then \cpover always terminates.
\end{restatable}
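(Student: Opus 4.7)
The plan is to prove termination via a monotonicity argument on the set $H$ maintained across iterations, applying the same reasoning uniformly to both \Cref{alg:CEGIS-Loop} and \Cref{alg:CEGQI-Loop-Over}. Since \genex, \genspec, and \checkex are decidable on $\phiprog$ and $\phil$, each individual iteration of the inner loop halts, so the only way either algorithm can fail to terminate is to loop indefinitely through the branch in which \genex (respectively \genspec) succeeds and \checkex returns some $\ecolor{h} \neq \top$; precisely that branch executes $H \gets H \cup \{\ecolor{h}\}$. It therefore suffices to bound the number of such iterations under each hypothesis.

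First, I would establish the invariant that the element $\ecolor{h}$ added to $H$ in any non-terminating iteration is strictly new. By the specification of \genex in \eqref{eq:genex} (and of \genspec in \eqref{eq:genspec}), the returned example $\ucolor{\ex}$ satisfies $\bigwedge_{\ecolor{h}' \in H} \neg\phiprog(\ucolor{\ex}, \ecolor{h}')$ with respect to the current $H$. The failure of \checkex witnesses some $\ecolor{h}$ with $\phiprog(\ucolor{\ex}, \ecolor{h})$, so $\ecolor{h} \notin H$ and $|H|$ strictly grows. Consequently, if the domain of $\ecolor{h}$ is finite, the non-terminating branch can fire at most $|\mathrm{dom}(\ecolor{h})|$ times, which yields termination.

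Second, I would argue that the sequence of $\ucolor{\ex}$ values returned across non-terminating iterations is also free of repetition, which handles the other hypothesis. If some earlier iteration produced $\ucolor{\ex}_i$ and added $\ecolor{h}_i$ to $H$ with $\phiprog(\ucolor{\ex}_i, \ecolor{h}_i)$, then any later call to \genex or \genspec must return some $\ucolor{\ex}_j$ satisfying $\neg\phiprog(\ucolor{\ex}_j, \ecolor{h}_i)$, hence $\ucolor{\ex}_j \neq \ucolor{\ex}_i$. Under a finite $\ucolor{\ex}$-domain, this bounds the number of non-terminating iterations by $|\mathrm{dom}(\ucolor{\ex})|$. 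Taken together, the two arguments show that at least one of the two hypotheses suffices for termination.

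The step I expect to require the most care is lifting the argument from \checkimpl to \cpover. The latter's generator \genspec additionally produces a witness formula $\phil'$ and is constrained by $\phiand$, $\eplus$, and $\eminus$. These extra constraints are fixed throughout a single invocation of \cpover and do not depend on $H$ or on the iteration history, so the monotonicity of $H$ and the non-repetition of $\ucolor{\ex}$ transfer unchanged; verifying this explicitly—so that the existentially bound $\phil'$ does not invalidate the strict-growth invariant on $H$—is the one place I would slow down. Combined with the per-iteration decidability of the primitives, this yields termination in both clauses of the theorem.
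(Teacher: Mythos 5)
Your proof is correct and follows essentially the same route as the paper, which justifies this theorem only by the brief observation that $H$ grows monotonically across iterations; your strict-growth invariant on $H$ is exactly that argument made precise. Your second lemma---that the candidate examples $\ucolor{\ex}_i$ never repeat, because each later candidate must falsify $\phiprog(\cdot,\ecolor{h}_i)$ while $\ucolor{\ex}_i$ satisfies it---is a detail the paper leaves implicit but that is genuinely needed for the finite-$\ucolor{\ex}$-domain half of the statement, so including it is a welcome tightening rather than a divergence.
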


Therefore, when the domain is finite, the specification synthesis for an existentially quantified query can be solved using only calls with the quantifier free part of the query.

Note that, in the worst case, \cegqi can enumerate the entire domain of $\ecolor{h}$. As we demonstrate in our evaluation, this exhaustive enumeration (which is common for CEG-style algorithms \cite{armandothesis}) is practically rare and a small number of examples are usually sufficient to solve the problem.
\section{Implementation}
\label{se:implementation}

We implemented our algorithms for solving synthesis problems in the \framework framework in a tool called \name. 
\name is implemented in Java, on top of the \sketch program synthesizer (v.1.7.6)~\cite{DBLP:journals/sttt/Solar-Lezama13}.

% \xuanyu{Describe 4 parts of our input (1) query with universal/existential quantified variables (2) DSL (3) semantics of function symbols (4) sampling space of variables}
Following \Cref{se:framework}, \name takes the following four inputs:
\rone A query $\query$ for which \name is to find \loverproperties or \lunderproperties where each variable in $\query$ should be labeled either as free or existentially quantified.
\rtwo The context-free grammar of the DSL $\lang$ in which properties are to be expressed.
\rthree A piece of code in the \sketch programming language that expresses the concrete semantics of the function symbols in $\query$ and $\lang$.
\rfour The bounded domain of each variable in the query $\query$---i.e., each variable is assigned a range of possible input values.

Take the motivating problem in \Cref{se:example-reachability-over} as an example: input \rone corresponds to the blocks \keyword{Variables} and \keyword{Query} in \Cref{fig:modhash-query}, input \rtwo corresponds to the block \keyword{Language} in \Cref{fig:hash-over-grammar}, and input \rthree are \sketch implementations of \modhash and \isprime---i.e., simple imperative functions. 
Input \rfour is provided via the \keyword{Examples} block in \Cref{fig:modhash-example}, which we will discuss next.

\mypar{From Grammars to \sketch Generators}
As synthesis needs to be performed over properties in the DSL $\lang$, the context-free grammar for $\lang$ is automatically translated to a \sketch generator. 
A \sketch generator is a construct that allows one to describe a recursively defined search space of programs. 
In a generator, one is allowed to use holes (denoted with \texttt{??}) to allow the synthesizer to make choices about what terms to output. 
In our setting, holes are used to select which production is expanded at each node in a recursively defined derivation tree. 
% 
% For (4), to specify the sampling space of primitive datatypes like integers, one can use a hole with range, i.e. "\texttt{??(N)}" which means an integer from $0$ to $2^N$. 
%
% \name also supports user-defined datatypes---e.g., \loris{waht}---for which an accompanied generator should be provided to \loris{tod what?}---e.g., \loris{example}. 

\name also uses grammars, which in turn are translated to generators, to specify the values each variable in the query $\query$ can assume. 
% 
% For example, the grammar \texttt{[ int IG -> ??(4) | -??(4) ]} is translated to a generator called \texttt{IG} that can produce an integer from $-15$ to $15$ (the notation \texttt{??(4)} denotes a 4-bit hole).
\Cref{fig:modhash-example} shows how the user specifies the bounded domain of each variable for the problem in \Cref{se:example-reachability-over}. 
The nonterminal \texttt{IG} is translated to a generator that can produce an integer from $[-15, 15]$ (the notation \texttt{??(4)} denotes a 4-bit hole), and is used to define the domain for variables \ucolor{\texttt{a}}, \ucolor{\texttt{y}}, and \ecolor{\texttt{x}}.
%
% \loris{rephrase next sentence in terms of domains}
Similarly, the nonterminal \texttt{PosIG} defines the domain of the positive integer variable \ucolor{\texttt{M}} to be the range $[1,16]$.
\name also supports inductive datatypes, e.g. a generator of lists of integers in the range $[-15, 15]$ can be defined as 
\texttt{[ list LG -> nil() | cons(IG, LG) ]}.
\begin{wrapfigure}{r}{0.35\textwidth}
\vspace{-1mm}
    \centering
    \begin{lstlisting}[ tabsize=3, 
    basicstyle= \tt \footnotesize, 
    keywordstyle=\color{black}\bfseries, 
    commentstyle=\color{gray}, 
    xleftmargin=-2.5em, 
    escapeinside=``,
    language = C,
    morekeywords = {Examples},
    emph={a, M, y}, emphstyle=\color{dred},  
    emph={[2]x}, emphstyle={[2]\color{dblue}}, 
    numbers = none
    ]
    Examples { 
        int IG -> ??(4) | -??(4)
            for a, y, x; 
        int PosIG -> ??(4) + 1;
            for M;
    }
\end{lstlisting}
    \vspace{-4mm}
    \caption{The bounded domains of variables in problem \Cref{se:example-reachability-over} }
\label{fig:modhash-example}
\vspace{-4mm}    
\end{wrapfigure}
% , where the struct of \texttt{list} and the implementation of the constructors \texttt{nil} and \texttt{cons} are functions provided by the user as \sketch code. 
%
% Moreover, the user can also manually implement a generator in \sketch for a nonterminal.
%
% Finally, each variable in the query $\query$ needs to be assigned a nonterminal, which specifies the \sketch generator for this variable.

% \begin{figure}[htbp]
% \begin{lstlisting}[language = C, tabsize=3, 
%     basicstyle= \tt \footnotesize, 
%     keywordstyle=\color{purple}\bfseries, 
%     commentstyle=\color{gray}, 
%     xleftmargin=0em, 
%     escapeinside=``,
%     numbers = none,
%     numbersep = 1pt,
%     morekeywords = {int, exist, boolean, variables, query, language, examples}
%     ]
%     variables {
%         int M <- PosIG;
%         int a <- IG;
%         int y <- IG;
%         exist int x <- IG; 
%     }
%     query {
%         y = modhash(a, x, M);
%     }
%     language {
%         boolean B -> false | AP | AP && AP | ... | AP && AP && AP && AP && AP;
%         boolean AP -> isPrime(M) | !isPrime(M) | N == N | N != N | N < N | N <= N ;
%         int N -> a | y | M | 0 | -M ;
%     }
%     examples {
%         int PosIG -> ??(4) + 1 ;
%         int IG -> ??(4) | -??(4) ;
%     }
%     // Sketch Implementations of modhash() and isPrime()
%     int modhash(int a, int x, int M) { ... }
%     boolean isPrime(int M) { ... }
     
% \end{lstlisting}
% \caption{\name input file for the motivated example in \Cref{se:example-reachability-under} }
% \label{fig:code-level}
% \end{figure}

\mypar{Synthesis Primitives in \sketch}
The primitives \synth, \checkimpl in \Cref{alg:SynthesizeProperty}, and \cpunder in \Cref{alg:SynthesizeUnderProperty} are implemented as calls to the \sketch synthesizer. 
Typically, a \sketch program contains 3 elements: \rone a harness procedure that defines what should be synthesized, \rtwo holes associated with a corresponding generator, and \rthree assertions. 
The harness procedure is the entry point of the \sketch program, and together with the assertion it serves as a specification for what values the holes can assume to form a valid solution to the synthesis problem. 
Multiple harnesses in one \sketch program are also allowed, where the goal of the \sketch synthesizer is to find the same assignment to shared holes that make all assertions pass.
For example, when encoding \synth, each example is implemented as a harness with assertions indicating that it should be positive or negative.
% 
% \loris{weird that 2 comes before 1. Also don't we have names for alg 3 and 4? also use Cref instead of these manual refs}
% \checkimpl in Alg. \ref{alg:SynthesizeUnderProperty} and \cpover in Alg. \ref{alg:SynthesizeProperty} are implemented using the Counterexample-Guided Quantifier Instantiation described in \Cref{se:cegqi}.
Both \cpover in \synthoverproperty (\Cref{alg:SynthesizeProperty}) and \checkimpl in \synthunderproperty (\Cref{alg:SynthesizeUnderProperty}) are implemented using the CEGQI approach described in \Cref{se:cegqi} (\Cref{alg:CEGQI-Loop-Over,alg:CEGIS-Loop}, respectively).
These algorithms are implemented as separate procedures where each call to  \genex and \checkex
only has an existential quantifier, and can thus be implemented as a single call to the \sketch synthesizer.
% \xuanyu{give names after finish sec 5} 
% 
% \loris{a word on how again. Anything intersting to say here?}

\mypar{Bounds} 
\sketch allows one to provide bounds for recursion and loops to make synthesis tractable. 
In \name, we need to consider two kinds of bounds. 

First, one has to bound the depth of each recursive generator. 
Concretely, this bound means that \name can only support DSLs where the derivation trees have bounded height (\name allows one to specify the bound for each DSL).
As recursive generators are used to produce examples for inductive datatypes---e.g. list---one also has to bound the height of such examples. 
Second, one has to bound how many times a loop can be unrolled/executed. 
We will discuss in \Cref{se:evaluation} what benchmarks are in theory affected by these bounds.
Additionally, these bounds limit DSLs and example domains to a finite size, ensuring that our algorithms in theory terminate (\Cref{thm:finite-completeness}).

% \loris{TODO: do we still plan to do this. I don't know how to use dafny for underapprox, only works for over?}
% In our evaluation, we used Dafny~\cite{LeinoW14} to verify that the properties obtained by \name were sound for inputs beyond the bounds considered by \sketch \xuanyu{todo}.

\mypar{Timeout} We use a timeout of 20 minutes, after which \name returns the current \loverproperties (or \lunderproperties). Although it might not be the strongest \loverproperties (or weakest \lunderproperties), each individual \loverproperty (or \lunderproperty) is a strongest (or a weakest) one.  
% \loris{we mention a diff timeout in sec 7.0}

% \loris{maybe remove following unless relevant?}
% \mypar{Term minimization} For the readability of properties, we use the minimization feature of \sketch to look for terms of the smallest size. Because we are only interested in the final formula, we only apply minimization in \xuanyu{ref to algorithm in sec4}.

\section{Evaluation}
\label{se:evaluation}

Our evaluation of \name consists of two parts.
The first part consists of simple deterministic and nondeterministic programs (from \cite{park2023specification, Chatterjee2020Polynomial, SVCOMP24}) for which we could manually inspect whether \name computed the correct properties (\S\ref{se:eval-testset}).
The second part showcases the capabilities of \name and the flexibility of the \framework framework through three case studies:  
forward/backward reasoning for incorrectness logic (\S\ref{se:eval-incorrectness}), synthesizing properties of concurrent programs (\S\ref{se:eval-concurrency}), and solving two-player games (\S\ref{se:eval-game}).
The evaluation highlights that the programmable queries and customized DSLs in the \framework framework enable it to express a wide range of problems.
All our case studies involve underapproximation and/or existential quantifiers.
% and are therefore not solvable using existing approaches, namely, \spyro~\cite{park2023specification}.
% 
For each case study, we describe how we model the problem in \framework, how we collected the benchmarks, how we designed the DSLs, and we present an analysis of the running time and effectiveness of \name and of the quality of synthesized \loverproperties and \lunderproperties. 

We ran all experiments on an Apple M1 8-core CPU with 8GB RAM. Each benchmark was run 3 times (timeout 20 minutes), with different random seeds for the \sketch solver.
\footnote{Our synthesis primitives are nondeterministic---i.e., \sketch can return \textit{any} possible valid counterexample or property. 
The different random seeds will result in the \sketch solver selecting different such examples/formulas. 
\name can therefore generate different sets of properties with different seeds, but as stated in \Cref{thm:bestset}, all best \lconjunctions (or \ldisjunctions) are semantically equivalent.} 
All results in this section are for the median of three runs (by synthesis time).

\subsection{Testing \name with Simple Programs}
\label{se:eval-testset}

We conducted two experiments on simple programs for which we could manually check whether the synthesized properties were sound and strongest/weakest. 

We provide a brief description of each experiment, but details about each problem, running times, concrete DSLs, 
and synthesized properties are discussed in \Cref{App:eval-specmining} (Test set I) and \Cref{App:eval-non-deter} (Test set II).
% and synthesized properties are discussed in Appendix D.1 (Test set I) and Appendix D.2 (Test set II).

\mypar{Test Set I: Mining Under-Approximation Specifications}
We used \name to compute \lunderpropertiesp{\langunder} for the 35 programs for which \citet{park2023specification} computed \loverpropertiesp{\langover} using \spyro.
These programs include integer functions, data structure manipulations, and small imperative programs.
To get the dual DSL $\langunder$ of each DSL $\langover$ used in the original benchmarks, we replaced every top-level production of the form $S \to \lor[AP, 0..n] $ with a production $S \to \land[AP, 0..n]$. Note that the queries for all the \spyro benchmarks \textit{do not} contain existential quantifiers.

\name could synthesize properties for 35/35 benchmarks, and guaranteed that all of them were best \lunderpropertiesp{\langunder} (i.e., \cpunder succeeded). 
Each benchmark takes at most 7 minutes (the largest grammar contained $1.48\cdot 10^{13}$ properties). 
Beause the DSL was originally designed for overapproximations, the synthesized underapproximation properties were often trivial. 
For example, for the list reverse function, \name only tells us any singleton list can be obtained as output when providing the same one as input, but provides no properties describing the behavior for lists of lengths greater than 1. 
% \xuanyu{maybe find another example}
% 
While the synthesized properties were not informative, this simple benchmark set allowed us to test \name's ability to synthesize \lunderpropertiesp{\langunder}.

\mypar{Test set II: Nondeterministic programs}
To test \name on problems involving existential quantifiers, we designed 15 simple imperative programs where nondeterministic values serve as operands or array indices---e.g., a nondeterministic sorting algorithm.
To model the sequence of nondeterministic choices taken by a program, we use an existentially quantified array of values $\ecolor{h}$ in the query. 
Whenever the program execution reaches a non-deterministic command, the command takes the next value of the array $\ecolor{h}$.

\name synthesizes both \loverpropertiesp{\langover} and \lunderpropertiesp{\langunder} for 14/15 benchmarks, taking less than 400 seconds for each benchmark and guaranteeing that the synthesized properties are best ones. 
All benchmarks that terminated for one run terminated for all 3 runs, with fastest and slowest runs differing by at most 2x.
The synthesized properties provide intuition for both the demonic and angelic behaviors of these programs. 
For example, consider a program that nondeterministically swaps elements of an array of length 4 to sort it: \name can tell us which initial arrays \textit{may} be sorted within $n$ swaps (for a given $n$), and also identifies what kind of arrays will never be sorted in less than $n$ swaps.
The one timeout benchmark models a merge sort that computes the number of inverse pairs in an array of nondeterministic values; \name fails due to the complexity of the program, which involves nested recursions and loops.

\subsection{Application 1: Incorrectness Reasoning}
\label{se:eval-incorrectness}

% In this subsection, we first discuss how forward/backward reasoning of Hoare logic~\cite{hoare69axiomatic} and incorrectness logic~\cite{Peter2019Incorrectness}
% can be formulated in \framework framework, and conclude with an evaluation focusing on incorrectness reasoning.
% \loris{following is a little too long for an eval, try to quickly get to the point. Maybe move formal def to appendix annd theorems and here give shorter version}
Thanks to the support for both over- and under-approximation, some forms of forward/backward reasoning for both Hoare logic~\cite{hoare69axiomatic} and incorrectness logic~\cite{Peter2019Incorrectness}
can be captured in the \framework framework. 
Because there has been a lot of research and there are many tools on precondition/postcondition inference of Hoare triples, we only discuss the relation between the \framework framework and incorrectness logic in this subsection, along with an evaluation. 
A complete formalization of the relation between the \framework framework and Hoare/incorrectness logic can be found in \Cref{app:relation-to-program-logics}.
% A complete formalization of the relation between the \framework framework and Hoare/incorrectness logic can be found in Appendix A.

\subsubsection{Relation to  Incorrectness Logic}
\label{se:rel-incorrectness}

An \textit{incorrectness triple} $\incortriple{P}{s}{Q}$ consists of a presumption $P$, a statement $s$, and a result $Q$, and it has the following meaning: every final state satisfying $Q$ is reachable by executing program $s$ starting from \textit{some} state that satisfies presumption $P$:
% In other words, the predicate $Q$  \textit{under-approximates} the set of possible behaviors the program $s$ can result in when executed on inputs satisfying $P$:
\begin{equation}
\label{eq:incorrectness-validity}
\forall \progstate'.\, Q(\progstate') \Rightarrow \exists \progstate.\, [P(\progstate) \wedge \interp{s}(\progstate, \progstate')]
\end{equation}

\mypar{Forward Reasoning: Weakest Under-approximate Postcondition}
Given a program $s$ and a presumption $P$, the \emph{weakest under-approximate postcondition} $\wpost(s, P)$ is the weakest predicate $Q$ such that the triple $\incortriple{P}{s}{Q}$ holds. 
We use $\wpostlang(s, P)$ to denote the weakest under-approximation postcondition expressible as a \textit{disjunction} of predicates in the DSL $\lang$.
From \Cref{eq:incorrectness-validity}, $\wpostlang(s, P)$ can be obtained by synthesizing weakest \lunderproperties for the query 
$\exists \ecolor{\progstate}.\, P(\ecolor{\progstate}) \wedge \interp{s}(\ecolor{\progstate}, \ucolor{\progstate'})$.

\begin{wrapfigure}{r}{0.33\textwidth}
\vspace{-3mm}
    \centering
    \begin{lstlisting}[ tabsize=3, 
    basicstyle= \tt \footnotesize, 
    keywordstyle=\color{black}\bfseries, 
    commentstyle=\color{gray}, 
    xleftmargin=-2.5em, 
    escapeinside=``,
    language = C,
    morekeywords = {mod},
    numbers = none
    ]
    int remhash (int a, M, x) {
        return a * x % M;
    }
\end{lstlisting}
\vspace{-6mm}
\caption{\texttt{remhash} function}
\label{fig:remhash}
\vspace{-4mm}    
\end{wrapfigure}
Let's say we are interested in reasoning about the possible behaviors of the (incorrect) implementation of a modular hash function \remhash shown in \Cref{fig:remhash}, 
where $\remainderop$ is the remainder operator (instead of the modulus).
The $\remainderop$ operator is often misused when implementing a modular operation, as $a \,\remainderop\, b$ may yield a negative output when either $a$ or $b$ is negative.

\begin{wrapfigure}{r}{0.37\textwidth}
\vspace{-8mm}
    \centering
    \begin{lstlisting}[ tabsize=3, 
    basicstyle= \tt \footnotesize, 
    keywordstyle=\color{black}\bfseries, 
    commentstyle=\color{gray}, 
    xleftmargin=-2.5em, 
    escapeinside=``,
    language = C,
    morekeywords = {Language, bool},
    emph={a, M, y}, emphstyle=\color{dred},  
    emph={[2]x}, emphstyle={[2]\color{dblue}}, 
    numbers = none
    ]
    I1: y == 0
    I2: -M <= a /\ a < M /\ a == y
    I3: -M <= y /\ y < M 
        /\ -M < a /\ a < M
        /\ a != 0 /\ isPrime(M)
    
\end{lstlisting}
\vspace{-4mm}
\caption{Synthesized \lunderpropertiesp{\langunder}}
\label{fig:remhash-under-prop}
\vspace{-2mm}    
\end{wrapfigure} 
A summary of the behaviors of \remhash can be identified by under-approximating the query $\queryrem:=(\existsm \ecolor{\x}.\, \ucolor{\y} = \ucolor{\vara}\ecolor{\x}\,\remainderop\,\ucolor{\varmodulus})$.
From the perspective of incorrectness logic, under-approximating $\queryrem$ corresponds to performing forward reasoning to find
results for $\ucolor{\y}$ when no presumption on $\ecolor{x}$ is given---i.e., the presumption $P(\ecolor{\x})$ is $\tru$.
For capturing under-approximations of the query $\queryrem$, we reuse the DSL $\langunder$ from the example in \Cref{se:example-reachability-under}.
A mutually incomparable set of weakest \lunderpropertiesp{\langunder} for query $\queryrem$ is shown in \Cref{fig:remhash-under-prop},
which shows that \remhash can indeed yield negative values for some choice of parameters $\ucolor{\vara}$ and $\ucolor{\varmodulus}$, as evidenced by the occurrence of a state in both the second and last formulae where $\ucolor{\y}$ is negative.
In other words, we recognize that some choices of input value can result in incorrect outcomes---i.e., negative numbers---but we do not know which ones. 
As we will show next, the inputs that lead to incorrect behaviors can be identified by backward reasoning.

\begin{figure}[t!]
    \centering
    
    \begin{subfigure}{0.28\textwidth}
        \centering
        \input{code-figure/query-wpp}
        \vspace{-10pt}
        \caption{$\queryremwpp$}
        \label{fig:remhash-wpp-query}
    \end{subfigure}
    \hspace{1em}\vrule\hspace{0.5em}
    \begin{subfigure}{0.31\textwidth}
        \centering
        \input{code-figure/dsl-wpp}
        \caption{$\langunder_{\wppre}$}
        \label{fig:remhash-wpp-grammar}
    \end{subfigure}
    \hspace{1em}\vrule\hspace{0.5em}
    \begin{subfigure}{0.28\textwidth}
        \centering
        \input{code-figure/incorrect-property-wpp}
        \caption{\lunderpropertiesp{\langunder_\wppre}}
        \label{fig:remhash-wpp-prop}
    \end{subfigure}
    \caption{(a) A query $\queryremwpp$ that allows identifying \textit{weakest possible preconditions} that cause a bug in the \remhash function (i.e., outputing a negative number).
    (b) A DSL $\langunder_{\wppre}$ for expressing weakest possible preconditions.
    (c) \lunderpropertiesp{\langunder_\wppre} synthesized by \name.}
\end{figure}

\mypar{Backward Reasoning: Weakest Possible Precondition} 
Surprisingly, backward predicate transformers for incorrectness logic do not always exist because valid presumptions may not exist.
For example, there is no predicate $P$ making the triple $\incortriple{P}{\y = \modhasharg{\vara}{\varmodulus}{\x}}{\y = -1}$ true because no values of $\vara$, $\varmodulus$ and $\x$ satisfy $\modhasharg{\vara}{\varmodulus}{\x} = -1$.
To address this shortcoming \citet{Peter2019Incorrectness} suggests using the weakest possible precondition $\wppre(s, Q)$, which is termed by \citet{hoare78properties} as ``possible correctness''.
Intuitively, $\wppre(s, Q)$ captures the set of initial states from which it is \textit{possible} to execute $s$ and terminate in a state that satisfies $Q$.
When considering the \remhash function from \Cref{fig:remhash}, if $Q$ encodes that the output is negative, $\wppre(s, Q)$ will show which input possibly leads to a negative output.
Formally, $\wppre(s, Q)$ is the weakest $P$ satisfying
\begin{equation}
\label{eq:hoare-wpp-cond}
\forall \ucolor{\progstate}.\, P(\ucolor{\progstate}) \Rightarrow [\exists \ecolor{\progstate'}.\, Q(\ecolor{\progstate'}) \wedge \interp{s}(\ucolor{\progstate}, \ecolor{\progstate'})]
\end{equation}

Note that $P = \wppre(s, Q)$ forms neither a Hoare nor an incorrectness triple with the program $s$ and the postcondition $Q$.
As proposed by \citeauthor{Peter2019Incorrectness}, we can use $P \hspace{-1pt}=\hspace{-1pt} \wppre(s,\hspace{-1pt} Q)$ to compute a new postcondition $Q' = \wpost(s, P)$ and obtain a valid incorrectness triple $\incortriple{P}{s}{Q'}$.

We use $\wpprelang(s, Q)$ to denote the weakest possible precondition expressible as a \textit{disjunction} of predicates in the DSL $\lang$. From \Cref{eq:hoare-wpp-cond}, $\wpprelang(s, Q)$ can be obtained by synthesizing weakest \lunderproperties of the query 
$\exists \ecolor{\progstate'}.\, Q(\ecolor{\progstate'}) \wedge \interp{s}(\ucolor{\progstate}, \ecolor{\progstate'})$.

For the \remhash function, we have shown the existence of a bug through forward reasoning---i.e., the output $\ecolor{\y}$ can be negative.
Now we want to compute the weakest possible precondition (expressible in a DSL) that leads to the error state $Q(\ecolor{\y}) := \ecolor{\y} < 0$.
Looking at \Cref{eq:hoare-wpp-cond}, 
we can spell out that a weakest possible precondition of $Q(\ecolor{\y})$ for $\remhash$ is a weakest implicant of the formula $\exists \ecolor{\y}.\, [\ecolor{\y} = \ucolor{\vara}\ucolor{\x}\,\remainderop\,\ucolor{\varmodulus} \wedge Q(\ecolor{\y})]$; one can provide to \framework the corresponding query $\queryremwpp$ in \Cref{fig:remhash-wpp-query}.

To capture implicants of the query $\queryremwpp$, we define the DSL $\langunder_{\wppre}$ in \Cref{fig:remhash-wpp-grammar} by substituting every occurrence of $\y$ in $\langunder$ with $\x$.
An incomparable set of weakest \lunderpropertiesp{\langunder_\wppre} for query $\queryremwpp$ is shown in \Cref{fig:remhash-wpp-prop},
where each formula states sufficient conditions under which $\ucolor{\vara}\ucolor{\x}\,\remainderop\,\ucolor{\varmodulus}$ produces a negative output---i.e., when either $\ucolor{\vara}$ or $\ucolor{\x}$ falls within the interval $(-\ucolor{\varmodulus}, 0)$ and the other falls within the interval $(0, \ucolor{\varmodulus})$.
% Because \remhash is deterministic, each formula in \Cref{eq:wpp-error} also acts as a weakest precondition of $\phil_{\error}(\ecolor{\y})$---that is, they not only provide sufficient conditions for \emph{possibly} obtaining a negative output but also \emph{guarantee} obtaining a negative output.\loris{I don't get this last observation}
% Detailed discussions about how our framework relates to program logics will be provided in the \Cref{se:rel-incorrectness} and the \Cref{app:relation-to-program-logics}.

\subsubsection{Evaluation on Incorrectness Reasoning}    
We collected a total of 14 benchmarks: 
\rone the 2 example problems \remwupo and \remwpp from \Cref{se:rel-incorrectness}.
\rtwo 6 simple illustrative examples from the incorrectness logic paper~\cite{Peter2019Incorrectness}, and 
\rthree 6 more complicated problems we crafted to illustrate how \name's handling of incorrectness reasoning differs from incorrectness logic.
Among these benchmarks, 7 are about \lwpo and the other 7 are about \lwpp.
It takes \name less than 4 seconds to solve each benchmark from~\cite{Peter2019Incorrectness} and less than 50 seconds to solve each benchmark we crafted. Evaluation details are shown in \Cref{tab:benchmarks-main}.

\begin{wraptable}{r}{0.40\textwidth}
\vspace{-12pt}
\caption{
% \xuanyu{make sure the table is not broken}
Applications 1 to 3. 
LoC is the number of lines of \sketch code used to write the semantics of programs and operators.
$|\exists|$ is the size of the domain of the existentially quantified variables.  
%
% \loris{for time it should be  T(s)}
\#P and T(s) are the number of properties and synthesis time for both \loverproperties and \lunderproperties.
Incorrectness reasoning does not require synthesizing \loverproperties.
}
\label{tab:benchmarks-main}

\vspace{-10pt}
{\footnotesize
\setlength{\tabcolsep}{2pt}
\renewcommand{\arraystretch}{0.95}
\begin{tabular}{ccrrrrrr} 
\toprule[.1em]
\multicolumn{2}{c}{\multirow{2}{*}[-0.4ex]{Problem}}
& \multicolumn{1}{c}{\multirow{2}{*}{LoC}}
& \multicolumn{1}{c}{\multirow{2}{*}{$|\exists|$}}
 & \multicolumn{2}{c}{$\lang$-cons.} & \multicolumn{2}{c}{$\lang$-impl.}\\
    \cmidrule{5-8}
      & & & & \#P & T(s) & \#P & T(s) \\
\midrule[.1em]

\parbox[t]{2mm}{\multirow{14}{*}{\rotatebox[origin=c]{90}{Incorrectness}}}  
     & \remwupo    & 10 & 32     & /  & /  & 3  & 27.05  \\
     & \remwpp    & 14 & 32  & /  & /  & 2  & 26.27  \\
     & \incwupo{1}    & 13 & 256     & /  & /  & 3  & 1.10  \\
     & \incwupo{2}    & 28 & 64  & /  & /  & 9  & 2.15  \\
     & \incwupo{3}    & 16 & 2    & /  & /  & 2  & 0.41  \\
     & \incwpp{1}     & 12 & 256   & /  & /  & 2  & 0.42  \\
     & \incwpp{2}     & 33 & \numf{4096} & /  & /  & 7  & 6.01  \\
     & \incwpp{3}     & 21 & 2    & /  & /  & 1  & 0.15  \\
     & \arith{1wupo} & 8  & 32   & /  & /  & 2  & 4.03  \\
     & \arith{2wupo} & 8  & 64 & /  & /  & 3  & 2.96  \\
     & \arith{1wpp}  & 8  & 4    & /  & /  & 2  & 3.76  \\
     & \arith{2wpp}  & 8  & 8    & /  & /  & 5  & 62.52  \\
     & \wppHash{}     & 87 & 16   & /  & /  & 3  & 15.94 \\
     & \coin          & 23 & \numf{1024} & /  & /  & 1  & 20.33 \\

\cmidrule{1-8}

 \parbox[t]{2mm}{\multirow{8}{*}{\rotatebox[origin=c]{90}{Concurrency}}}    
    & \philosopher  & 79  & ${\sim}10^{6}$ & 4 & 11.85 & 3 & 6.15 \\
    & \race{1}      & 82  & \numf{64}                & 1 & 0.71 & 3 & 0.85 \\
    & \race{2}      & 86  & \numf{1024}              & 1 & 2.08 & 3 & 2.06 \\
    & \race{3}      & 88  & \numf{4096}              & 1 & 18.79 & 5 & 4.69 \\
    & \resource{1}  & 81  & \numf{16}                & 4 & 2.73 & 4 & 1.67 \\
    & \resource{2}  & 85  & \numf{256}               & 4 & 5.08 & 4 & 5.37 \\
    & \resource{3}  & 114 & ${\sim}10^{6}$            & 4 & 51.83 & 4 & 38.31 \\
    & \resource{4}  & 96  & ${\sim}10^{5}$             & 6 & 145.61 & 6 & 81.41 \\
    & \obdet  & 52  & \numf{1057}              & 3 & 2.11 & / & / \\

\cmidrule{1-8}

 \parbox[t]{2mm}{\multirow{5}{*}{\rotatebox[origin=c]{90}{Game}}}  
    & \numgame{1}  & 47 & 8 & 10 & 16.13 & 19 & 13.99 \\
    & \numgame{2}  & 47 & 32 & 25 & 21.32 & 15 &  8.54  \\
    & \rggame      & 29 & 32 & 2 & 0.38 & 2 &  0.49 \\
    & \nimgame{2}  & 59 & ${\sim}10^{22}$& 2 & 99.23 & 4 &  22.19  \\
    & \tempgame    & 34 & 120 & 10 & 125.24 & 10 &  74.65  \\

\bottomrule[.1em]
\bottomrule[.1em]
\end{tabular}
}
\vspace{-16pt}
\end{wraptable}

\mypar{Analysis of Benchmarks from~\cite{Peter2019Incorrectness}} 
We collected all the 3 triples $[P]s[Q]$ from the examples used in \cite{Peter2019Incorrectness} where $s$ is a nondeterministic program.
However, in these triples, there was no guarantee that $Q$ was the weakest under-approximate postcondition of $P$, or $P$ was the weakest possible precondition of $Q$.
We used \name to synthesize  $\wpostlang(s, P)$ and $\wpprelang(s, Q)$  (the DSL $\lang$  contained the same primitives appearing in the examples in \cite{Peter2019Incorrectness}), thus 3+3=6 benchmarks.
We examined that each of 3 synthesized $\wpostlang(s, P)$ by \name was indeed a subset of $\wpost(s, P)$, and for 2 cases the two were equal. 
For the one that is not equal, $\wpost(s, P)$ is the set of all perfect squares numbers, whereas $\wpostlang(s, P)$ is the perfect squares numbers lower than a bound (this difference was due to our query limiting the sample space of each variable).
% 
% \loris{I don't understand next sentence, you need to write full sentences: For the N type A benchmarks, we were able to to show that each property was....}
%
%
The \numILPaperWPP  $\wpprelang(s, Q)$ synthesized by \name are equal to $\wppre(s, Q)$.

\mypar{More Complex Benchmarks}
The 6 more complex benchmarks for which we performed incorrectness are \arith{1-wupo}, \arith{1-wpp}, \arith{2-wupo}, \arith{2-wpp}, \wppHash{}, and \coin.

% \xuanyu{for the following 3 paragraphs about analysis of \arith{}, \wppHash{}, and \coin, can we keep only 1 and move the rest 2 to the app?}

The benchmarks \arith{1} and \arith{2} model two arithmetic functions  \inlinef{$x' = \texttt{ite}(\ecolor{h_0}, x,-x)$} and \inlinef{$x' = (\ecolor{h_1} + 1)\cdot x + \ecolor{h_2}$}, where each $\ecolor{h_i} \in \{0, 1\}$ is a nondeterministic value. 
For both cases, we set $a\le x \le b$ as a precondition $P$ (or $a\le x' \le b$ as a postcondition $Q$) to synthesize $\wpprelang(s, Q)$  (or $\wpostlang(s, P)$), and thus get 4 benchmarks in total.
To use \name, we need to mark $x$ as existentially quantified variables when synthesizing $\wpostlang(s, P)$, whereas mark $x'$ as existentially quantified variables when synthesizing $\wpprelang(s, P)$.
Given a DSL containing basic arithmetic and comparison operators, \name synthesizes $\wpprelang(s,Q)$ and $\wpostlang(s,P)$ that are equal to $\wppre(s,Q)$ and $\wpost(s,P)$.

For example, to synthesize $\wpprelang(s,Q)$ for \arith{1}, one can construct a query \inlinef{$\exists \ecolor{x'}, \ecolor{h_0}. ~ \ecolor{x'} = \texttt{ite}(\ecolor{h_0}, \ucolor{x},-\ucolor{x}) \land \ucolor{a}\le \ecolor{x'} \le \ucolor{b}$} and supply the DSL in \Cref{fig:arit1-grammar}. 
\name will synthesize the \lunderproperties $\{ -\ucolor{b} \le \ucolor{x}, \ucolor{x}  \le -\ucolor{a}, \ucolor{a}\le \ucolor{x}, \ucolor{x}  \le \ucolor{b}\}$.

% \setlength{\columnsep}{-5pt} 
% \begin{wrapfigure}{r}{0.4\textwidth}  % 'r' places the wrap on the right
% \centering
% \vspace{-2mm}
% \begin{minipage}{0.4\textwidth}
% \begin{equation}
% \label{eq:arit1-grammar}
% \begin{array}{rcl}
%     B &:= & \land [\nonap, 0..5]\\
%     \nonap & := & N ~\{<\mid\le\mid= \mid \neq\}~ N  \\
%     N &:= & N' \mid -N' \\
%     N' &:= & \ucolor{a} \mid \ucolor{b} \mid \ucolor{x} \mid 0 \\
% \end{array}
% \end{equation}
% \end{minipage}
% \vspace{-3mm}
% \end{wrapfigure}

\begin{wrapfigure}{r}{0.27\textwidth}  % 'r' places the wrap on the right
\centering
\vspace{-5mm}
\begin{lstlisting}[ tabsize=3, 
    basicstyle= \tt \footnotesize, 
    keywordstyle=\color{black}\bfseries, 
    commentstyle=\color{gray}, 
    xleftmargin=-2.5em, 
    escapeinside=``,
    morekeywords = {Language, bool},
    emph={a, b, x}, emphstyle=\color{dred},  
    numbers = none
    ]
    C -> /\[AP,0..5];
    AP -> N {<=|<|==|!=} N
    N ->  N' | -N'
    N' -> 0 | a | b | x 
\end{lstlisting}
\vspace{-2mm}
\caption{The DSL for \arith{1}}
\label{fig:arit1-grammar}
\vspace{-5mm}
\end{wrapfigure}
% \loris{fix figures to use and[AP,0..5] notation, also earlier you use code-style not wedge. Once you do above, you can probably remove nonterminal G from eq 15}
We briefly summarize the findings on other benchmarks.
The \exname{coin} benchmark models the values one can produce using two coins that have co-prime denominations; \name can identify a lower bound above which all possible values can be produced using these coins.
The \wppHash{} benchmark models a parametric hash function; \name can synthesize the condition that possibly causes a hash collision.
More details are discussed in \Cref{App:eval-incorrectness}.
% More details are discussed in Appendix D.3.

\subsection{Application 2: Reasoning about Concurrent Programs} 
\label{se:eval-concurrency}

We show how \name can be used to reason about bugs in concurrent programs by considering \numNondConc variants of \numConcProb concurrency problems by \citet{ArpaciDusseau23-Book} (2 problems related to deadlocks, and 1 to race conditions), and one benchmark \obdet that requires synthesizing hyperproperties. 
Similar to how we model nondeterminism, we introduce an array $\ecolor{h}$ to represent the order in which threads are scheduled. 
%

% \mypar{Describing Sources of Deadlock}
In the \philosopher benchmark, we show how \name can synthesize conditions under which deadlock can be reached or avoided for the dining-philosophers problem, where $N$ processes arranged in a circle contend $N$ resources that are shared by neighboring processes.
A deadlock happens when no process can access both of their Left and Right resources indefinitely.
\name models this problem with a query $\exists \ecolor{h}.~\ucolor{dl} = schedule(\ucolor{o_1}, \cdots, \ucolor{o_N}, \ecolor{h})$, where
% $\ecolor{h}$ denotes the order in which processes are scheduled and 
$\ucolor{o_i}\in \{L,R\}$ indicates which resource the process $i$ always takes first; $\ucolor{dl}$ denotes that a deadlock has happened. 

For the case involving three processes/philosophers ($N=3$), when given a DSL $\langover$  in \Cref{fig:philosopher-grammar} that contained predicates of the form $\ucolor{o_i} = \{L|R\}$, \name synthesizes the \loverpropertiesp{\langover} in \Cref{fig:philosopher-property-over}, which state that deadlock can be prevented by having two of the processes disagree on their fork choice.
For the same $N$, and a dual DSL $\langunder$ in \Cref{fig:philosopher-grammar}, \name synthesizes the \lunderpropertiesp{\langunder} in \Cref{fig:philosopher-property-under}, which 
exactly characterize the two cases that lead to a deadlock (first two properties) and also capture that there always exists an execution that does not lead to a deadlock (last property).

% \input{code-figure/obdet}

% \loris{change low high to public secrete}

Whereas similar tools for concurrent programs only deal with properties over a single schedule~\cite{Wang2017concurrentbugs, Kumar2011concurrentmining}, 
% the next benchmark \obdet demonstrates how \name can also synthesize hyperproperties for concurrent programs, that is, properties that involve multiple schedules.
the next benchmark \obdet demonstrates how \name can also synthesize properties that involve multiple schedules (i.e., hyperproperties).
The \obdet benchmark models two threads 
\texttt{p\_o := p\_i + s | p\_o := p\_i - s} where \texttt{s} is a secret variable and \texttt{p\_o} and \texttt{p\_i} are public variables.
We say that the system is observational deterministic~\cite{zdancewic2003determinism} if the observations made by a public observer (i.e., one that can only observe \texttt{p\_o} and \texttt{p\_i}) are deterministic, regardless of scheduling orders and secret input data (i.e., the values of the variables \texttt{s}).
The query in the \obdet benchmark is \inlinef{$\exists \ecolor{h_1}, \ecolor{h_2}. ~ \ucolor{p_{o1}} = schedule(\ucolor{p_{i1}}, \ucolor{s_1}, \ecolor{h_1}) \land \ucolor{p_{o2}} = schedule(\ucolor{p_{i2}}, \ucolor{s_2}, \ecolor{h_2})$} where  $\ecolor{h_1}$ and  $\ecolor{h_2}$ describe 2 different schedules.
% (i.e., two arrays describing in what order the two threads execute).
%
Given a DSL $\lang$ containing only public variables, \name synthesizes the \loverproperty $\ucolor{p_{i1}} = \ucolor{p_{i2}} \Rightarrow \ucolor{p_{o1}} = \ucolor{p_{o2}}$, ensuring the system is observationally deterministic.

% \xuanyu{add the benchmark to the table if it is ok}

% \begin{equation}
% \label{eq:philosopher-under-result}
% \begin{array}{c@{\hspace{4.0ex}}c@{\hspace{4.0ex}}c}
%     \ucolor{o_0} = L \land \ucolor{o_1} = L \land \ucolor{o_2} = L \land \ucolor{dl} &
%     \ucolor{o_0} = R \land \ucolor{o_1} = R \land \ucolor{o_2} = R \land \ucolor{dl} &
%     \neg \ucolor{dl}
% \end{array}
% \end{equation}

\begin{figure}[t!]
    \centering
    % \vspace{-9pt}
    \begin{subfigure}{0.28\textwidth}
        \centering
        \input{code-figure/philosopher-grammar}
        \caption{The DSLs $\langover$ (rooted at \texttt{O}) and $\langunder$ (rooted at \texttt{U}) 
        % for \philosopher benchmark
        }
        \label{fig:philosopher-grammar}
    \end{subfigure}
    \hspace{1em}\vrule\hspace{0.5em}
    \begin{subfigure}{0.3\textwidth}
        \centering
        \input{code-figure/philosopher-property-over}
        \caption{Synthesized \loverpropertiesp{\langover}}
        \label{fig:philosopher-property-over}
    \end{subfigure}
    \hspace{1em}\vrule\hspace{0.5em}
    \begin{subfigure}{0.28\textwidth}
        \centering
        \input{code-figure/philosopher-property-under}
        \caption{Synthesized \lunderpropertiesp{\langunder}}
        \label{fig:philosopher-property-under}
    \end{subfigure}
    \caption{The DSLs and synthesized properties for the \philosopher benchmark.}
    % \vspace{-2mm}
\end{figure}

Each of the 4 \resource{} benchmarks describes a simple resource allocator; \name synthesizes properties describing the minimum number of resources that must (or may) cause a deadlock. Each of the 3 \race{} benchmarks describes two threads; \name can discover the necessary (or sufficient) ways to place a critical section to prevent race conditions. 
Details are shown in \Cref{App:eval-concurrency}.
% Details are shown in Appendix D.4.

% \textbf{Finding:} 
% \name can synthesize \loverproperties and \lunderproperties that capture various sources of bugs in concurrent programs.

\subsection{Application 3: Solving Two-Player Games} 
\label{se:eval-game}

In this section, we show how \name can be used to synthesize generalized strategies for solving two-player games.
In particular, by carefully designing the DSL, \name can synthesize sets of winning strategies expressed in a symbolic form, rather than a single concrete strategy.

% Furthermore, by cleverly designing the DSL, \name can synthesize most general strategies that effectively capture a set of game-winning strategies (instead of a single one).
% 
We illustrate the idea using an example by \citet{Bloem2018GraphGames}, called \rggame (for request/grant).
The two players take on the roles of client and server, and in each round, the server decides whether to grant ($g$) or not ($\bar g$) the request for that round, and then the client decides whether to send ($r$) or not ($\bar r$) a request in that round. 
To win the game, the server must grant every request in the same or next round.
\citet{Bloem2018GraphGames} show the server player can be in 3 possible states: \rone $q_0$: no ungranted request \rtwo $q_1$: an ungranted request in the last round \rthree $q_2$: ungranted requests 2 or more rounds ago. 
The server should prevent entering state $q_2$.

One of the winning strategies from the server side is to always grant on both state $q_0$ and $q_1$. We denote such a strategy as  $\strategyA[q_0] = g \land  \strategyA[q_1] = g$---i.e.,  $\strategyA[q] = a$ denotes that strategy $\strategyA$ chooses action $a$ when in state $q$. 
We can find winning strategies by modeling the \rggame~ game as a query \inlinef{$\exists \strategyB .~\ucolor{w} = \playgame(\strategyA, \strategyB)$}, where the client's strategy $\strategyB$ is existentially quantified and $\playgame(\strategyA, \strategyB)$~is the game controller that takes the strategy of both players and produces a Boolean value $\ucolor{w}$ denoting whether the server wins after playing the game.
Note that the way we model 2-player games can also be extended to multi-player games by introducing a set of opponent strategies $\{\strategyB_1, \cdots, \strategyB_k\}$.

The generality of the \framework framework allows \name to solve two-player games using the following queries: 
\rone \textit{Must-win strategy}: what strategy $\strategyA$ can guarantee a win for \textit{any} strategy $\strategyB$ (\Cref{eq:game-formalization-over})? 
\begin{equation}
\label{eq:game-formalization-over}
    \forall \ucolor{\alpha}, \ucolor{w}. ~\big(\exists \strategyB.~ \ucolor{w} = \playgame(\strategyA, \strategyB)\big) \Rightarrow \big(P_{must}(\strategyA) \Rightarrow \ucolor{w} = \gtrue \big)
\end{equation}
 and
\rtwo \textit{May-win strategy}: what strategy $\strategyA$ can win for \textit{at least one} strategy $\strategyB$ (\Cref{eq:game-formalization-under})? 
\begin{equation}
\label{eq:game-formalization-under}
    \forall \ucolor{\alpha}, \ucolor{w}.~ \big(P_{may}(\strategyA) \land \ucolor{w} = \gtrue \big) \Rightarrow \big(\exists \strategyB.~ \ucolor{w} = \playgame(\strategyA, \strategyB)\big)
\end{equation}

Looking at \Cref{eq:game-formalization-over}, if we provide a DSL $\langover$ that expresses formulas in the form \inlinef{$P_{must}(\strategyA) \Rightarrow \ucolor{w} = \gtrue$}, we can extract the must-win strategy in the $P_{must}$ part of the synthesized formulas. 
By replacing $\gtrue$ with $\gfalse$ we can get the must-lose strategy.
For the \rggame game, \name synthesized the following \loverpropertiesp{\langover}, which tells us that the server will always win if they grant requests in either of states $q_0$ and $q_1$---i.e., \name finds ``a set of'' winning strategies.
\begin{equation}
\label{eq:rq-over-result}
\begin{array}{c@{\hspace{4.0ex}}c}
    \strategyA[q_0] = g \Rightarrow \ucolor{w} = \gtrue &
    \strategyA[q_1] = g \Rightarrow \ucolor{w} = \gtrue
\end{array}
\end{equation}

When provided with the dual DSL $\langunder$ \name also synthesized the following \lunderpropertiesp{\langunder}:
\begin{equation}
\label{eq:rq-under-result}
\begin{array}{c@{\hspace{4.0ex}}c}
    \strategyA[q_0] = \bar g \land \strategyA[q_1] = \bar g \land \ucolor{w} = \gfalse &
    \ucolor{w} = \gtrue
\end{array}
\end{equation}
The first \lunderpropertyp{\langunder} states that the server may lose if they do not grant requests at both states  $q_0$ and $q_1$, whereas the second  \lunderpropertyp{\langunder} states that whatever strategy the server uses there exists a strategy of the requester (i.e., the one that never issues requests) that causes the server to win.

\mypar{Other benchmarks} 
We consider a total of 5 benchmarks:
\rggame (discussed above), \nimgame{2} (the Nim game), and \tempgame (a temperature controller), which are adapted from linear reachability games by \citet{Frazan2018Strategy}\footnote{All other games studied by \citeauthor{Frazan2018Strategy} cannot be modeled in \name due to the restricted features of \sketch languages, such as limited support to floating point numbers.}, and \numgame{1} and \numgame{2}, which are games designed by us in which two players manipulate an integer where one player's goal is to keep the integer in a certain range.
Because of the implementation bounds discussed in section \ref{se:implementation}, we stipulate that player 1 (typically the player that needs to stay in safe states) wins, after a finite number (we set as 15) of rounds of play.

It takes \name less than 85 seconds to synthesize must/may strategies for each benchmark. 
Compared to the work by \citet{Frazan2018Strategy}, \name synthesizes \rone not only \emph{must}- but also \emph{may}-strategies, \rtwo properties on desired strategies instead of a concrete strategy, and \rthree general strategies that work for games with parameters (e.g. the initial number of pebbles in \nimgame{2}).

Details of DSL design and synthesized properties of benchmarks are provided in \Cref{App:eval-game-benchmarks}.
% Details of DSL design and synthesized properties of benchmarks are provided in Appendix D.5.
\section{Related Work}
\label{se:related-work}

% We compare to existing techniques that directly or indirectly synthesize specifications from code.

\mypar{Abstract Interpretation}
% Many static program-analysis and verification techniques represent large program state spaces symbolically as predicates.
% % 
% One of these approaches is known as \emph{abstract interpretation}~\cite{DBLP:conf/popl/CousotC77}, and it provides a manageable way to analyze possible states that are reachable during program execution.
Many static program-analysis and verification techniques represent large program state spaces symbolically as predicates, using \emph{abstract interpretation}~\cite{DBLP:conf/popl/CousotC77}.
While the majority of works on abstract interpretation has been focused on over-approximation, it can also be used to describe under-approximations of the program behavior~\cite{grumberg2005proof, schmidt2007calculus, bruni2022repair}.
In particular, the best \lconjunction synthesis problem is an instance of \emph{strongest-consequence problem}~\cite{DBLP:conf/vmcai/RepsT16}.
Given a formula $\phil$ in logic $\lang_1$ (with interpretation $\interp{\cdot}_1$), the goal of strongest-consequence problem is to determine the strongest formula $\psi$ that is expressible in a different logic $\lang_2$ (with interpretation $\interp{\cdot}_2$) such that $\interp{\phil}_1 \subseteq \interp{\psi}_2$.
One existing technique to solve this problem identifies a chain of weaker implicants until one becomes a consequence of $\phil$~\cite{DBLP:conf/vmcai/RepsSY04},
whereas other techniques take the opposite direction, identifying a chain of stronger consequences~\cite{DBLP:conf/cav/ThakurR12, DBLP:conf/sas/ThakurER12}.
Unlike these approaches, our framework \framework (like \spyro~\cite{park2023specification}) supports a customizable DSL, avoiding requirements to perform operations on elements within the DSL $\lang$, such as join~\cite{DBLP:conf/cav/ThakurR12}.
% Our framework, like \spyro~\cite{park2023specification}, differs from existing works in abstract interpretation because it supports a customizable DSL. 
% % 
% In contrast, existing methods have certain structure requirements to perform operations on elements within the DSL $\lang$, such as join~\cite{DBLP:conf/cav/ThakurR12}.
% 
The ability to modify the DSL is what makes the \framework framework applicable to many domains.

\mypar{Best $\lang$-term Synthesis} 
The idea of synthesizing a ``best'' term from a customizable DSL $\lang$ was first proposed by \citet{DBLP:journals/pacmpl/KalitaMDRR22}, where the goal was to synthesize a most-precise abstract transformer for a given abstract domain.
\citet{park2023specification} generalized the idea and introduced the setting to define and solve the problem of synthesizing best \lconjunctions.
In these work, the ``best'' term should be \rone \textit{sound}: it is a valid approximation to the best transformer in \citet{DBLP:journals/pacmpl/KalitaMDRR22} or the semantics of query in \citet{park2023specification}, and \rtwo \textit{precise}: it is minimal w.r.t. a preorder defined on $\lang$.

The \framework framework takes a step further: it further generalizes the queries to allow existential quantifiers and introduces the problem of synthesizing weakest \lunderproperties and best \ldisjunctions.
Logically, the \framework framework subsumes both \spyro and the work by \citeauthor{DBLP:journals/pacmpl/KalitaMDRR22}.

At the algorithmic level, the tools solving the above problems all use two kinds of examples for synthesis, where one is treated as hard constraints to guarantee soundness and the other one is treated as soft constraints to guarantee precision.
\spyro improved the algorithm by \citeauthor{DBLP:journals/pacmpl/KalitaMDRR22} by introducing the idea of freezing examples, thus avoiding the need for a synthesizer with hard and soft constraints.
The CEGQI algorithm we present \Cref{se:cegqi} is a new approach that is not present in the aforementioned works as none of them supports existential quantifiers in their queries.
%---e.g., \name is the first tool that can synthesize best \loverproperties for nondeterministic programs. 

\mypar{Program Logic}
Hoare~\cite{hoare69axiomatic} and incorrectness logic~\cite{Peter2019Incorrectness,edsko2011reverse} can reason about program properties through preconditions and postconditions. 
If one treats the DSL $\lang$ as an assertion language, the problems of computing \textit{strongest postcondition}~\cite{dijkstra1990predicate} and \textit{weakest liberal precondition}~\cite{dijkstra1976discipline} in Hoare logic, and \textit{weakest under-approximation postcondition} and \textit{weakest possible precondition}~\cite{hoare78properties} in incorrectness logic, can be expressed within the \framework framework 
% (the  relationship between the \framework framework and program logics is detailed in \Cref{app:relation-to-program-logics} in the supplementary material).
(see \Cref{app:relation-to-program-logics}).

One key distinction between our approach and the one used in automating computing the above operations in program logics is that in the \framework framework, one can specify what DSL $\lang$ they want their properties to be expressed in.
In contrast, the properties produced automatically for, e.g., weakest possible preconditions in incorrectness logic, are the results of syntactic rewrites that often result in complex properties with potentially many quantifiers.

\mypar{Invariant inference}
Many data-driven, CEGIS-style algorithms can infer program invariants---e.g., Elrond~\cite{DBLP:journals/pacmpl/ZhouDDJ21},  abductive inference \cite{dillig2012automated}, ICE-learning \cite{DBLP:conf/cav/0001LMN14}, LoopInvGen \cite{DBLP:conf/pldi/PadhiSM16}, Hanoi~\cite{hanoi}, and Data-Driven CHC Solving \cite{DBLP:conf/pldi/ZhuMJ18}.
Dynamic techniques like Daikon~\cite{DBLP:journals/tse/ErnstCGN01,DBLP:journals/scp/ErnstPGMPTX07}, QuickSpec~\cite{smallbone2017quick} and Precis \cite{DBLP:journals/pacmpl/AstorgaSDWMX21} can also synthesize invariants through program traces or random tests.
The \framework framework differs from the above works in three key ways: \rone The language $\lang$ is customizable and is not limited to a set of predefined predicates, and thus the \framework framework can be used in a domain-agnostic way (as showcased by the many applications presented in \Cref{se:evaluation}); \rtwo the \framework framework supports both over-approximated and under-approximated reasoning, and \rthree the properties synthesized by \framework are \textit{provably sound strongest \loverproperties and sound weakest \lunderproperties}.

\mypar{Quantifier Elimination}
Many algorithms \cite{Dillig2013abductive, Dillig2011container, Calvanese2020interpolant, Komuravelli2014modelcheck} are built on abductive inference, specifically, approximate quantifier elimination. \citet{gulwani2008cover} defined overapproximate existential quantifier elimination as a ``cover operation'', where the goal is,  given a formula $\exists V. \phi$, to find a quantifier-free formula $\varphi$ such that $(\exists V. \phi) \Rightarrow \varphi$. If $\varphi$ is restricted to be in a DSL $\lang$, the cover problem corresponds to synthesizing \loverproperties for queries with an existential quantifier in \framework framework. 
Some algorithms \cite{Dillig2011container, Komuravelli2014modelcheck} also define underapproximate existential quantifier elimination, which corresponds to synthesis of \lunderproperties. 
Other approaches are limited to specific theories or require nontrivial, theory-specific primitives~\cite{Frazan2018Strategy, Nikolaj2015quantifier}.
\framework framework differs from above works because it allows custom DSLs that express the target quantifier-free formulas and thus is not restricted to any fixed theory.

\mypar{Logic-Based Learning} 
The angelic and demonic behaviors of nondeterminism correspond to the concepts of brave entailment (i.e., entailment from \emph{some} answer set) and cautious entailment (i.e., entailment from \emph{every} answer set) in logic-based learning \cite{Law2019entailment}.
However, the angelic notion in \framework does not precisely correspond to cautious entailment. While angelic specifications represent an under-approximation of possible behavior, cautious entailment does not necessarily imply an under-approximation. This distinction mirrors the difference between forward reasoning in incorrectness logic and backward reasoning in Hoare’s possible correctness, 
as discussed in \Cref{app:relation-to-program-logics}.
% as discussed in Appendix A.

\mypar{Under-approximation}
The \framework framework could potentially be combined with existing compositional under-approximate reasoning techniques, such as incorrectness logic~\cite{Peter2019Incorrectness} or compositional symbolic execution~\cite{godefroid2007compositional}.
An inherited limitation of syntax-directed under-approximate reasoning is the inability to effectively reason about statements or procedures involving constraints beyond the scope of the theory $\mathcal{T}$ assumed by the under-approximate reasoning framework.
We expect one could synthesize weakest \lunderproperties to approximate such constraints into summaries that are expressible in the theory $\mathcal{T}$ assumed by under-approximation frameworks. 

\section{Conclusion}
\label{se:conclusion}

% \kh{Summary of our problem, and key novelties.}

This paper presented \framework, a general framework for synthesizing over- and under-approximated specifications of both deterministic and nondeterministic programs, thus enabling broad applications---e.g., describing sources of bugs in concurrent code and finding winning strategies in two-player games.
The paper also presents general procedures for solving \framework problems using simple synthesis primitives that do not involve complex quantifier alternations.
% The algorithmic innovations are twofold: 
%
% \rone it alternates the roles of positive and negative examples in \loverproperties synthesis to support \lunderproperties synthesis, and 
% %
% \rtwo it uses counterexample-guided quantifier instantiation to efficiently deal with existentially quantified variables in the query.

% \kh{Future works}
Currently, our tool \name is implemented on top of the \sketch synthesizer, which results in some limitations.
First, synthesized formulas are only sound for inputs up to a given bound. 
Such an issue could be addressed by combining our approach with an off-the-shelf verifier; however, we are not aware of verifiers that can reason about \lunderproperties---i.e., under-approximated specifications.
Our work provides a motivation for building such verifiers.
Second, \sketch limits us from exploring applications that involve inputs of unbounded length---e.g., reasoning about infinite traces, LTL formulas, and reactive systems.
Our work thus opens an opportunity for the research community: by improving efficiency and providing stronger soundness guarantees for the primitives used to solve \framework problems, researchers can tackle the many applications supported by the framework.

\subsection*{Acknowledgement}

Supported, in part, by a Jacobs Faculty Scholarship, and by NSF grants CCF-2422214, CCF-2506134, CCF-2446711.
Any opinions and conclusions or recommendations expressed in this publication are those of the authors, and do not necessarily reflect the views of the sponsoring entities.

\subsection*{Data-Availability Statement}

Our implementation of \name that instantiates the \framework framework is built on top of the \sketch synthesizer. 
We provide a comprehensive Docker image on Zenodo that contains the source code and binary of \name, all the necessary dependencies, and scripts and datasets used in the experiments described in \Cref{se:evaluation}~\cite{loudzenodo}.

\bibliographystyle{ACM-Reference-Format}
\bibliography{main.bib}

% Appendix
\newpage
\appendix
\section{Relation to Program Logics}
\label{app:relation-to-program-logics}

In this section, we discuss how our problem formulation relates to the type of reasoning program logics like Hoare~\cite{hoare69axiomatic} and incorrectness logic~\cite{Peter2019Incorrectness} can do with respect preconditions/presumptions and postconditions/results.

\subsection{Relation to Hoare Logic}
\label{se:rel-hoare-app}

We start by considering Hoare logic and its ability to reason about correctness properties of programs.
A Hoare triple $\hoaretriple{P}{s}{Q}$ consists of a precondition $P$, a statement $s$, and a postcondition $Q$, and it has the following meaning:
if the precondition $P$ holds before executing $s$ and $s$ terminates, then the postcondition $Q$ holds upon termination. 
In other words, the postcondition \textit{over-approximates} the set of possible behaviors the program can result in.

We assume the semantics of a program $s$ is given by a relation $\interp{s}(\progstate, \progstate')$, which holds true if $s$ on input state $\progstate$ can terminate with an output state $\progstate'$.
% 
% The relation $\interp{s}(\progstate, \progstate')$ holds true only when the execution terminates.
The meaning of the triple $\hoaretriple{P}{s}{Q}$ can then be formalized as follows:
\begin{equation}
\label{eq:hoare-validity-app}
\forall \progstate, \progstate'.\, P(\progstate) \wedge \interp{s}(\progstate, \progstate') \Rightarrow Q(\progstate')
\end{equation}

\mypar{Backward Reasoning: Weakest Liberal Precondition}
Weakest precondition operations can be formalized as predicate transformers that assign a unique (in a sense, most general) precondition $P$ to each program $s$ and postcondition $Q$.
Given a program $s$ and a postcondition $Q$, the \emph{weakest liberal precondition} $\wlpre(s, Q)$ represents the weakest predicate $P$ such that the triple $\hoaretriple{P}{s}{Q}$ holds~\cite{dijkstra1976discipline}. 
If we view $\wlpre(s, Q)$ as a backward predicate transformer, it reformulates the problem of verifying the triple $\hoaretriple{P}{s}{Q}$ to the problem of checking a first-order formula $P \Rightarrow \wlpre(s, Q)$.\footnote{
Dijkstra's original weakest precondition requires that whenever the precondition $P$ holds before the execution of $s$, the execution of $s$ is guaranteed to terminate~\cite{dijkstra1975guarded}.
While our over-approximation framework can elegantly capture the notion of weakest liberal precondintion, reasoning about Dijkstra's weakest precondition and total correctness is problematic because semantics encoding presented in this section describe \textit{possible} end states $\progstate'$ from an initial state $\progstate$, without addressing whether some executions may not terminate.
}
% \loris{explain why we are not talking about the traditional weakest pre}

% \loris{the notion of weakest precond in a specific DSL is not quite defined, maybe we need a sentence explaining what that is}
The problem of computing such a predicate transformer, and in particular the weakest one expressible as a \textit{disjunction} of predicates in the DSL $\lang$, can be phrased in our framework.
The need to lift to \textit{disjunction} of predicates arises because there may exist multiple incomparable predicates in $\lang$ that satisfy the triple $\hoaretriple{P}{s}{Q}$ and cannot be further weakened (while still satisfying $\hoaretriple{P}{s}{Q}$).
Thus, we define $\wlprelang(s, Q)$ as follows:
if $P$ is a weakest predicate in $\lang$ such that $\hoaretriple{P}{s}{Q}$ holds true, then it must also hold true that $P \Rightarrow \wlprelang(s, Q)$.
\begin{definition}[\lwlp]
\label{defn:lwlp-app}
Given a program $s$, a postcondition $Q$ and a DSL $\lang$, 
the \emph{\lwlp} $\wlprelang(s, Q)$ is the (possibly infinite) disjunction $\bigvee_i P_i$ of all predicates $P_i \in \lang$ such that 
\rone $\hoaretriple{P_i}{s}{Q}$ holds true, and
\rtwo no $P \in \lang$ is strictly implied by $P_i$ while $\hoaretriple{P}{s}{Q}$ holds true.
\end{definition}

Note that if the DSL $\lang$ is expressive enough, the \lwlp $\wlprelang(s, Q)$ will be equivalent to the weakest liberal precondition $\wlpre(s, Q)$---i.e., the DSL would be what in Hoare logic is called an expressive enough assertion language.

Because \lwlp is a weakest formula one may be tempted to compute it via weakest \lunderproperties.
However, this approach would not yield the desired result because an \lunderproperty only proves the existence of an execution satisfying the postcondition $Q$, but it does not prove whether \emph{every} execution satisfies the postcondition $Q$.

For example, consider a simple DSL that can only express the properties true and false. 
% 
% \loris{below seems wrong, should then all be exists h. x=h and h=0 when modeling nondet?}
Given a nondeterministic program $\x = \ast$ (i.e., one that non-deterministically assigns an integer to $\x$) and a postcondition $Q(\ecolor{\x}) := \ecolor{\x} = 0$, the property ``true'' would be an \lunderproperty of the query $\exists \ecolor{\x}.\, (\ecolor{\x} = \ast \wedge \ecolor{\x} = 0)$ because there always \emph{exist} an execution that results in $\x$ being zero, while the formula ``false'' is a valid \lwlp for the query $\exists \ecolor{\x}.\, \ecolor{\x} = \ast$ and the postcondition $\ecolor{\x} = 0$ because there is no precondition that \emph{ensures} $\x$ will be zero after the execution.
As we will see in \Cref{se:rel-incorrectness} \lunderproperties can be used to compute the \emph{weakest possible precondition}, a different backward predicate transformer used in reverse Hoare logic and incorrectness logic.

We next show that by \emph{negating} every predicate (precondition, postcondition, and language or properties), we can capture \lwlp via \loverproperties.
%
% We now show that the problem of obtaining the \lwlp $\wlprelang(s, Q)$ can be encoded as synthesizing strongest \khchanged{\loverpropertiesp{\langneg}, where $\langneg$ is a DSL for negated formulae $\langneg = \{ \neg \phil \mid \phil \in \lang \}$}.
% 
First, observe that by appropriately negating the pre- and postcondition, \Cref{eq:hoare-validity-app} can be rewritten as follows:
\begin{equation}
\label{eq:hoare-wlp-cond-app}
\forall \ucolor{\progstate}.\, [\exists \ecolor{\progstate'}.\, \neg Q(\ecolor{\progstate'}) \wedge \interp{s}(\ucolor{\progstate}, \ecolor{\progstate'})] \Rightarrow \neg P(\ucolor{\progstate})
\end{equation}
Intuitively, \Cref{eq:hoare-validity-app} states that every state $\progstate'$ that violates $Q$ must come from a state $\progstate$ that violates $P$.
Thus, we can introduce a DSL for negated formulae $\langneg = \{ \neg \phil \mid \phil \in \lang \}$ and have that any predicate $P$ satisfies conditions \rone and \rtwo of \Cref{defn:lspo-app} if and only if $\neg P$ is a strongest \loverpropertyp{\langneg} of the query:
\begin{equation}
\label{eq:hoare-wlp-query-app}
\exists \ecolor{\progstate'}.\, \neg Q(\ecolor{\progstate'}) \wedge \interp{s}(\ucolor{\progstate}, \ecolor{\progstate'})
\end{equation}
By combining \Cref{thm:bestset} and De Morgan's laws with the above observation, we can prove the equivalence of $\wlprelang(s, Q)$ and a best \lconjunctionp{\langneg} for the query $\eqref{eq:hoare-wlp-query-app}$.
% \begin{lemma}
% Let $\langneg$ be the DSL for negated formulae  $\{ \neg \phil \mid \phil \in \lang \}$.
% A predicate $P \in \lang$ satisfies conditions \rone and \rtwo of \Cref{defn:lwlp}
% if and only if $\neg P$ is a strongest \loverpropertyp{\langneg} of the query:
% \end{lemma}
%
% Now the \Cref{thm:bestset} and De Morgan's laws combined with the above lemma prove the equivalence of $\wlprelang(s, Q)$ and best \lconjunctionp{\langneg} of the query $\eqref{eq:hoare-wlp-query}$.
%
\begin{theorem}
The \lwlp $\wlprelang(s, Q)$ is semantically equivalent to the negation of a best \lconjunctionp{\langneg} of query 
$\query := \exists \ecolor{\progstate'}.\, \neg Q(\ecolor{\progstate'}) \wedge \interp{s}(\ucolor{\progstate}, \ecolor{\progstate'})$. 
\end{theorem}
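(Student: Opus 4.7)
The plan is to establish a correspondence between predicates $P \in \lang$ satisfying the Hoare triple $\hoaretriple{P}{s}{Q}$ and \loverpropertiesp{\langneg} of the query $\query$, and then invoke \Cref{thm:bestset} together with De Morgan's laws to transfer semantic optimality across the negation.

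First, I would prove the key biconditional: for every $P \in \lang$, the triple $\hoaretriple{P}{s}{Q}$ is valid if and only if $\neg P \in \langneg$ is a consequence of $\query$, i.e., $\overapprox{\query}(\neg P)$ holds. This is just the contrapositive rewrite already outlined between \Cref{eq:hoare-validity-app} and \Cref{eq:hoare-wlp-cond-app}: the universal statement ``every $\progstate'$ reachable from a $P$-state satisfies $Q$'' is logically equivalent to ``every $\ucolor{\progstate}$ that reaches some $\neg Q$-state $\ecolor{\progstate'}$ violates $P$,'' and the latter is exactly $\forall \ucolor{\progstate}.\, \query(\ucolor{\progstate}) \Rightarrow \neg P(\ucolor{\progstate})$.

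Second, I would observe that the map $P \mapsto \neg P$ is an order-reversing bijection between $\lang$ and $\langneg$ on model sets: $\interp{P_1} \subsetneq \interp{P_2}$ iff $\interp{\neg P_2} \subsetneq \interp{\neg P_1}$. Combining this with the first step, $P$ is a weakest $\lang$-predicate satisfying $\hoaretriple{P}{s}{Q}$ (in the sense of \Cref{defn:lwlp-app}) if and only if $\neg P$ is a strongest \loverpropertyp{\langneg} of $\query$ (in the sense of \Cref{dfn:a-best-over}). Hence the family $\{P_i\}$ in \Cref{defn:lwlp-app} is in bijection with $\bestoverset(\query) \cap \langneg$ via complementation.

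Third, I would assemble the conclusion. By \Cref{defn:lwlp-app}, $\wlprelang(s, Q) \equiv \bigvee_i P_i \equiv \neg \bigwedge_i \neg P_i$ by De Morgan. By the bijection above, $\bigwedge_i \neg P_i$ is precisely the conjunction of all strongest $\langneg$-consequences of $\query$, and \Cref{thm:bestset} guarantees that any best \lconjunctionp{\langneg} $\phiand$ of $\query$ satisfies $\interp{\phiand} = \interp{\bigwedge_i \neg P_i}$. Negating both sides yields $\wlprelang(s, Q) \equiv \neg \phiand$, as required. The main obstacle I expect is the bookkeeping in step two: ensuring that ``weakest among valid preconditions'' and ``strongest among consequences'' really do correspond under negation, since \Cref{defn:lwlp-app} quantifies only over $\lang$ (not over arbitrary semantic weakenings), so one must check that the $\lang \leftrightarrow \langneg$ bijection respects this restriction. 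Once that is pinned down, the De Morgan step and the appeal to \Cref{thm:bestset} are routine.
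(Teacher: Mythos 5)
Your proposal is correct and follows essentially the same route as the paper: contrapose the Hoare-triple validity condition to show $\hoaretriple{P}{s}{Q}$ holds iff $\neg P$ is an $\langneg$-consequence of $\query$, note that negation is an order-reversing bijection between $\lang$ and $\langneg$ so weakest valid preconditions correspond exactly to strongest $\langneg$-consequences, and then conclude via De Morgan and \Cref{thm:bestset}. Your explicit second step (checking that the $\lang\leftrightarrow\langneg$ bijection respects the maximality/minimality conditions) is the part the paper states without elaboration, so spelling it out is a reasonable refinement rather than a deviation.
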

For example, to compute \lwlpparam{\langover} for the program $\y = \modhasharg{\vara}{\varmodulus}{\x}$, a postcondition $\y = 0$ and a DSL $\langover$ defined in \Cref{se:example-reachability-over}, we introduce a DSL for negated formulae $\langoverneg = \{ \neg \phil \mid \phil \in \langover\}$. Then the problem of computing $\wlprelangp{\langover}(\y = \modhasharg{\vara}{\varmodulus}{\x}, \y = 0)$ is encoded as the problem of synthesizing a best \lconjunctionp{\langoverneg} for the query:
\begin{equation}
\label{eq:hoare-wlp-query-ex}
\exists \ecolor{\y}.\, \neg (\ecolor{\y} = 0) \wedge \ecolor{\y} = \modhasharg{\ucolor{\vara}}{\ucolor{\varmodulus}}{\ucolor{\x}}
\end{equation}

The following set of \loverpropertiesp{\langoverneg} forms a best \lconjunctionp{\langoverneg} for the query \eqref{eq:hoare-wlp-query-ex}:
\begin{equation}
\label{eq:hoare-wlp-result-ex}
\begin{array}{c@{\hspace{8.0ex}}c@{\hspace{8.0ex}}c@{\hspace{8.0ex}}c}
\ucolor{\vara} \neq 0 & \ucolor{\x} \neq 0 &
\ucolor{\vara} \neq \ucolor{\varmodulus} & \ucolor{\x} \neq \ucolor{\varmodulus}
\end{array}
\end{equation}
By negating the formulas in \Cref{eq:hoare-wlp-result-ex} we get the following disjunction---i.e., the \lwlp:
\begin{equation*}
\label{eq:hoare-wlp-result-ex-formula}
\ucolor{\vara} = 0 \lor \ucolor{\x} = 0 \lor
\ucolor{\vara} = \ucolor{\varmodulus} \lor \ucolor{\x} = \ucolor{\varmodulus}
\end{equation*}

\mypar{Forward Reasoning: Strongest Postcondition}
Strongest postcondition predicate transformers can be thought of being the dual of the weakest precondition ones.
Given a program $s$ and a precondition $P$, the \emph{strongest postcondition} $\spost(s, P)$ represents the strongest predicate $Q$ such that the triple $\hoaretriple{P}{s}{Q}$ holds~\cite{dijkstra1990predicate}. 
If we view $\spost(s, Q)$ as a forward predicate transformer, we can reformulate the problem of verifying the triple $\hoaretriple{P}{s}{Q}$ as the problem of checking whether a first-order formula $\spost(s, P) \Rightarrow Q$ holds.

The problem of computing such a predicate transformer, and in particular the strongest one expressible as a \textit{conjunction} of predicates in the DSL $\lang$, can be phrased in our framework.
Similar to the case of the weakest liberal precondition, we define $\spostlang(s, P)$ as follows:
if $Q$ is a strongest predicate in $\lang$ such that $\hoaretriple{P}{s}{Q}$ holds true, then it must also hold true that $\spostlang(s, P) \Rightarrow Q$---i.e., the \lspo $\spostlang(s, P)$ is stronger than any strongest postcondition $Q$ in $\lang$.
% \loris{say in english that the strongest in L is at most as strong as Q?}.
%
\begin{definition}[\lspo]
\label{defn:lspo-app}
Given a program $s$, a precondition $P$ and a DSL $\lang$, 
the \emph{\lspo} $\spostlang(s, P)$ is the (possibly infinite) conjunction $\bigwedge_i Q_i$ of all predicates $Q_i \in \lang$ such that 
\rone $\hoaretriple{P}{s}{Q_i}$ holds true, and
\rtwo no $Q \in \lang$ strictly implies $Q_i$ while $\hoaretriple{P}{s}{Q}$ holds true.
\end{definition}
%
% \khchanged{
% Forward reasoning is used in Hoare logic to dedeuce properties that hold on output states from a given precondition on inputs states \cite{floyd1967assigning, dijkstra1990predicate}.
% The strongest postcondition $\spost(s, P)$ represents the strongest predicate $Q$ such that the triple $\hoaretriple{P}{s}{Q}$ holds true.
% However, the strongest precondition for total correctness cannot be defined because $\hoaretotaltriple{P}{s}{Q}$ cannot hold for any $Q$ if there exists a non-terminating execution starting from a state with precondition $P$. 
% Note that $\hoaretriple{P}{s}{Q}$ if and only if $\hoaretotaltriple{P}{s}{Q}$ if every executing starting from a state with precondition $P$ terminates.
% }

The problem of obtaining the \lspo $\spostlang(s, P)$ can also be encoded as synthesizing strongest \loverproperties, this time without a need of negating formulas.
% It can be observed that the input $x$ does not appear in $Q$. 
Observe that relocating the quantifier $\forall \progstate$ into the implicant rewrites \Cref{eq:hoare-validity-app} as follows:
\begin{equation}
\label{eq:hoare-spost-cond-app}
\forall \ucolor{x'}.\, [\exists \ecolor{\progstate}.\, P(\ecolor{\progstate}) \wedge \interp{s}(\ecolor{\progstate}, \ucolor{\progstate'})] \Rightarrow Q(\ucolor{\progstate'})
\end{equation}
From \Cref{eq:hoare-spost-cond-app} we have  that a predicate $Q \in \lang$ satisfies conditions \rone and \rtwo of \Cref{defn:lspo-app} if and only if $Q$ is a strongest \loverproperty of the following query:
\begin{equation}
\label{eq:hoare-sp-query-app}
\exists \ecolor{\progstate}.\, P(\ecolor{\progstate}) \wedge \interp{s}(\ecolor{\progstate}, \ucolor{\progstate'})
\end{equation}
% \loris{nto sure why we need a lemma and a thm. They say the same thing? Lemma 3.4 is kind of the proof of thm 3.5 Just state thm properly and before you can say, "because a predicate Q..."}
%
This observation yields the following theorem:
\begin{theorem}
The \lspo $\spostlang(s, P)$ is semantically equivalent to the best \lconjunction of the query 
$\query := \exists \ecolor{\progstate}.\, P(\ecolor{\progstate}) \wedge \interp{s}(\ecolor{\progstate}, \ucolor{\progstate'})$.
\end{theorem}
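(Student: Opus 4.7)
The plan is to prove that the set of predicates $\{Q_i\}$ appearing in \Cref{defn:lspo-app} coincides with the set $\bestoverset(\query)$ of all strongest \loverproperties of the query $\query := \exists \ecolor{\progstate}.\, P(\ecolor{\progstate}) \wedge \interp{s}(\ecolor{\progstate}, \ucolor{\progstate'})$, and then invoke the Semantic Optimality theorem (Theorem~\ref{thm:bestset}) to conclude that $\spostlang(s, P)$ is semantically equivalent to any best \lconjunction of $\query$.

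First I would justify the rewriting from \eqref{eq:hoare-validity-app} to \eqref{eq:hoare-spost-cond-app}: since $\progstate$ does not appear free in $Q(\progstate')$, the universal quantifier over $\progstate$ can be pushed into the antecedent of the implication, producing exactly the formula $\overapprox{\query}(Q)$ from \Cref{dfn:a-best-over}. This establishes that the Hoare triple $\hoaretriple{P}{s}{Q}$ is valid if and only if $Q$ is an \loverproperty of $\query$, matching condition~\rone of \Cref{defn:lspo-app} with the soundness clause of \Cref{dfn:a-best-over}. Second, condition~\rtwo of \Cref{defn:lspo-app}, which asks that no $Q \in \lang$ strictly implies $Q_i$ while $\hoaretriple{P}{s}{Q}$ remains valid, translates under the same bijection into the precision clause of \Cref{dfn:a-best-over}: strict implication in the Hoare setting corresponds exactly to $\interp{\phil'} \subset \interp{\phil}$ under the shared implicit universal closure over $\ucolor{\progstate'}$. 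Combined, these two translations give the set identity $\{Q_i\} = \bestoverset(\query)$.

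With the identification in place, Theorem~\ref{thm:bestset} immediately yields $\interp{\phiand} = \interp{\bigwedge_{\phil \in \bestoverset(\query)} \phil} = \interp{\bigwedge_i Q_i} = \interp{\spostlang(s, P)}$ for any best \lconjunction $\phiand$ of $\query$, which is the claimed semantic equivalence. I do not anticipate a substantive obstacle: the entire argument is a short sequence of quantifier manipulations followed by one appeal to the already-established \Cref{thm:bestset}. The only mild subtlety to verify is that the implicit universal closure over the free variables $\ucolor{v} = \ucolor{\progstate'}$ used in \Cref{dfn:a-best-over} matches the closure implicit in Hoare-triple validity, so that ``strict implication'' and ``strict semantic containment'' really do coincide on the relevant predicates; this follows because both definitions quantify over the same state space.
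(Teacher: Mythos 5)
Your proposal is correct and follows essentially the same route as the paper: the paper likewise rewrites the Hoare-triple validity condition \eqref{eq:hoare-validity-app} into \eqref{eq:hoare-spost-cond-app} by relocating the quantifier $\forall \progstate$ into the antecedent, observes that a predicate $Q \in \lang$ satisfies conditions \rone and \rtwo of \Cref{defn:lspo-app} exactly when $Q$ is a strongest \loverproperty of the query, and concludes via \Cref{thm:bestset}. Your write-up merely makes explicit the appeal to \Cref{thm:bestset} and the matching of the universal closures, which the paper leaves implicit.
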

\Cref{se:example-reachability-over} illustrated how to obtain a \lspoparam{\langover} $\spostlangp{\langover}(\y = \modhasharg{\vara}{\varmodulus}{\x}, \top)$ using over-appximated reasoning in our framework.
% \loris{is there similar example for previous subsec one about wlp}
% \kh{We don't have, so I'm considering to add one to example sec 2 or an inline example to prev subsec}

\subsection{Relation to Reverse Hoare Logic and Incorrectness Logic}
\label{se:rel-incorrectness-app}

Reverse Hoare logic \cite{edsko2011reverse} and incorrectness logic \cite{Peter2019Incorrectness} can both be thought of being the dual of Hoare logic---they under-approximate (instead of overapproximate) the set of possible behaviors a program can result in.
The under-approximating logics are semantically equivalent despite having been designed with different goals in mind: reverse Hoare logic was designed to reason about the correctness of nondeterministic programs, whereas incorrectness logic was designed to identify the presence of bugs in programs.
Our evaluation in \Cref{se:evaluation} investigates both of these applications.

A \textit{incorrectness triple} $\incortriple{P}{s}{Q}$ consists of a presumption $P$, a statement $s$, and a result $Q$, and it has the following meaning: every final state satisfying $Q$ is reachable by executing program $s$ starting from \textit{some} state that satisfies presumption $P$. 
In other words, the predicate $Q$  \textit{under-approximates} the set of possible behaviors the program $s$ can result in when executed on inputs satisfying $P$:
\begin{equation}
\label{eq:incorrectness-validity-app}
\forall \progstate'.\, Q(\progstate') \Rightarrow \exists \progstate.\, [P(\progstate) \wedge \interp{s}(\progstate, \progstate')]
\end{equation}

% \kh{I use the term weakest postcondition instead of weakest result, as reverse Hoare logic paper call it weakest postcondition and incorrectness logic paper call it weakest under-approximate postcondition.}
% \loris{let's use weakest under-approximate post}

\mypar{Forward Reasoning: Weakest Under-approximate Postcondition}
Weakest postcondition operations can be formalized as predicate transformers that assign a unique precondition $Q$ to each program $s$ and precondition $P$.
Given a program $s$ and a presumption $P$, the \emph{weakest under-approximate postcondition} $\wpost(s, P)$ represents the weakest predicate $Q$ such that the triple $\incortriple{P}{s}{Q}$ holds. 
If we view $\wpost(s, P)$ as a forward predicate transformer, we can reformulate the problem of verifying the triple $\incortriple{P}{s}{Q}$ as the problem of checking whether a first-order formula $Q \Rightarrow \wpost(s, P)$ holds.

The problem of computing such a predicate transformer, and in particular the weakest one expressible as a \textit{disjunction} of predicates in the DSL $\lang$, can be phrased in our framework.
We define $\wpostlang(s, P)$ as follows:
if $Q$ is a weakest predicate in $\lang$ such that $\incortriple{P}{s}{Q}$ holds true, then it must also hold true that $Q \Rightarrow \wpostlang(s, P)$.
\begin{definition}[\lwpo]
\label{defn:lwpo-app}
Given a program $s$, a presumption $P$ and a DSL $\lang$, 
the \emph{\lwpo} $\wpostlang(s, P)$ is the (possibly infinite) disjunction $\bigvee Q_i$ of all predicates $Q_i \in \lang$ such that 
\rone $\incortriple{P}{s}{Q_i}$ holds true, and
\rtwo no $Q \in \lang$ is strictly implied by $Q_i$ while $\incortriple{P}{s}{Q}$ holds true.
\end{definition}
Following \Cref{eq:incorrectness-validity-app}, the problem of obtaining  the \lwpo $\wpostlang(s, P)$ can be directly encoded as synthesizing weakest \lunderproperties. 
A predicate $Q \in \lang$ satisfies conditions \rone and \rtwo of \Cref{defn:lwpo-app}
if and only if $Q$ is a weakest \lunderproperty of the following query:
\begin{equation}
\label{eq:incor-wp-query-app}
\exists \ecolor{\progstate}.\, P(\ecolor{\progstate}) \wedge \interp{s}(\ecolor{\progstate}, \ucolor{\progstate'})
\end{equation}
This observation yields the following theorem:
\begin{theorem}
The \lwpo $\wpostlang(s, P)$ is semantically equivalent to the best \ldisjunction of the query $\query := \exists \ecolor{\progstate}.\, P(\ecolor{\progstate}) \wedge \interp{s}(\ecolor{\progstate}, \ucolor{\progstate'})$. 
\end{theorem}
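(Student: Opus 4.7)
The plan is to reduce the theorem to a direct syntactic match between Definition~\ref{defn:lwpo-app} and Definition~\ref{dfn:best-disjunctions} (of a best \ldisjunction) applied to the query $\query := \exists \ecolor{\progstate}.\, P(\ecolor{\progstate}) \wedge \interp{s}(\ecolor{\progstate}, \ucolor{\progstate'})$, and then appeal to Theorem~\ref{thm:bestset} to get semantic equivalence.

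First, I would observe that the validity condition~\eqref{eq:incorrectness-validity-app} of the incorrectness triple $\incortriple{P}{s}{Q}$ can be rewritten, after renaming bound variables, as $\forall \ucolor{\progstate'}.\, Q(\ucolor{\progstate'}) \Rightarrow \exists \ecolor{\progstate}.\, P(\ecolor{\progstate}) \wedge \interp{s}(\ecolor{\progstate}, \ucolor{\progstate'})$, which is literally $\forall \ucolor{\progstate'}.\, Q(\ucolor{\progstate'}) \Rightarrow \query(\ucolor{\progstate'})$. By Definition~\ref{dfn:a-best-under}, this says exactly that $Q$ is an $\lang$-implicant of $\query$, i.e., $\underapprox{\query}(Q)$ holds. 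So the predicate $Q \in \lang$ satisfies condition~\rone of Definition~\ref{defn:lwpo-app} iff $Q$ is an \lunderproperty of $\query$.

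Next, I would match the ``weakest'' condition: condition~\rtwo of Definition~\ref{defn:lwpo-app} states that no $Q' \in \lang$ strictly implied by $Q_i$ satisfies $\incortriple{P}{s}{Q'}$; equivalently, there is no $Q' \in \lang$ with $\interp{Q_i} \subset \interp{Q'}$ and $\underapprox{\query}(Q')$. This is exactly condition~\rtwo of Definition~\ref{dfn:a-best-under}, so $Q_i$ satisfies \rone and \rtwo of Definition~\ref{defn:lwpo-app} iff $Q_i$ is a weakest \lunderproperty of $\query$, i.e., $Q_i \in \bestunderset(\query)$.

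Combining the two equivalences, the disjunction $\bigvee_i Q_i$ defining $\wpostlang(s, P)$ ranges over exactly the set $\bestunderset(\query)$, so $\wpostlang(s, P)$ is syntactically (up to reordering) the disjunction $\bigvee_{\phil \in \bestunderset(\query)} \phil$. Finally, applying Theorem~\ref{thm:bestset} (semantic optimality), any best \ldisjunction $\phior$ for $\query$ satisfies $\interp{\phior} = \interp{\bigvee_{\phil \in \bestunderset(\query)} \phil}$, so $\wpostlang(s, P)$ is semantically equivalent to a best \ldisjunction for $\query$. No step is really hard here: the only subtlety is the bound-variable renaming between the presentation of the incorrectness triple (which quantifies over $\progstate$ as a hidden witness) and our query convention, but since the variables play exactly the roles of $\ecolor{h}$ and $\ucolor{v}$ in $\phiquery$, this is purely notational. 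The analogous argument for $\spostlang$, $\wlprelang$, and (later) $\wpprelang$ follows the same template, with negation handling the Hoare (over-approximate) cases.
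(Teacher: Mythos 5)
Your proposal is correct and follows essentially the same route as the paper: the paper likewise rewrites the incorrectness-triple validity condition \eqref{eq:incorrectness-validity-app} to identify condition \rone of \Cref{defn:lwpo-app} with $\underapprox{\query}(Q)$ and condition \rtwo with weakest-ness, so that the disjuncts of $\wpostlang(s,P)$ are exactly $\bestunderset(\query)$, and then invokes \Cref{thm:bestset} for semantic equivalence. Nothing is missing; the bound-variable renaming you flag is indeed the only (purely notational) subtlety.
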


% The under-approximated reasoning described in \Cref{se:example-reachability-under,se:example-incorrectness-forward} effectively computed {\lwpoparam{\langunder}}s $\wpostlangp{\langunder}(\y = \modhasharg{\vara}{\varmodulus}{\x}, \top)$ and $\wpostlangp{\langunder}(\y = \remhasharg{\vara}{\varmodulus}{\x}, \top)$, respectively. 

\mypar{Backward Reasoning: Weakest Possible Precondition} 
While forward predicate transformers for incorrectness logic behave well---i.e., given a presumption $P$ and a program $s$, one can always assign the weakest result $Q$ such that $\incortriple{P}{s}{Q}$ holds true---backward predicate transformers for incorrectness logic do not always exist! 
This problem arises because valid presumptions may not exist in incorrectness logic.
For example, there is no predicate $P$ making the triple $\incortriple{P}{\y = \modhasharg{\vara}{\varmodulus}{\x}}{\y = -1}$ true because no values of $\vara$, $\varmodulus$ and $\x$ satisfies $\modhasharg{\vara}{\varmodulus}{\x} = -1$.
To address this shortcoming and still take advantage of some form of backward reasoning in incorrectness logic, \citet{Peter2019Incorrectness} suggests using the weakest possible precondition $\wppre(s, Q)$, which is predicate transformer described by \citet{hoare78properties} for what he referred as ``possible correctness''.
Intuitively, $\wppre(s, Q)$ captures the set of initial states from which it is \textit{possible} to execute $s$ and terminate in a state that satisfies $Q$.
 
\citeauthor{Peter2019Incorrectness} then proposes to use a two phase approach to derive valid incorrectness triples as follows.
Starting with a postcondition $Q$, one first computes the weakest possible precondition $P=\wppre(s, Q)$ and then applies forward reasoning and computes the weakest under-approximate postcondition $Q' = \wpost(s, P)$ to obtain a valid incorrectness triple $\incortriple{P}{s}{Q'}$.
Since we already showed how to capture the weakest under-approximate postcondition in our framework, we now show how to capture the weakest possible precondition. 

A predicate $P$ is called a \textit{possible precondition} of predicate $Q$ for program $s$, if every input state satisfying $P$ has a run of the program $s$ that terminates to an end state satisfying $Q$.
That is, $\wppre(s, Q)$ is the weakest $P$ satisfying
\begin{equation}
\label{eq:hoare-wpp-cond-app}
\forall \ucolor{\progstate}.\, P(\ucolor{\progstate}) \Rightarrow [\exists \ecolor{\progstate'}.\, Q(\ecolor{\progstate'}) \wedge \interp{s}(\ucolor{\progstate}, \ecolor{\progstate'})]
\end{equation}
Note that $P = \wppre(s, Q)$ does not form neither a Hoare nor an incorrectness triple with the program $s$ and the postcondition $Q$.
The postcondition $Q$ is not a valid over-approximation of possible final states because there could be other executions that do not satisfy $Q$---e.g., if the program is nondeterministic.
The postcondition $Q$ is not a valid under-approximation either because there might be states satisfying $Q$ that are not reachable---e.g., if the postcondition $Q$ is $\tru$ and the program $s$ is $\y = \modhasharg{\vara}{\varmodulus}{\x}$, then any negative value of $\y$ satisfies the postcondition $Q$, but no values of $\vara$, $\varmodulus$ and $\x$ can yield a negative output.
As proposed by O'Hearn, we can remedy this issue by computing a new postcondition $Q' = \wpost(s, P)$ using the weakest under-approximate postcondition operator to obtain a valid incorrectness triple $\incortriple{P}{s}{Q'}$

The problem of computing such a predicate transformer, and in particular the weakest one expressible as a \textit{disjunction} of predicates in the DSL $\lang$, can be phrased in our framework.
\begin{definition}[\lwpp]
\label{defn:lwpp-app}
Given a program $s$, a postcondition $Q$ and a DSL $\lang$, 
the \emph{\lwpp} $\wpprelang(s, Q)$ is the (possibly infinite) disjunction $\bigvee_i P_i$ of all predicates $P_i \in \lang$ such that 
\rone \Cref{eq:hoare-wpp-cond-app} holds true, and
\rtwo no $Q \in \lang$ is strictly implied by $Q_i$ while \Cref{eq:hoare-wpp-cond-app} holds true.
\end{definition}

From \Cref{eq:hoare-wpp-cond-app}, the problem of obtaining \lwpp $\wpprelang(s, Q)$ can be directly encoded as synthesizing weakest \lunderproperties. 
A predicate $P \in \lang$ satisfies conditions \rone and \rtwo of \Cref{defn:lwpp-app}
if and only if $P$ is a weakest \lunderproperty of the following query:
\begin{equation}
\label{eq:incor-wpp-query-app}
\exists \ecolor{\progstate'}.\, Q(\ecolor{\progstate'}) \wedge \interp{s}(\ucolor{\progstate}, \ecolor{\progstate'})
\end{equation}
This observation yields the following theorem:
\begin{theorem}
The \lwpp $\wpprelang(s, Q)$ is semantically equivalent to the best \ldisjunction of the query $\query := \exists \ecolor{\progstate'}.\, Q(\ecolor{\progstate'}) \wedge \interp{s}(\ucolor{\progstate}, \ecolor{\progstate'})$. 
\end{theorem}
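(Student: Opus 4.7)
The plan is to show that the set of predicates $P_i$ appearing in the definition of $\wpprelang(s, Q)$ is exactly the set $\bestunderset(\query)$ of weakest \lunderproperties of the query $\query := \exists \ecolor{\progstate'}.\, Q(\ecolor{\progstate'}) \wedge \interp{s}(\ucolor{\progstate}, \ecolor{\progstate'})$, and then to invoke \Cref{thm:bestset} to conclude that their disjunction is semantically equivalent to any best \ldisjunction for $\query$. This mirrors the proofs already sketched for the \lspo and \lwpo cases in \Cref{se:rel-hoare-app,se:rel-incorrectness-app}.

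First, I would align the two definitions syntactically. Observe that with $\phiprog(\ucolor{\progstate}, \ecolor{\progstate'}) := Q(\ecolor{\progstate'}) \wedge \interp{s}(\ucolor{\progstate}, \ecolor{\progstate'})$, the query $\query$ has precisely the shape $\exists \ecolor{h}.\, \phiprog(\ucolor{v}, \ecolor{h})$ used throughout \Cref{se:framework}, with $\ucolor{v} = \ucolor{\progstate}$ and $\ecolor{h} = \ecolor{\progstate'}$. Then the implicant condition $\underapprox{\query}(P) = \forall \ucolor{\progstate}.\, P(\ucolor{\progstate}) \Rightarrow \exists \ecolor{\progstate'}.\, Q(\ecolor{\progstate'}) \wedge \interp{s}(\ucolor{\progstate}, \ecolor{\progstate'})$ from \Cref{dfn:a-best-under} is literally condition \rone of \Cref{defn:lwpp-app} (\Cref{eq:hoare-wpp-cond-app}).

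Next, I would check that the weakest-in-$\lang$ condition matches. The weakest-\lunderproperty condition from \Cref{dfn:a-best-under} demands that no $P' \in \lang$ with $\interp{P'} \supsetneq \interp{P}$ satisfies $\underapprox{\query}(P')$, which (combining with the previous step) is exactly condition \rtwo of \Cref{defn:lwpp-app}: no $P' \in \lang$ strictly implied by $P$ (in the containment-of-models sense, i.e., $P \Rightarrow P'$ strictly) satisfies \Cref{eq:hoare-wpp-cond-app}. Therefore the $P_i$ enumerated in \Cref{defn:lwpp-app} are exactly the elements of $\bestunderset(\query)$, so $\wpprelang(s, Q) = \bigvee_{P \in \bestunderset(\query)} P$ as formulas.

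Finally, I would apply the second half of \Cref{thm:bestset}: for any best \ldisjunction $\phior$ for $\query$, $\interp{\phior} = \interp{\bigvee_{\phil \in \bestunderset(\query)} \phil}$. Chaining this with the equality from the previous step gives $\interp{\wpprelang(s, Q)} = \interp{\phior}$, which is exactly the claimed semantic equivalence. The only mild obstacle is a small notational mismatch: \Cref{defn:lwpp-app} as written in the appendix phrases condition \rtwo using implication between formulas, while \Cref{dfn:a-best-under} uses strict containment of model sets; I would spend one line noting these coincide since $\interp{P'} \supsetneq \interp{P}$ iff $P \Rightarrow P'$ holds and $P' \Rightarrow P$ does not. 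No new machinery is required beyond \Cref{thm:bestset}.
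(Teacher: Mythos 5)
Your proposal is correct and follows essentially the same route as the paper: the paper's argument is precisely the observation that a predicate $P \in \lang$ satisfies conditions \rone and \rtwo of the \lwpp definition if and only if it is a weakest \lunderproperty of the query $\exists \ecolor{\progstate'}.\, Q(\ecolor{\progstate'}) \wedge \interp{s}(\ucolor{\progstate}, \ecolor{\progstate'})$, combined with the semantic-optimality theorem for best \ldisjunctions. You simply spell out in more detail (including the harmless reconciliation of ``strictly implied'' versus strict model-set containment) what the paper states in prose before the theorem.
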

% The under-approximated reasoning described in \Cref{se:example-incorrectness-backward} effectively computes {\lwppparam{\langunder_\wppre}} $\wpprelangp{\langunder_\wppre}(\y = \remhash[a, M](\x), \phil_{\error}(\y))$. 
\section{Synthesizing a Best \lconjunction and \ldisjunction}
\label{app:alg-allproperties}

In this section we present detailed algorithms to synthesize a best \lconjunction(\Cref{app:alg-conjunction}) and \ldisjunction(\Cref{app:alg-disjunction}).

\subsection{Synthesizing a best \lconjunction}
\label{app:alg-conjunction}

\begin{algorithm}[htbp] {\it
\caption{$\synthoverproperties(\query)$}
\label{alg:SynthesizeAllOverProperties}
\DontPrintSemicolon
$\phiand \gets \top$ \\ %\tcp*{conjunction of best {\lproperties}}
$\Pi \gets \emptyset$ \\ % \tcp*{set of best {\lproperties}} 
\label{Li:InitializePhiToEmpty}
$\eplus \gets \emptyset$ \\ % \tcp*{initialize examples}  
\label{Li:InitEminusMayEmpty}
\While{$\tru$}{
    % \tcp{find sound $\phil$ that rejects examples in $\eminusmust$ that are still in {$\phiand$}} 
    \label{Li:BeginWhileLoop}
    $\phil, \eplus, \eminus \gets \synthoverproperty(\query,  \phiand, \tru, \eplus, \emptyset)$  \label{Li:CallSynthPropertyOne} \\
    \If{$\eminus = \emptyset$}
    {
        \label{Li:BeginIf}
        $\negex \gets$ \textsc{IsSat($\phiand \wedge \neg \phil $)} % \hfill \tcp*{check if $\phil$ improves {$\phiand$}}  
        \label{Li:CheckForImprovement} \\
        \eIf{$\negex \neq \bot$} {
            $\eminus \gets \{ \negex \}$ \\
        }{
            \Return $\Pi$ % \tcp*{return best {\lconjunction}}
            \label{Li:SynthPropertiesReturn}
        }
    }
    % \tcp{refine $\phil$ to reject more examples outside $\phiand$ if possible}
    $\phil,\eplus, \_$ $\gets$ $\synthoverproperty(\query, \tru, \phil, \eplus, \eminus$)  \label{Li:CallSynthPropertyTwo} \\
    $\Pi \gets \Pi \cup \{ \phil \}$ % \tcp*{$\phil$ is a best {\lproperty}}
    \label{Li:UpdatePhi} \\
    $\phiand \gets \phiand \wedge \phil$ % \tcp*{Update conjunction}
    \label{Li:UpdateConjunction} \\
}
}\end{algorithm}

The algorithm \synthoverproperties iteratively synthesizes incomparable strongest \loverproperties.
At each iteration, \synthoverproperties keeps track of the conjunction of synthesized strongest \loverproperties $\phiand$, along the set of positive examples $\eplus$ that have been observed so far.

Each iteration \synthoverproperties calls \synthoverproperty to try to synthesize a strongest \loverproperty for $\query$ with respect to $\phiand$ (line \ref{Li:CallSynthPropertyOne}). 
A property $\phil$ returned by \synthoverproperty is checked whether it rejects some example that was not rejected by $\phiand$ (lines \ref{Li:BeginIf} and \ref{Li:CheckForImprovement}).

If $\phil$ does not reject any example that was not already rejected by $\phiand$, the formula $\phiand$ is a best \lconjunction, and thus \synthoverproperties returns the set of synthesized \loverproperties (line \ref{Li:SynthPropertiesReturn}).

If $\phil$ rejects some example that was not rejected by $\phiand$, 
\synthoverproperties needs to further strengthen $\phil$ to a strongest \loverproperty for $\query$ with respect to examples that might already be rejected by $\phiand$.
Without this step the returned \loverproperty may be imprecise for examples that were not considered by \synthoverproperty because they were outside of $\phiand$.
To achieve this further strengthening, \synthoverproperties makes another call to \synthoverproperty with the example sets $\eplus$ and $\eminus$ returned by the previous call to \synthoverproperty together with $\phil$ and $\phiinit := \phil$, but with $\phiand := \tru$ (line \ref{Li:CallSynthPropertyTwo}).

\subsection{Synthesizing a best \ldisjunction}
\label{app:alg-disjunction}

\begin{algorithm}[t] {\it
\caption{$\synthunderproperties(\query)$}
\label{alg:SynthesizeAllUnderProperties}
\DontPrintSemicolon
$\diffcolor{\phior} \gets \diffcolor{\bot}$ \\ % \tcp*{conjunction of best {\lproperties}}
$\Pi \gets \emptyset$ \\ % \tcp*{set of best {\lproperties}} 
% \label{Li:InitializePhiToEmpty} \\
$\diffcolor{\eminus} \gets \emptyset$ \\ % \tcp*{initialize examples}  
% \label{Li:InitEminusMayEmpty} \\
\While{$\tru$}{
    % \tcp{find sound $\phil$ that rejects examples in $\eminusmust$ that are still in {$\phior$}} 
    % \label{Li:BeginWhileLoop}
    $\phil, \eplus, \eminus \gets \diffcolor{\synthunderproperty}(\query,  \diffcolor{\phior}, \diffcolor{\fls}, \diffcolor{\emptyset}, \diffcolor{\eminus})$ \\ % \label{Li:CallSynthPropertyOne} \\
    \If{$\diffcolor{\eplus} = \emptyset$}
    {
        $\diffcolor{\posex} \gets$ \textsc{IsSat($\diffcolor{\neg \phior \wedge \phil}$)} \\ % \hfill \tcp*{check if $\phil$ improves {$\phior$}}  
        % \label{Li:CheckForImprovement} \\
        \eIf{$\diffcolor{\posex} \neq \bot$} {
            $\diffcolor{\eplus} \gets \{ \diffcolor{\posex} \}$ \\
        }{
            \Return $\Pi$ % \tcp*{return best {\lconjunction}}
            % \label{Li:SynthPropertiesReturn}
        }
    }
    % \tcp{refine $\phil$ to reject more examples outside $\phior$ if possible}
    $\phil,\diffcolor{\_}, \diffcolor{\eminus}$ $\gets$ $\diffcolor{\synthunderproperties}(\query, \diffcolor{\fls}, \phil, \eplus, \eminus$) \\
    $\Pi \gets \Pi \cup \{ \phil \}$ \\ % \tcp*{$\phil$ is a best {\lproperty}}
    % \label{Li:UpdatePhi} \\
    $\diffcolor{\phior} \gets \diffcolor{\phior \lor}\; \phil$ \\ % \tcp*{Update conjunction}
    % \label{Li:UpdateConjunction}
}
}
\end{algorithm}

Because \synthunderproperties solves the dual problem of the one solved by \synthoverproperties, the two algorithms share the same structure.
Due to the duality
\rone the roles of positive and negative examples are inverted;
\rtwo $\tru$ is replaced by $\fls$; and
\rthree at each iteration a weakest \lunderproperty is synthesized by \synthunderproperty instead of \synthoverproperty.
These changes are highlighted in \diffcolor{violet} in Algorithm~\ref{alg:SynthesizeAllUnderProperties}.
\section{Proofs}
\label{app:proof}

\finitecompleteness*
\begin{proof}
    Because of the duality between \loverproperties and \lunderproperties, we only prove for \synthoverproperty and \synthoverproperties here (the other case is identical).
    We first prove the termination of \synthoverproperty.

    If the example domain is finite, there are only finitely many semantically inequivalent \lproperties, so the case of finite example domain can be reduced to the case of finite DSL.

    For the case of a finite DSL, the theorem can be proved by first proving \checkimpl (\Cref{Li:CallCheckSoundness}) does not return $\top$ (i.e., $\phil$ is not a consequence) only finitely many times, and then proving \cpover (\Cref{Li:CallCheckPrecision}) does not return $\top$ (i.e., $\phil$ is a consequence but not a strongest one) only finitely many times between successive instances where \checkimpl returns $\top$.

    Every time \checkimpl does not return $\top$, it returns a positive example $\posex$ that is rejected by the current \lproperty $\phil$. However, since $\posex$ is added to $\eplus$, the properties in the subsequent iterations must accept $\posex$ so that $\phil$ will not occur again as an argument of \checkimpl. Therefore, we only meet finitely many properties that are not consequences.

    We can now move to the case where an \lproperty $\phil$ reaches \cpover at line \ref{Li:CallCheckPrecision}.
    % % 
    In this case, there are two possibilities:
    \rone \cp returns a property $\phil'$ that is not a consequence (and a negative example $\negex_1$);
    \rtwo \cp returns a property $\phil'$ that is a consequence (and a negative example $\negex_2$).
    For case \rone, \checkimpl in the next iteration will not return $\top$.
    For case \rtwo, $\negex_2$ will be merged to $\eminus$ so that the properties in the subsequent iterations must reject $\negex_2$, whereas $\phil$ accepts it. Consequently, $\phil$ will not occur again as an argument of \cpover.
    Therefore, \cpover does not return $\top$ finitely many times between each time \checkimpl does not return $\top$.

    Turning to \synthoverproperties,
    since it calls \synthproperty to grow a set of incomparable \loverproperties, the fact that each call on \synthproperty terminates guarantees that \synthoverproperties always terminates.
\end{proof}

\section{Evaluation Details}
\label{App:eval}

% In this section, we first describe the benchmarks in detail and then present detailed evaluation metrics.

% We evaluated the effectiveness of \name through four case studies: specification mining (\S\ref{se:eval-specmining}), reasoning nondeterministic programs(\S\ref{se:eval-non-deter}), reasoning incorrectness triples (\S\ref{se:eval-inc-logic}), and solving games (\S\ref{se:eval-game}).

\subsection{\typeone~ Test Set: Mining Under-Approximated Specifications}
\label{App:eval-specmining}

The goal of this section is to evaluate \name's general capability to mine \textit{under-approximated specifications} for deterministic programs. 
\citet{park2023specification} mined \loverproperties for programs used in the synthesis literature. 
Their benchmark set consists of 45 programs paired with corresponding grammars that their tool \spyro uses to mine \loverproperties. 
Note that the queries for all the \spyro benchmarks \textit{do not} contain existential quantifiers.

We successfully recomputed \loverproperties produced by \spyro for all benchmarks.
However, we found that \name's implementation was on average 2.2x faster (geometric mean) than \spyro at computing \loverproperties for deterministic programs.
Although \name and \spyro implement the same algorithm for computing \loverproperties for deterministic programs, \name's implementation of \cp forces the generated example to be negative (using an assertion), whereas \spyro's implementation generates an example and then checks later if it is negative. 
The implementation avoids spurious calls to \cp and is therefore faster.

We focus on the new capability of \name to compute \lunderproperties.
To compute \lunderproperties for the deterministic benchmarks used to evaluate \spyro, we modified the top-level production of each DSL used in the \spyro evaluation to consider conjunctions instead of disjunctions---i.e., if the top-level production was of the form $P \to AP \lor AP \lor \cdots$, we replaced it with $P \to AP \land AP \land \cdots$. 
We denote the grammar before modification as $\langover$ and after modification as $\langunder$. 

Of the 45 benchmarks used when evaluating \spyro, we only consider 35 benchmarks for which the semantics of operations were expressed using \sketch.
The 10 benchmarks we don't consider are simple ones for which the semantics are expressed directly using SMT formulas instead of \sketch; \name does not support SMT semantics yet.
The \numSpec benchmarks consist of: \numSpecSyGuS programs from the CLIA track of the \sygus competition \cite{https://doi.org/10.48550/arxiv.1904.07146}, 22 programs manipulating data structures taken from the \synquid synthesizer~\cite{polikarpova2016program}, and 6 imperative programs designed by the authors of \spyro.
For the \sygus problems and imperative programs, the DSLs can express arithmetic comparison between variables---e.g. for the query $o = \iaabs{}(x)$, one would get the \loverpropertyp{\langover} $-x \ge x \Rightarrow x = -o$. 
For \synquid problems, the DSLs support common functions over the data structures appearing in each specific benchmark ---e.g., length for lists. 
For the query $l_{out} = \lreverse (l_{in})$ one would get $\size(l_{out}) = \size(l_{in})$ as one of the \loverpropertiesp{\langover}.

% \subsubsection{Quantitative Analysis}
\name could synthesize properties for 35/35 benchmarks, and guaranteed that all of them were best \lunderpropertiesp{\langunder}. \Cref{tab:benchmarks-spyro} shows the evaluation details of a few selected benchmarks.
\begin{table}[tp]
\caption{
Evaluation of \name for mining specifications. 
$|\langunder|$ is the size of grammar.
LoC is the number of lines of \sketch code used to write the semantics of programs and operators.
\#P is the number of \lunderproperties synthesized by \name.
Num and T(s) are the number of times each primitive was called and the total time it consumed, respectively.
%
% The Total. column reports the time \name spent to obtain \lunderpropertiesp{\langunder}.
% 
For \synquid benchmarks, we only show 5 with the longest running time.
% , where $\infty$ indicates a timeout.
\label{tab:benchmarks-spyro}}
{\footnotesize
\setlength{\tabcolsep}{2pt}
\begin{tabular}{ccccrrrrrrrr} 
\toprule[.1em]
\multicolumn{2}{c}{\multirow{2}{*}[-0.4ex]{Problem}}
& \multicolumn{1}{c}{\multirow{2}{*}{$|\langunder|$}}
& \multicolumn{1}{c}{\multirow{2}{*}{LoC}}
& \multicolumn{1}{c}{\multirow{2}{*}{\#P}}
& \multicolumn{2}{c}{\checkimpl} & \multicolumn{2}{c}{\cpunder} & \multicolumn{2}{c}{\synth} &   Total. \\
\cmidrule{6-12}
& & & & & \hspace{2mm}Num & T(s) & \hspace{1mm}Num & T(s) & Num & T(s)  &  T(s) \\
\midrule[.1em]
% \parbox[t]{3mm}{\multirow{20}{*}{\rotatebox[origin=c]{90}{Application 1}}} &
\parbox[t]{2mm}{\multirow{7}{*}{\rotatebox[origin=c]{90}{\sygus}}}
    & \maxtwo & $ 1.60\cdot 10^{5}$ & 7 & 2 & 13 & 0.44 & 10 & 0.82 & 5 & 0.12  & 1.51 \\
    & \maxthree & $ 8.94\cdot 10^{5}$ & 15 & 3 & 30 & 0.53 & 15 & 1.66 & 18 & 0.73  & 2.92 \\
     & \maxfour & $ 5.10\cdot 10^{8}$ & 31 & 4 & 89 & 2.02 & 34 & 18.38 & 59 & 10.90  &  31.32 \\   
     & \maxfive & $ 4.72\cdot 10^{1}$ & 63 & 5 & 189 & 5.33 & 56 & 136.77 & 138 & 150.14  &  292.26 \\
     & \diff & $ 5.68\cdot 10^{7}$ & 7 & 2 & 28 & 0.45 & 11 & 7.45 & 19 & 0.93  &  8.84 \\
     & \arraytwo & $ 1.01\cdot 10^{6}$ & 9 & 3 & 34 & 0.50 & 14 & 0.99 & 23 & 1.21  & 2.71 \\
     & \arraythree & $ 7.28\cdot 10^{8}$ & 11 & 4 & 86 & 1.39 & 27 & 14.78 & 63 & 20.55  &  36.84 \\
\cmidrule{1-12}
  \parbox[t]{2mm}{\multirow{6}{*}{\rotatebox[origin=c]{90}{Arithmetic}}}
     & \iaabs{-1} & $ 1.01\cdot 10^{6}$ & 7 & 2 & 17 & 0.27 & 10 & 0.35 & 9 & 0.20 & 0.83 \\
     & \iaabs{-2} & $ 2.90\cdot 10^{10}$ & 7 & 2 & 29 & 0.45 & 11 & 4.29 & 20 & 1.17 & 5.92 \\
% \cmidrule{2-13}
% & \parbox[t]{2mm}{\multirow{4}{*}{\rotatebox[origin=c]{90}{Arithmetic}}}    
     & \ialinsum{-1} & $ 3.01\cdot 10^{6}$ & 9 & 2 & 7 & 0.11 & 4 & 0.07 & 4 & 0.06  & 0.25 \\
     & \ialinsum{-2} & $ 2.90\cdot 10^{10}$ & 9 & 2 & 41 & 0.66 & 17 & 3.64 & 27 & 2.87 &  7.18 \\
     & \ianonlinsum{-1} & $ 3.01\cdot 10^{6}$ & 9 & 2 & 22 & 0.36 & 12 & 0.67 & 12 & 0.35 &  1.39 \\
     & \ianonlinsum{-2} & $ 1.48\cdot 10^{13}$ & 9 & 2 & 70 & 9.07 & 16 & 378.68 & 56 & 95.40 & 483.17 \\
\cmidrule{1-12}
 \parbox[t]{2mm}{\multirow{4}{*}{\rotatebox[origin=c]{90}{\synquid}}} 
  & \lappend & $ 4.97\cdot 10^{8}$ & 75 & 2 & 82 & 3.50 & 22 & 58.07 & 62 & 40.84 & 102.56 \\
  & \ldelete & $ 3.01\cdot 10^{6}$ & 83 & 2 & 51 & 1.86 & 19 & 254.67 & 34 & 20.81 & 277.39 \\
  & \ldeleteall & $ 3.01\cdot 10^{6}$ & 82 & 2 & 48 & 2.12 & 16 & 39.26 & 34 & 19.64 & 61.17 \\
  & \ldrop & $ 4.75\cdot 10^{8}$ & 75 & 2 & 79 & 2.42 & 22 & 37.67 & 59 & 39.04 & 79.16 \\
  & \qenqueue & $ 6.00\cdot 10^{5}$ & 147 & 1 & 8 & 17.98 & 5 & 48.44 & 4 & 0.56 & 67.02 \\
\bottomrule[.1em]
\bottomrule[.1em]
\end{tabular}
}
\end{table}

% \loris{fix table to have 35 or explain why only show these.}\xuanyu{todo}
% For \synquid benchmarks, \Cref{tab:benchmarks-spyro} only shows the 5 problems with the longest running time.
% \loris{I don't know if I like the "exact" column. Might be misleading}
% \loris{this contradicts what you said earlier? You said sometime it couldn't prove}
%
% Fig. ??\xuanyu{to draw} lists the \lunderproperties synthesized by \name. 
%
Overall, \name could solve each of the \numSpec benchmarks in under 7 minutes (the largest grammar contained $1.48\cdot 10^{13}$ properties). Next, we analyze the results for each subcategory separately.

\mypar{\sygus Benchmarks} 
% \name synthesized \lunderpropertiesp{\langunder} of the \numSpecSyGuS \sygus benchmarks in under 5 minutes (the largest grammar contained $7.28\cdot 10^{8}$ properties). 
%
For the 7 \sygus benchmarks, we found that the synthesized \lunderpropertiesp{\langunder}  exactly coincides with the semantics of the given queries. 
When we replicated the experiment by~\citet{park2023specification}, we could only synthesize \loverpropertiesp{\langover} for 4/7 benchmarks (like \spyro, \name failed on \maxfour, \maxfive, and \arraythree), but for these 4 benchmarks the synthesized properties also coincided with the semantics of the given queries.
The result shows that the grammar $\langunder$ and $\langover$ of the benchmarks are expressive enough to synthesize \textit{exact} approximations---i.e., the properties are semantically equivalent to the query. 
Take $o = \maxtwo(x_1, x_2)$ as an example: the synthesized \loverpropertiesp{\langover} are $\{o = x_1 \lor o = x_2, x_2 = x_1 \lor o > x_1 \lor o > x_2\}$, and the synthesized \lunderpropertiesp{\langunder} are $\{o = x_1 \land x_2 < o, o = x_2 \land x_1 < o\}$, which can be proved to be equivalent.
Due to the equivalence of the \lunderpropertiesp{\langunder} and \loverpropertiesp{\langover}, we further compared the efficiency of computing over-approximations and under-approximations for these benchmarks to assess which formulas were easier to compute. 
Although the size of grammar for \loverpropertiesp{\langover} and \lunderpropertiesp{\langunder} are the same (since we only replace $\lor$ by $\land$ at the top layer), synthesizing \lunderpropertiesp{\langunder} was on average 1.6x faster (geometric mean) than synthesizing \loverpropertiesp{\langover} for the 4 benchmarks on which both synthesis processes terminated. 
Furthermore, while \spyro (and also \name) failed to synthesize \loverpropertiesp{\langover} for $\maxfour$, $\maxfive$, and $\arraythree$ within the given timeout, \name successfully synthesized \lunderpropertiesp{\langunder} in less than 5 minutes for each of these benchmarks.

We compared the \lunderpropertiesp{\langunder} to \loverpropertiesp{\langover} to see how they differed.
For the query $o = \maxthree(x_1, x_2, x_3)$, for example, the synthesized \loverpropertiesp{\langover} consist of 5 conjuncts 
\begin{equation}
\label{eq:max3-over}
\begin{array}{c@{\hspace{4.0ex}}c@{\hspace{4.0ex}}c}
x_2 < o \lor x_1 < o \lor x_2 = x_1 &
x_3 \le o &
o = x_2 \lor o = x_1 \lor x_1 < x_3  \\
o = x_2 \lor o = x_3 \lor x_3 < x_1 &
x_2 < x_3 \lor o = x_2 \lor x_2 < x_1 
\end{array}
\end{equation}
while the synthesized \lunderpropertiesp{\langunder} consist of 3 disjuncts
\begin{equation}
\label{eq:max3-under}
\begin{array}{c@{\hspace{4.0ex}}c@{\hspace{4.0ex}}c}
x_1 = o \land x_3 \le x_1 \land x_2 \le x_1&
x_2 = o \land x_3 \le x_2 \land x_1 \le x_2&
x_3 = o \land x_1 \le x_3 \land x_2 \le x_3
\end{array}
\end{equation}
% Although it can be proved that both express the same relation between $o, x_1, x_2, x_3$, one can easily realize the \lunderpropertiesp{\langunder} exactly capture the behavior of \maxthree, of which the 3 disjuncts characterize each of 3 cases of output $o$, and thus are disjoint. 
% In contrast, getting an intuition of \maxthree~ from the \loverpropertiesp{\langover} is more difficult.

%
The \loverpropertiesp{\langover} in Eq. \ref{eq:max3-over} can be thought of as a declarative specification that \maxthree ~function much to satisfy, and in fact similar to the specifications provided in \sygus benchmarks. 
Instead, the \lunderpropertiesp{\langunder} in Eq. \ref{eq:max3-under} captures the paths of the program and the output they produce, which one would obtain via symbolic execution. 
%
% \loris{next stmt is weird. You won't be guaranteed that it's an exact path it could still be over if the gramar is not good enough}
% One could, of course, achieve a similar purpose using over-approximation by forcing only properties of the form $P \Rightarrow (o = N)$ to be expressed in $\langover$ (e.g. $(x_3 \le x_1 \land x_2 \le x_1) \Rightarrow o = x_1$ for \maxthree), but under-approximation could avoid such a direct encoding in the grammar.

To summarize, under-approximation computed semantically equivalent specifications for \sygus benchmarks, but it was faster than over-approximation.
This improvement could be attributed to the fact that fewer \lunderpropertiesp{\langunder} could capture the semantics of a program when compared to \loverpropertiesp{\langover}---e.g. 3 \lunderpropertiesp{\langunder} vs 5 \loverpropertiesp{\langover} for \maxthree.

\mypar{Imperative Program Benchmarks}
% \name synthesized \lunderpropertiesp{\langunder} for the \numSpecImp imperative programs in under 5 minutes each (the largest grammar contained $1.48\cdot 10^{13}$ properties).
These \numSpecImp benchmarks ask \name to find linear or nonlinear relations to capture the semantics of simple imperative programs. \name synthesized \lunderpropertiesp{\langunder} that exactly capture the program semantics except for \ianonlinsum{1}, for which the DSL can only describe linear relations, but the program semantics can only be captured by a quadratic relation.

% Specifically, \ianonlinsum{1} computes the sum from 1 to $x$ and thus essentially asks the query $o = \max(0, x(x+1)/2)$, but we only provide in DSL with $I ~\{\le|<|\ge| >|=|\neq\}~ I$ where $I$ derives $x$, $o$, and constants that appear in the code, so \name could only yield \lunderpropertyp{\langunder} like $(x < 1) \land (o = 0)$. Nevertheless, when we provide nonlinear relation $C_1x + C_2o + C_3x^2 + C_4 ~\{\le|<|\ge| >|=|\neq\}~ 0$ in DSL of \ianonlinsum{2}, \name successfully synthesized \lunderpropertiesp{\langunder} in \Cref{eq:nonlinsum2-under}, which captures the two branch $0$ and $x(x+1)/2$ nicely.
% \begin{equation}
% \label{eq:nonlinsum2-under}
% \begin{array}{c@{\hspace{4.0ex}}c}
% -x + 2o -x^2 =0 \land 3o - 2x^2 + 2 \ge 0&
% 3o =0 \land -4x -4o+ 1 >0
% \end{array}
% \end{equation}

\mypar{\synquid Benchmarks} 
% \name synthesized \lunderpropertiesp{\langunder} for all the \numSpecNot \synquid benchmarks in under 5 minutes each (the largest grammar contained $8.78\cdot 10^{6}$ properties).
% 
Many of the \lunderproperties synthesized by \name for this benchmark category were not particularly useful. 
Specifically, the DSLs are not expressive enough for computing useful \lunderproperties, mostly yielding trivial properties involving cases in which the data structure is empty. 
% 
% \loris{next sentence should be moved to qualitative analysis}
% \loris{you need to use l-disjunction everywhere in this sec instead of the best l-implicants}
For example, for the query $l_{out}=\lreverse(l_{in})$, \name only synthesized the \lunderpropertyp{\langunder} $\{l_{out} = l_{in} \land \size(l) \le 1\}$, which only describes the behavior of the function $\lreverse$ on lists of length less than 1. 
While the original DSLs were useful for computing consequences (e.g., $\size(l_{in}) = \size(l_{out})$), their dual versions are too weak to reason about implicants---one would instead need to talk about more specific position information of elements in lists.
The same problem holds for other data structures.

\textbf{Finding:} 
\name can mine \loverproperties and \lunderproperties that help understand program behaviors.
% 
% For \sygus benchmarks, \name synthesizes \lunderproperties more efficiently than \loverproperties when the DSL is expressive enough to capture the exact semantics of the program. 
%
If the DSL is not expressive enough, \name cannot produce useful \lunderproperties.

\subsection{\typetwo~ Test Set: Simple nondeterministic programs}
\label{App:eval-non-deter}

In this section, we evaluate \name's ability to reason about nondeterministic programs.
Nondeterminism can be modeled using existential quantifiers, a key innovation of the \framework framework.
Consider a program that takes as input an integer $x$ and then adds a nondeterministically chosen positive number to it. 
Such a program can be modeled using the query \inlinef{$\exists \ecolor{h}.~\ucolor{o} = \ucolor{x} + \ecolor{h}$} where $\ecolor{h}$ is a positive number. 
A consequence of this query is the property \inlinef{$\ucolor{o} \ge \ucolor{x}$}, which holds for all possible values of $\ecolor{h}$. 
An implicant of the same query is the property \inlinef{$\ucolor{o} = \ucolor{x} + 17$}, which holds when $\ecolor{h} = 17$. 

From these two example properties (which can be synthesized by \name when given proper DSLs), we observe that for nondeterministic programs, consequences hold for \textit{every possible} nondeterministic choice (the demonic perspective of nondeterminism), whereas implicants hold for \textit{at least one} nondeterministic choice (the angelic perspective). 
To model a nondeterministic program, we introduce an array of nondeterministic values $\ecolor{H}$ into the query as an existentially quantified variable. Whenever the program execution reaches a non-deterministic command (e.g. \texttt{nondet()} is called), the command takes the next value of the array $\ecolor{H}$.

% Besides showing how \name can compute general properties of nondeterministic programs, we also show how \name can compute \lwpo and \lwpp as we described in \Cref{sec:reolation-to-program-logics}.

% and show \name's flexibility compared to other incorrectness reasoning techniques. Also, concurrency is an important type of nondeterminism. We evaluate \name's ability to reason about concurrent programs in \Cref{se:concurrency} separately. 

\subsubsection{Benchmark Selection and Quantitative Results} 
% To assess \name's ability to synthesize both types of properties, we created \numNond benchmarks involving nondeterminism: \rone \numNondSim are simple nondeterministic functions that manipulate integers or bit-vectors, and \rtwo  \numNondConc are simple version of classic concurrent programs from an operating system textbook~\cite{ArpaciDusseau23-Book}.
We collected 12 benchmarks involving nondeterminism: \rone we created 4 nondeterministic sorting algorithms where the goal is to synthesize properties that characterize when the algorithm works or does not work as intended;
\rtwo we collected  the 4 nondeterministic  recursive programs for which the goal is to synthesize a polynomial invariant by~\citet{Chatterjee2020Polynomial} (all other benchmarks by~\citeauthor{Chatterjee2020Polynomial} are deterministic), and
\rthree we collected 4 SV-COMP~\cite{SVCOMP24} benchmarks in the bitvector category where nondeterministic values are used to model unknown inputs and parameters that determine control flow in the program (all others SV-COMP benchmarks are deterministic or unsuitable for specification synthesis).

\begin{wraptable}{r}{0.41\textwidth}
\vspace{-10pt}
\caption{
Evaluation results of \typetwo~test set. 
% \xuanyuchanged{The LoC column shows the number of code lines that are used to write the semantics of programs and operators in \sketch programming language.}
%
$|\exists|$ is the size of the domain of the existentially quantified variables.  
%
% \loris{for time it should be  T(s)}
\#P and T(s) are the number of properties and synthesis time for both \loverproperties and \lunderproperties (- denotes timeouts).
%
% Incorrectness reasoning does not require synthesizing \loverproperties.
}
\label{tab:benchmarks-nondeter}

\vspace{-5pt}
{\footnotesize
\setlength{\tabcolsep}{2pt}
\begin{tabular}{ccrrrrrr} 
\toprule[.1em]
\multicolumn{2}{c}{\multirow{2}{*}[-0.4ex]{Problem}}
& \multicolumn{1}{c}{\multirow{2}{*}{LoC}}
& \multicolumn{1}{c}{\multirow{2}{*}{$|\exists|$}}
 & \multicolumn{2}{c}{$\lang$-cons.} & \multicolumn{2}{c}{$\lang$-impl.}\\
    \cmidrule{5-8}
      & & & & \#P & T(s) & \#P & T(s) \\
\midrule[.1em]
 \parbox[t]{2mm}{\multirow{15}{*}{\rotatebox[origin=c]{90}{Nondeterminism}}}  
      & \shuff{3}   & 33 & $4$   & 1  & 0.14   & 1  & 1.11   \\
     & \shuff{4}   & 33 & $27$   & 1 & 0.30 & 1 & 1.17  \\
    & \shuff{5}   & 33 & $256$   & 1  & 33.19   & 1  & 15.26   \\
     & \rsum         & 28 & ${\sim}10^{5}$            & 1  & 5.51   & 1  & 10.29  \\
     & \rsquaresum   & 28 & ${\sim}10^{5}$            & 1  & 11.25  & 2  & 4.53   \\
     & \rcubicsum    & 28 & ${\sim}10^{5}$            & 1  & 4.07   & 2  & 4.80   \\
     & \mergesort    & 31 & ${\sim}10^{7}$  & -  & -      & -  & -      \\
     & \jain{1}      & 27 & ${\sim}10^{12}$ & 1  & 0.91   & 1  & 0.63   \\
     & \jain{2}      & 29 & ${\sim}10^{12}$ & 2  & 26.94  & 1  & 27.19  \\
     & \jain{4}      & 34 & ${\sim}10^{12}$ & 3  & 17.34  & 1  & 37.10  \\
     & \jain{6}      & 34 & ${\sim}10^{12}$ & 3  & 161.80 & 1  & 131.91 \\
     & \wppBubble{3} & 33 & \numf{64}            & 6  & 9.63   & 8  & 7.34   \\
     & \wppBubble{4} & 33 & \numf{6561}          & 17 & 392.00 & 22 & 136.56 \\
     & \wppSwap{3}   & 33 & ${\sim}10^{7}$   & 7  & 9.30   & 6  & 7.27   \\
     & \wppSwap{4}   & 33 & ${\sim}10^{9}$   & 24 & 164.61 & 19 & 69.77  \\
\bottomrule[.1em]
\bottomrule[.1em]
\end{tabular}
}
\vspace{-5pt}
\end{wraptable}

For each category of benchmarks, we acted as the users of \framework and designed the DSLs following such a methodology: we included constants, free variables that appear in the query, basic arithmetic operations between them, and also several atomic functions and predicates that were related to the problem or we expected the resulting properties to contain. We describe the details of the DSLs in \Cref{se:eval-non-deter-qual}.

\name synthesizes \loverpropertiesp{\langover} and \lunderpropertiesp{\langunder} for 11/12 benchmarks. For those solved benchmarks, \name takes less than 400 seconds to synthesize \loverpropertiesp{\langover} and less than 140 seconds to synthesize \lunderpropertiesp{\langunder}. 
Evaluation details are shown in \Cref{tab:benchmarks-nondeter}.

% \xuanyu{change another example to explain this} 
Because these benchmarks involve existential quantifiers, they require \Cref{alg:CEGIS-Loop,alg:CEGQI-Loop-Over}, the CEGQI algorithms presented in \Cref{se:cegqi}.
\Cref{alg:CEGIS-Loop,alg:CEGQI-Loop-Over} avoided exploring a large portion of the space of values for existentially quantified variables---e.g., for the benchmark \wppBubble{4} each call to \Cref{alg:CEGIS-Loop,alg:CEGQI-Loop-Over} terminated with at most 13 instances $\ecolor{h}$ in $H$, instead of considering all $3^8 = \num{6561}$ nondeterministic instances.
When we reuse the instance $\ecolor{h}$ generated across all calls to \checkex, the total number was only 17.
% 
% \loris{next setnence needs to be made valid for the fact that we have two algo 3 and 4, though I think it already is, but double check}
This result inspired us to create a version of our algorithm that caches instances and reuses them across different calls to \checkex.
Caching and reusing instances results, on average, in a \cachespd speedup.
% \loris{I don't understand above paragraph compares what to what?}
% \xuanyu{we can use ground truth generated in across different CEGQI loops. Reusing them does improve the speed than not reusing.}
% \loris{before submittin non-det or nondet, pick one}

% \Cref{tab:properties-nondeter-under} \xuanyu{to draw} shows the properties synthesized by \name for those nondeterministic benchmarks. 
% Next, we discuss the benchmark design and the quality of synthesized properties for each subcategory separately.

\subsubsection{Qualitative Evaluation}
\label{se:eval-non-deter-qual}
We discuss each benchmark in detail.

\mypar{Random Shuffle} 
The \shuff{\text{$n$}} 
($n\in\{3,4,5\}$) benchmarks model a random shuffle function that, given an array $l = [a_1, \cdots, a_{n}]$, for each element $a_i$, nondeterministically picks a value $j<i]$ and swaps $a_i$ and $a_j$. 
When encoding the problem as query $\exists \ecolor{h}.~ \ucolor{l_{out}} = \shuff{}(\ucolor{l_{in}}, \ecolor{h})$ and providing the DSL $\lang$ in \Cref{eq:shuffle-grammar}, \name synthesized \inlinef{$\sort(\ucolor{l_{in}}) = \sort(\ucolor{l_{out}})$} as both \loverproperty and \lunderproperty, which essentially states that the output should be a permutation of the input, and any permutation can be obtained as the output. 

\begin{equation}
\label{eq:shuffle-grammar}
\begin{array}{rcl}
    S &:= & L = L \\
     L & := & \ucolor{l_{in}} \mid \ucolor{l_{out}} \mid \sort(\ucolor{l_{in}}) \mid \sort(\ucolor{l_{out}}) 
\end{array}
\end{equation}

\mypar{Nondeterministic Sorting}
The \wppBubble{\text{$l$}} ($l\in\{3,4\}$) benchmarks model a program that, given an array $[\ucolor{a_1}, \cdots, \ucolor{a_{l}}]$, repeatedly nondeterministically swaps two neighboring elements at most $\ucolor{n}$ time and returns whether the final array is sorted (stored in Boolean variable $\ucolor{ok}$).
When encoding the problem as the query $\exists \ecolor{H}.~\ucolor{ok} = bubble(\ucolor{a_1},\cdots, \ucolor{a_l}, \ecolor{H})$ and supplying the DSLs $\langover$ (rooted at $B_\langover$) and $\langunder$ (rooted at $B_\langunder$) in \Cref{eq:bubble-grammar},
% which can describe relations between elements of the array, 
for \wppBubble{3} \name computes 7 \lunderpropertiesp{\langunder}, including
\inlinef{$\ucolor{n} {\ge} 1 \land \ucolor{a_1} {\le} \ucolor{a_3} \land \ucolor{ok}$}, which states that if the first and third elements are already in the right order, we can make the entire array sorted using one swap. 
\name also synthesizes 5 \loverpropertiesp{\langover}. For example, the \loverpropertyp{\langover} \inlinef{$(\ucolor{n} < 3 \land \ucolor{a_1} > \ucolor{a_2} \land \ucolor{a_2} > \ucolor{a_3}) \Rightarrow \neg \ucolor{ok} $} tells us that, if the array is descending, we cannot make it sorted using fewer than $3$ swaps.

\begin{equation}
\label{eq:bubble-grammar}
\begin{array}{rcl}
     % \nonconj_0 & := & \textit{sorted} \land \nonconj \\
    B_{\langover} &:= & G \Rightarrow D \qquad \qquad B_{\langunder} \ \ :=\ \ G \land D \\
     G & := & \top \mid \nonap \mid \nonap \land \nonap \mid \cdots \mid \nonap \land \nonap \land \nonap \land \nonap \land \nonap\\
     \nonap & := & \nonint ~\{\le \mid < \mid =\mid \neq \}~ \nonint \mid \ucolor{n} \{<\mid \ge\} \{0\mid 1\mid \cdots \mid l^2\}  \\
     \nonint & := & \ucolor{a_1} \mid \cdots \mid \ucolor{a_{l}}   \\
     D & := & \ucolor{ok} \mid \neg \ucolor{ok} \\
\end{array}
\end{equation}

The benchmarks \wppSwap{3} and \wppSwap{4} consider a similar problem to \wppBubble{$l$}, but allow swaps of arbitrary elements in the array instead of only neighboring ones.
% \loris{tomorrow put formulas for this in appendix}

% \input{table-nondeter}
\mypar{Polynomial invariants~\cite{Chatterjee2020Polynomial}} 
This category contains 4 benchmarks: \rsum, \rsquaresum, \rcubicsum, and \mergesort. The DSLs  contain polynomials of a certain degree.

The  \rsum~benchmark models a program that nondeterminisitcally adds a subset of the numbers from $1$ to $n$, i.e., $\ucolor{s} = \sum_{k=1}^{\ucolor{n}} \texttt{nondet()} \cdot k$, which is encoded as a query \inlinef{$\exists \ecolor{H}.~\ucolor{s} = rsum(\ucolor{n}, \ecolor{H})$}.
When given a DSL that can describe quadratic functions over $\ucolor{n}$, \name produces $\ucolor{n}^2 + \ucolor{n} \ge 2\ucolor{s} \ge 0$ as both the only \loverpropertyp{\langover} and the only \lunderpropertyp{\langunder}. 
The synthesized formula tells us the summation is not greater than $(\ucolor{n}^2 + \ucolor{n})/2$ and all natural numbers that are not greater than $(\ucolor{n}^2 + \ucolor{n})/2$ can be obtained as the result.
Similarly, the benchmarks \rsquaresum~and \rcubicsum~model $\sum_{k=1}^n \texttt{nondet()} \cdot k^2$ and $\sum_{k=1}^n \texttt{nondet()} \cdot k^3$.
\name synthesizes \inlinef{$2\ucolor{n}^3 + 3\ucolor{n}^2 + \ucolor{n} + 2 \ge 6\ucolor{s}\ge 0$} and \inlinef{$2\ucolor{n}^4 + 2\ucolor{n} + 4 > 4\ucolor{s}\ge 0$} as the only \loverpropertyp{\langover} for \rsquaresum~ and \rcubicsum, respectively. 
Unlike for \rsum, not all natural numbers in the range induced by the synthesized \loverpropertyp{\langover} can be obtained, so \name will not yield the same \lunderpropertyp{\langunder}. Instead, for each benchmark \name synthesizes two \lunderpropertiesp{\langunder} that describe particular nondeterministic choices, e.g.,  $\{\ucolor{s} = 0, \ucolor{s} = \ucolor{n}^2\}$ for \rsquaresum.

\mergesort takes an array $\ecolor{A}$ with nondeterministic values as input and computes the number of inverse pairs of the subarray $\ecolor{A}_{\ucolor{s}..\ucolor{e}}$ using merge sort. \name should
synthesize an \loverpropertyp{\langover} like $\ucolor{n} \le (\ucolor{e}-\ucolor{s})(\ucolor{e}-\ucolor{s}+1)/2$ that explains the maximal number of inverse pairs, but due to the nested recursion/loops, \name fails to return any properties within the time limit.

\mypar{Benchmarks from SV-COMP~\cite{SVCOMP24}} These 4 benchmarks model programs that add a constant to each input variable repeatedly. E.g., \jain{2} models the following program:
\begin{lstlisting}[language=C, tabsize=3, 
    basicstyle= \tt \footnotesize, 
    keywordstyle=\color{purple}\bfseries, 
    commentstyle=\color{gray}, 
    xleftmargin=0em, 
    escapeinside=``,
    numbers = none,
    numbersep = 1pt,
    ]
    x = 1; y = 1; 
    while(nondet()) {
        x = x + 2 * nondet(); 
        y = y + 2 * nondet(); }
\end{lstlisting}
The remaining \jain{} benchmarks differ from \jain{2} in the number of variables, the initial values, and the values that are added.
We designed DSLs with basic arithmetic operators including modulo, from which \name can discover properties about the final values of the variables. 
For \jain{2}, \name synthesizes $\{\ucolor{y} \bmod 2 = 1, (\ucolor{x}+\ucolor{y}) \bmod 2 = 0 \}$ as \loverpropertiesp{\langover} and $\{\ucolor{x} \bmod 2 = 1 \land \ucolor{y} \bmod 2 = 1\}$ as \lunderpropertiesp{\langunder}. 
Note that the \loverpropertiesp{\langover} consists of two properties and \lunderpropertiesp{\langunder} only has a single one; however, these properties are equivalent. 
For the \jain{} benchmarks, synthesis of \lunderpropertiesp{\langunder} takes on average 1.38x longer than synthesis of \loverpropertiesp{\langover}; for the former, \name has to discover the entire property at once.

% \textbf{Findings:}
% \name can synthesize both \loverproperties and \lunderproperties, which can be useful for understanding both angelic and demonic nondeterministic program behavior.

\subsection{Application 1: Incorrectness Reasoning}
\label{App:eval-incorrectness}

\paragraph{Expressible amounts in two kinds of coins}
% \loris{this para needs to connect somehow to some of the benchmarks? See how the Hash one starts Also the title is a bit weird now}
The \coin benchmarks non-deterministically model the possible dollar amounts that can be represented using coins of two values.
% \subsubsection{Flexible Reasoning with \framework}

\citet{Peter2019Incorrectness} provides syntax-direct rules to find the weakest under-approximate postcondition and the weakest possible precondition, but this approach has to explicitly unroll loops and introduce existential quantifiers to deal with assignments and nondeterminism, thus resulting in predicates that might be hard to dispatch to a constraint solver.
Using \framework, one can instead use the DSL $\lang$ to customize what properties they are interested in obtaining in the \lwpo and \lwpp.

For example, the following function \exname{coin} takes two integers $a$ and $b$ that are co-prime (shown in \exname{presumes} part) as input, then nondeterministically chooses two non-negative integers $x$ and $y$, and finally returns $ax + by$. 
The more intuitive interpretation of the program is that it represents all amounts that can be expressed using only coins of value $a$ and $b$.
\begin{lstlisting}[language=C, tabsize=3, 
    basicstyle= \tt \footnotesize, 
    keywordstyle=\color{purple}\bfseries, 
    commentstyle=\color{gray}, 
    xleftmargin=0em, 
    escapeinside=``,
    numbers = left,
    numbersep = 1pt,
    ]
    int coin(int a, int b) 
    /*  presumes: [gcd(a,b)==1]
        achieve1: [gcd(a,b)==1 /\ exists x>=0, y>=0. r==a*x+b*y] 
        achieve2: [gcd(a,b)==1 /\ r==a]
        achieve3: [gcd(a,b)==1 /\ r>a*b-a-b]  */
    {
        int x = nondet(); assume(x >= 0);
        int y = nondet(); assume(y >= 0);
        return a * x + b * y;
    }
\end{lstlisting}
The predicates \exname{achieve1}, \exname{achieve2}, and \exname{achieve3} are all valid under-approximation postconditions. 
\exname{achieve1} is the one obtained using the derivation rules by \citet{Peter2019Incorrectness}: It is precise but has an existential quantifier and multiplication, which make it hard to check in later reasoning. 
\exname{achieve2} could be obtained by a dynamic symbolic execution approach~\cite{Cadar2013symexec} that concretizes $x = 1$ and $y = 0$; this postcondition is valid but less precise than \exname{achieve1}.

% We briefly illustrate how to encode the problem of computing the \lwpo for this example in the \framework framework.
% %
% We add an output variable $\prevar$ to indicate whether the precondition is satisfied and existentially quantified variables 
% %\loris{do we call them param/hidden? Pick one before submitting} 
% $\ecolor{x}$ and $\ecolor{y}$ to model the nondeterministic choices. 
% %
% For the example above, the query asked by \name is the following:
% \begin{equation}
% \label{eq:coin-wr-query}
%     \exists \ecolor{x}, \ecolor{y}. (\prevar, \ucolor{r}) = \exname{coin}(\ucolor{a}, \ucolor{b}, \ecolor{x}, \ecolor{y})
% \end{equation}
%
The flexibility of \framework allows us to modify the DSL and not be tied to any specific rule derivation technique. 
Consider for example a situation in which we are not interested in the actual relation captured by the program, but are just interested in identifying a lower bound (or upper bound) above (or below) which all program outcomes can be effectively produced---i.e., an under-approximation of the output range of the function. 
For example, one may care about after some nondeterministic perturbation of the initial state, which states within a certain distance could all be possible results.
% \loris{something to do with smart contracts or RaML or some of these resource counting languages} \xuanyu{I think there's a slight difference between resource analysis and this example --- the former only asks for a symbolic lowerbound, but here every number greater than that lower bound should be able to be expressed. }\loris{write a sentence that you think makes sense}
%

In terms of \exname{coin}, one can look for such a lower bound using \name, by supplying the following DSL that has the bounding predicate $\ucolor{r} > N_0$ at the top level and such that nonterminal $N_0$ can derive a quadratic expression containing $\ucolor{a}$ and $\ucolor{b}$:
% \loris{can probably remove nonterminal C since it just has one prod} \xuanyu{It would be better to keep it since we later talked about $C_0 := t_q \land C$?}
\begin{equation}
\label{eq:coin-grammar}
\begin{array}{rcl}
     % C_0& := &\prevar \land C\\ 
     C & := & \ucolor{r} > N_0  \\
     N_0 & := & N_1 \mid N_1 + N_1 \mid N_1 - N_1 \\
     N_1 & := & \nonint \mid \nonint + \nonint \mid \nonint - \nonint \\
     \nonint & := & \ucolor{a} \mid \ucolor{b} \mid \ucolor{ab} \mid \ucolor{a}^2  \mid \ucolor{b}^2 \mid 1 \mid 0
\end{array}
\end{equation}
% The first production $C_0 := \prevar \land C$ in the DSL is simply added so that once \name synthesizes the \lunderpropertyp{\langunder} \inlinef{$\prevar \land \ucolor{r} > \ucolor{ab} - \ucolor{a} - \ucolor{b}$}, we can extract the right part as the actual \lwpp. (Similarly, \name can solve the problem of synthesizing \lwpp by adding $\postvar$ to indicate whether the postcondition is satisfied, and adding $C_0 := \postvar \land C$ into the DSL.)

For the DSL in \Cref{eq:coin-grammar}, \name will synthesize the \lwpo \exname{achieve3}, i.e., \inlinef{$\gcd(a,b) = 1 \land r > ab-a-b$}, which states that all numbers greater than $ab - a - b$ can be produced by \exname{coin} assuming $\textrm{gcd}(a, b) = 1$. 
%
% \loris{not sure I buy following sentence, you never said what the needs were. Make the case for application you mentioned early.}
Compared to \exname{achieve1} and \exname{achieve2}, \exname{achieve3} guarantees a certain degree of precision and meanwhile meets the needs to obtain a lower bound.

\paragraph{Conditions lead to a hash collision}
% The \wppHash{} benchmark illustrates \name's ability to find \lwpp.
% The query of the \wppHash{} benchmark models a parametric hash function that is applied to a set of integers
The \wppHash{} benchmark models the condition of hash collision after applying a parametric hash function to a set of integers.
The parametric hash function is defined as $f[\ecolor{a}] (x) = \ecolor{a}x \bmod \ucolor{M}$, where $\ecolor{a} \in \{1,\cdots, M-1\}$ is the parameter we can instantiate the function with. 
We also have a set of integers $\ucolor{S}$ to which we want to apply the hash function. 
% 
% The set of integers obtained after applying the hash function is $\ucolor{S_o} = \exname{map}(f[\ecolor{a}], \ecolor{S})$, and we say there is a hash collision when $\exname{size}(\ucolor{S}) > \exname{size}(\ecolor{S_o})$ \loris{colors are inverted in this last formula}.

% \loris{writing feels a bit weird here. We are talking about queries, but technically we have defined pre/post in sec 3. Why can't we just say that's waht we compute? Do we need encoding again? Can't we just write program, Q and say we are computing lwpp(s,q)}
To identify what kind of set $\ucolor{S}$ will possibly lead to a hash collision, we compute the \lwpp $\wpprelang(\ecolor{S_o} = \exname{map}(f[\ecolor{a}], \ucolor{S}), \exname{size}(\ucolor{S}) > \exname{size}(\ecolor{S_o}))$.
% To identify what kind of set $\ucolor{S}$ will possibly lead to a hash collision, we can consider a query $\exists \ecolor{a}.~\ecolor{S_o} = \exname{map}(f[\ecolor{a}], \ucolor{S})$, with the postcondition of hash collision, i.e., $\exname{size}(\ucolor{S}) > \exname{size}(\ecolor{S_o})$. 
% 
The following DSL $\lang$ is intended to capture the relation between $\ucolor{S}$ and $\ucolor{M}$ in the \lwpp:
\begin{equation}
\label{eq:wpp-hash-under}    
\begin{array}{rcl}
    D  & := &  \bot \mid AP \mid AP \land AP \mid AP \land AP \land AP\\
    AP &:= & \exname{isPrime}(\ucolor{M}) \mid \neg\exname{isPrime}(\ucolor{M}) \mid N \{\le \mid < \mid = \mid \neq \} N  \\
    N & := & \exname{size}(\ucolor{S}) \mid \exname{modsize}(\ucolor{S}, \ucolor{M}) \mid \ucolor{M} \mid 0 \mid 1 \\
\end{array}
\end{equation}
The function $\exname{modsize}(\ucolor{S}, \ucolor{M})$ computes the size of the set obtained by taking the $\ucolor{M}$-modulus for each number in $\ucolor{S}$. 

Using the DSL from \Cref{eq:wpp-hash-under}, \name synthesizes the following \lwpp:
\begin{equation}
% \label{eq:wpp-hash-over-result}
\begin{array}{c@{\hspace{4.0ex}}c}
    \exname{size}(\ucolor{S}) > \exname{modsize}(\ucolor{S}, \ucolor{M}) &
    \exname{size}(\ucolor{S}) \ge \ucolor{M} \land \neg \isPrime(\ucolor{M})  
\end{array}
\end{equation}

The first \lunderproperty is a valid possible precondition since it implies that there are at least two integers in $\ucolor{S}$ that are congruent w.r.t. $\ucolor{M}$, which will be hashed to the same value.

The second \lunderproperty is also a valid one as twofold: \rone if the size of $\ucolor{S}$ is larger than $\ucolor{M}$, there exists two integers in $\ucolor{S}$ are congruent and thus collide. \rtwo if the size of $\ucolor{S}$ is equal to $\ucolor{M}$, and meanwhile $\ucolor{M}$ is not prime, there always exists a bad parameter $\ecolor{a}$ that is not coprime with $\ucolor{M}$ and therefore can cause collision. 
% \loris{why isn't S=M? It feels like >=M is always enough? I think I get it, but you might want to eeplain clearly the two cases it captures}
% 

By following the syntax-directed rules proposed by~\citet{hoare78properties} to compute the weakest possible precondition, we would get the predicates \inlinef{$\exists \ecolor{a}.~\exname{size}(S) = \exname{size}(\exname{map}(f[\ecolor{a}], \ucolor{S}))$}, which still contains a quantifier and is effectively the same as the original query we were asking---i.e., it does not help us understand the program behavior. 

% \loris{somewhere in this sec, you need to make a case that syntax-directed rules from Ohearn will not give you this analysis, but not sure how. They will technically give you some formula that is very strong and you can then show that this are implied, but you'd have to still come up with them magically}

Furthermore, if we are interested in what set $\ucolor{S}$ will possibly \textit{not} lead to a hash collision, we could set the postcondition as $\exname{size}(\ucolor{S}) = \exname{size}(\ecolor{S_o})$,
and \name will synthesize the \lwpp in same DSL from \Cref{eq:wpp-hash-under} as follows:
\begin{equation}
% \label{eq:wpp-hash-under-result}
\begin{array}{c}
    \exname{size}(\ucolor{S}) = \exname{modsize}(\ucolor{S}, \ucolor{M}) 
\end{array}
\end{equation}
which means no two integers in $\ucolor{S}$ are congruent w.r.t $\ucolor{M}$. 
This is a valid possible precondition since when it is satisfied, there is always \textit{some} parameter $\ecolor{a}$ (e.g., 1) such that $\gcd(\ecolor{a}, \ucolor{M}) = 1$, and thus one can prevent hash collision.

\subsection{Application 2: Reasoning about Concurrent Programs}
\label{App:eval-concurrency}

\paragraph{Describing Sources of Deadlock}
The \philosopher~benchmark encodes the dining-philosophers problem where $N$ ``philosophers'' are sitting around a table, and between each pair of philosophers is a single fork (and thus, $N$ total). 
Each philosopher alternatively thinks and eats. 
To eat, a philosopher needs two forks, both the one on the left and the one on the right. 
When finishing eating and back to thinking, they will put down both forks. 
We say there is a deadlock when all philosophers want to eat but cannot get both forks because every philosopher is holding a single fork.

In this example, the so-called \textit{circular wait} is a necessary condition for deadlock, in which there exists a circular chain of threads such that each thread holds resources that are being requested by the next thread in the chain. 
Circular wait can be prevented by adjusting the order in which each thread requests resources.
In terms of \philosopher, each philosopher could either take the left fork first or the right fork first. 

We show \name can be used to understand what execution orders affect whether a deadlock happens.
To do so, we model this problem as query $\exists \ecolor{h}.~\ucolor{dl} = schedule(\ucolor{o_1}, \cdots, \ucolor{o_N}, \ecolor{h})$, where $\ucolor{o_i}\in \{L,R\}$ indicates which fork the philosopher $i$ always takes first. 
We then supply the following two DSLs (the one rooted at nonterminal $B_{\langover}$ is for \loverpropertiesp{\langover}, whereas the one rooted at 
$B_{\langunder}$ is for \lunderpropertiesp{\langunder}):
\begin{equation}
\label{eq:philosopher-grammar-app}
\begin{array}{rcl}
     B_{\langover} &:= & G \Rightarrow D \qquad \qquad B_{\langunder} \ \ :=\ \ G \land D \\
     G & := & \top \mid \nonap \mid \nonap \land \nonap \mid \cdots \mid \nonap \land \nonap \land \nonap \land \nonap \land \nonap\\
     \nonap & := & O = L \mid O = R \\
     O & := & \ucolor{o_1} \mid \cdots \mid \ucolor{o_N} \\
     R & := & \ucolor{dl} \mid \neg \ucolor{dl}
\end{array}
\end{equation}
For the case involving three threads/philosophers ($N=3$), \name synthesizes the following \loverpropertiesp{\langover}, which informally state that deadlock can be prevented by having two of the threads disagree on their fork choice:
\begin{equation}
% \label{eq:philosopher-over-result}
\begin{array}{c@{\hspace{4.0ex}}c@{\hspace{4.0ex}}c}
    (\ucolor{o_0} = L \land \ucolor{o_2} = R) \Rightarrow \neg \ucolor{dl} &
    (\ucolor{o_2} = L \land \ucolor{o_1} = R) \Rightarrow \neg \ucolor{dl} &
    (\ucolor{o_1} = L \land \ucolor{o_0} = R) \Rightarrow \neg \ucolor{dl} 
\end{array}
\end{equation}

For the same $N$, \name also synthesizes the following \lunderpropertiesp{\langunder}, which 
exactly characterize the two cases in which a deadlock can happen (first two properties) and also capture that there exists an execution that does not lead to a deadlock (last property).
\begin{equation}
% \label{eq:philosopher-under-result}
\begin{array}{c@{\hspace{4.0ex}}c@{\hspace{4.0ex}}c}
    \ucolor{o_0} = L \land \ucolor{o_1} = L \land \ucolor{o_2} = L \land \ucolor{dl} &
    \ucolor{o_0} = R \land \ucolor{o_1} = R \land \ucolor{o_2} = R \land \ucolor{dl} &
    \neg \ucolor{dl}
\end{array}
\end{equation}

% The 4 \resource{} benchmarks are instances of the example we discussed at the beginning of this section, of which $(T, M, \text{length of the request list})$ are $(2,2,2)$, $(2,2,4)$, $(3,2,4)$, and $(2, 3, 8)$. Even for the hardest instance of $(2, 3, 8)$, \name synthesized the best \loverpropertiesp{\langover} within 160 seconds (from $4.37 \cdot 10^{12}$ properties), and the best \lunderpropertiesp{\langunder} within 60 seconds (from $2.75 \cdot 10^{11}$ properties).

% For all those 4 benchmarks, synthesized \loverpropertiesp{\langover} exactly captures the conditions that \textit{must/must not} lead to a deadlock, and \lunderpropertiesp{\langunder} captures the conditions that \textit{may/may not} lead to a deadlock.

The 4 \resource{} benchmarks focus on how the amount of resources affects the deadlock. Consider a simple resource allocator that contains $M$ types of resources $R_{1},\cdots R_{M}$, and initially has $n_{i}$ units of resource $R_i$.
The allocator receives $T$ threads, each containing a list of resources the thread needs and in what order, and at each step, it needs to decide which next resource of each thread should be allocated.
Once all the resources in the list are allocated, the thread completes its job and releases them altogether. 
However, if a request cannot be fulfilled due to the lack of resources of that type, the thread waits. 
We say the allocator is in a deadlock when multiple threads are waiting and no progress can be made.

% \loris{not sure if it makes sense to have this example here now that we wrote the section as a case study, maybe move below where it belongs}
We show how \name tells us what resource amounts never lead to a deadlock and what resource amounts possibly cause a deadlock. The benchmark \resource{2} is a case where $T = 2$ and $M=2$, and where thread-1 requests resources $[R_1, R_2, R_1, R_2]$ and thread-2 requests resources $[R_2, R_1, R_2, R_1]$. 
In \framework, a completely nondeterministic allocator that could allocate any resource to any waiting thread could be modeled as the query \inlinef{$\exists \ecolor{h}.~ \ucolor{dl} = schedule(\ucolor{n_1}, \ucolor{n_2}, \ecolor{h})$}, where $\ucolor{dl}$ is a Boolean variable that indicates whether deadlock happens, and the $\ecolor{h}$ is an array used to model the sequence of nondeterministic choices during scheduling. 
% 
% Concretely, the i-th element \inlinef{$\ecolor{h_i} \in \{1, 2\}$} represents whether in round $i$ the allocator should consider the next request of thread-$h_i$, and if that request cannot be fulfilled, the allocator turns to the other thread. 
\Cref{fig:allocator} shows the actual \sketch program used to define the semantics of the scheduler.

We supply \name with a similar DSL to \Cref{eq:philosopher-grammar-app}, with replace the production rule for $AP$ by \inlinef{$AP := N \{<|\le|=\} N$}, where $N$ can derive every $\ucolor{n_i}$ and integer constants. \name synthesizes the \loverpropertiesp{\langover} in \Cref{eq:resource2-over-result} that tells us about resource amounts for which a deadlock must or must not happen. For example, the third \loverpropertyp{\langover} states that ``if there are more than 2 units of $R_1$ and more than 3 units of $R_2$, \textit{no scheduling order} can lead to a deadlock.'' 
\begin{equation}
\label{eq:resource2-over-result}
\begin{array}{c@{\hspace{4.0ex}}c@{\hspace{4.0ex}}c@{\hspace{4.0ex}}c}
    \ucolor{n_1} \le 1 \Rightarrow \ucolor{dl} &
    \ucolor{n_2} \le 1 \Rightarrow \ucolor{dl} &
    (\ucolor{n_1} \ge 2 \land \ucolor{n_2} \ge 3) \Rightarrow \neg\ucolor{dl} &
    (\ucolor{n_1} \ge 3 \land \ucolor{n_2} \ge 2) \Rightarrow \neg\ucolor{dl} 
\end{array}
\end{equation}
For the same problem, \name synthesizes the \lunderpropertiesp{\langunder} in \ref{eq:resource2-over-result} that tells us about resource amounts for which a deadlock may or may not happen. For example, the third \lunderpropertyp{\langunder} states that ``if both types of resources are available in a quantity no more than $3$, \textit{there exists a scheduling order} that leads to a deadlock.'' 
\begin{equation}
\label{eq:resource2-under-result}
\begin{array}{c@{\hspace{4.0ex}}c@{\hspace{4.0ex}}c@{\hspace{4.0ex}}c}
    \ucolor{n_1} \le 2 \land \ucolor{dl} &
    \ucolor{n_2} \le 2 \land \ucolor{dl} &
    \ucolor{n_1} \le 3 \land \ucolor{n_2} \le 3 \land \ucolor{dl} &
    \ucolor{n_1} \ge 2 \land \ucolor{n_2} \ge 2 \land \neg \ucolor{dl} 
\end{array}
\end{equation}

All \resource{} benchmarks are instances of the problem above, of which $(T, M, \text{length of the request list})$ are $(2,2,2)$, $(2,2,4)$, $(3,2,4)$, and $(2, 3, 8)$. Even for the hardest instance of $(2, 3, 8)$, \name synthesized the best \loverpropertiesp{\langover} within 160 seconds (from $4.37 \cdot 10^{12}$ properties), and the best \lunderpropertiesp{\langunder} within 60 seconds (from $2.75 \cdot 10^{11}$ properties)

\begin{figure}[!thbp]
\begin{lstlisting}[language=C++, tabsize=3, 
    basicstyle= \tt \footnotesize, 
    keywordstyle=\color{purple}\bfseries, 
    commentstyle=\color{gray}, 
    xleftmargin=0em, 
    escapeinside=``,
    numbers = left,
    numbersep = 1pt,
    ]
    int[] th1, th2; 
    bool schedule(int n1, int n2, int[] h) {
        int pc1 = 0, pc2 = 0; 
        for(int i = 0; i < len(th1) + len(thread2); i++) {
            bool ready1 = pc1 < len(th1) && resource of type th1[pc1] available
            bool ready2 = pc2 < len(th2) && resource of type th2[pc2] available
            int allocto;
            if(h[i] == 1) { // prioritize thread 1
                if(ready1) allocto = 1;
                else if(ready2) allocto = 2;
                else return true; 
                // the next resource of both threads is not enough, deadlock.
            }
            else if(h[i] == 2) { // prioritize thread 2
                if(ready2) allocto = 2;
                else if(ready1) allocto = 1;
                else return true; 
                // the next resources of both threads are not enough, deadlock.
            }
            allocate the resource to th_allocto;
            pc_allocto++;
            if (pc_allocto == len(th_allocto))
                release the resources it occupied; 
        }
        return false; // both threads finished, no deadlock.
    }
\end{lstlisting}
\vspace{-6mm}
\caption{Resource allocator for two threads}
\label{fig:allocator}
\vspace{-6mm}
\end{figure}

\paragraph{Preventing Race Conditions}
The 3 \race{} benchmarks are about describing possible race conditions in concurrent programs. 
In each of them, there are 2 threads that access and modify a shared variable using the methods \texttt{set()} and \texttt{get()}. For example, \Cref{fig:race2} shows the code of 2 threads in benchmark \race{2}.

\begin{wrapfigure}{r}{0.4\textwidth}
\vspace{-2mm}
    \centering
    \begin{multicols}{2}
\begin{lstlisting}[ tabsize=3, 
    basicstyle= \tt \footnotesize, 
    keywordstyle=\color{purple}\bfseries, 
    commentstyle=\color{gray}, 
    xleftmargin=-0.5em, 
    escapeinside=``,
    morekeywords = {get, set},
    numbers = none
    ]
   // Thread 1
   0: t <- get()
   1: t <- t + 1
   2: t <- get()
   3: t <- t + 1
   4: set(t)
\end{lstlisting}
%
% \vfill\null
\columnbreak
\begin{lstlisting}[ tabsize=3, 
    basicstyle= \tt \footnotesize, 
    keywordstyle=\color{purple}\bfseries, 
    commentstyle=\color{gray}, 
    xleftmargin=-0.5em, 
    escapeinside=``,
    morekeywords = {get, set},
    numbers = none
    ]
  // Thread 2
  0: t <- get()
  1: t <- t - 1
  2: set(t)
  3: t <- t - 1
  4: set(t)
\end{lstlisting}
%
% \vfill\null
\end{multicols}
\vspace{-6mm}
    \caption{Two threads in \race{2}}
    \label{fig:race2}
\vspace{-4mm}    
\end{wrapfigure}

When there is no possible context switching between the 2 threads in \race{2}, if the initial value of the variable \texttt{t} is 0, its final value should be $-1$ (which we call the expected result). 
However, context switching can cause different interleaving of the threads to produce values different than $-1$---i.e., there exists a data race. 
Such a data race is typically prevented by introducing critical sections, in which instructions must be executed atomically. 

% \loris{try to put a sentence like this one "We show how \name can be used to.." in the other parts above/below. I also did it for concurrency}
We show how \name can be used to identify the minimum part of the code that should be made atomic for the code to be race free.
To model the problem in \name, we introduce two variables $\atomcons{1}$ and $\atomcons{2}$, which will be used to capture what lines in each thread should be executed atomically.
% 
% \loris{use diff font for atom. maybe italic}
The predicate $\atomp(\atomcons{i}, l, r)$ holds true if the instructions from line $l$ to line $r$ of thread-$i$ should be executed atomically. 

Now we can model whether a race happens as a query $\exists \ecolor{h}. ~\ucolor{race} = schedule(\atomcons{1}, \atomcons{2}, \ecolor{h})$, where \ucolor{race} captures whether a race condition can happen.
We provide the following two DSLs (the one rooted at nonterminal $B_{\langover}$ is for \loverpropertiesp{\langover}, whereas the one rooted at 
$B_{\langunder}$ is for \lunderpropertiesp{\langunder}):
\begin{equation}
\label{eq:race-grammar}
\begin{array}{rcl}
     B_{\langover} &:= & G \Rightarrow D \qquad \qquad B_{\langunder} \ \ :=\ \ G \land D \\
     G & := & \top \mid \nonap \mid \nonap \land \nonap \mid \cdots \mid \nonap \land \nonap \land \nonap \land \nonap \land \nonap\\
     \nonap & := & \atomp(AC, I, I) \mid \neg \atomp(AC, I, I) \\
     I & := & \text{all line numbers} \\
     AC & := & \atomcons{1} \mid \atomcons{2} \\
     D & := & \ucolor{race} \mid \neg \ucolor{race}
\end{array}
\end{equation}
The predicate $\atomp(\atomcons{i}, l, r)$ is implemented as a conjunction $\bigwedge_{k=l}^{r-1} \textit{noSwitch}(\atomcons{i}, k)$, where the predicate $\textit{noSwitch}(\atomcons{i}, k)$ is true if in thread-1, the instruction $k+1$ must be immediately executed after instruction $k$.
This implementation makes it so that the predicate $\atomp(\atomcons{i}, l_1, r_1)$ implies $\atomp(\atomcons{i}, l_2, r_2)$ when $l_1 \le l_2$ and $r_1 \ge r_2$. 
Since \name looks for a tightest properties, such an implementation lets \name reason about what is the smallest needed atomic execution.
% 
% \loris{say a word about why this is useful}

% \loris{this whole section doesn't talk about time? MAybe we should put at the top some info at the beginning of the section} \xuanyu{now mentioned in Benchmarks Selection subsections}

For \race{2}, \name synthesizes the following \loverpropertiesp{\langover}, which informally states that setting line 2 to line 4 of thread-1 and line 0 to line 4 of thread-2 as critical sections can prevent data race:
\begin{equation}
% \label{eq:atom-over-result}
\begin{array}{c}
    (\atomp(\atomcons{1}, 2, 4) \land \atomp(\atomcons{2}, 0, 4)) \Rightarrow \neg \ucolor{race} 
\end{array}
\end{equation}
We can observe that the effect of the first two instructions of thread-1 is overwritten by the third instruction, so it is unnecessary to include them in a critical section.
% \loris{I don't understand how implication happens in our DSL, how do you encode that atom(1-4) implies atom(2-3)?}

For \race{2}, \name to synthesizes the following \lunderpropertiesp{\langunder}, which states that setting line 2 to line 4 of thread-1 and whole thread-2 as critical sections is in fact \textit{necessary}!
\begin{equation}
% \label{eq:atom-under-result}
\begin{array}{c@{\hspace{4.0ex}}c@{\hspace{4.0ex}}c}
    \neg \atomp(\atomcons{1}, 2, 4) \land \ucolor{race} &
    \neg \atomp(\atomcons{2}, 0, 4) \land \ucolor{race} &
    \neg \ucolor{race}
\end{array}
\end{equation}

This example shows how over- and under-approximated reasoning can be cleverly combined in \framework to understand necessary and sufficient interventions in preventing data races.

\subsection{Application 3: Solving two-player games}
\label{App:eval-game}

\paragraph{Definition of safety games}
A safety game consists of a game graph $G = \langle (S, E), (S_1, S_2)\rangle$ and a safety objective $F$. In the graph, $S$ is a set of states partitioned into \playerA~states $S_1$ and \playerB~states $S_2$, $E\subseteq S \times S$ is a set of edges in which each edge connects a state in $S_1$ and a state in $S_2$. 
The safety objective $F\subseteq S$ is a set of safe states. 
\playerA's goal is to remain in safe states, while \playerB's goal is to visit unsafe states at least once.
%
% \Cref{fig:rg-game} shows the game graph of \rggame, where circular nodes belong to $S_1$, and square nodes belong to $S_2$.

We assume both players play the game according to a finite-state \textit{memoryless} strategy that is independent of the action history and depends only on the current states. 
For safety games, there always exists a memoryless winning strategy. 

\paragraph{More benchmarks}
\label{App:eval-game-benchmarks}

The \nimgame{2}~game is played with 2 heaps of pebbles with number $\ucolor{n_1}$ and $\ucolor{n_2}$. On each turn, a player must remove at least one pebble and may remove any number of pebbles if they all come from the same heap. The goal of the game is to be the player to remove the last pebble. One synthesized must strategy is \Cref{eq:nim2-over-result}.
\begin{equation}
\label{eq:nim2-over-result}
\begin{aligned}
    &\forall i,j.  (i < j \Rightarrow \strategyA[i, j].\exname{heap} = 2)
    \land  \forall i,j. (i < j \Rightarrow \strategyA[i, j].\exname{num} = j - i) \\
    \land &~\forall i,j. (i > j \Rightarrow \strategyA[i, j].\exname{heap} = 1)  \land  \forall i,j. (i > j \Rightarrow \strategyA[i, j].\exname{num} = i - j)  \land \ucolor{n_1} != \ucolor{n_2} \Rightarrow \ucolor{w} = T
\end{aligned}
\end{equation}
where \exname{heap} denotes the heap from which pebbles are taken when the numbers of pebbles in two heaps are $i$ and $j$,
and \exname{num} denotes the number of pebbles taken. 
\Cref{eq:nim2-over-result} essentially states that one can win if the initial two heaps of stones are different, and always keep them the same after taking.

\begin{wrapfigure}{r}{0.35\textwidth}
\vspace{-6mm}
    \centering
    \begin{lstlisting}[ tabsize=3, 
    basicstyle= \tt \footnotesize, 
    keywordstyle=\color{purple}\bfseries, 
    commentstyle=\color{gray}, 
    xleftmargin=-2.5em, 
    escapeinside=``,
    language = C,
    morekeywords = {assert},
    numbers = none
    ]
    temp = 20.5; isOn = 0;
    while(*) {
        assert(20 <= temp <= 25);
        isOn ++;
        if (isOn == ??) {
            temp += ??; isOn = 0;
        }
        temp -= (temp - 19) / 10;
    }
\end{lstlisting}
    % \vspace{-6mm}
    % \caption{Two threads in \race{2}}
    \label{fig:thermostat}
\vspace{-4mm}    
\end{wrapfigure}
The \tempgame~game models a controller for a thermostat shown in the right.  We consider strategies that set \texttt{isOn} as 1 at the $k$-th of every $n$ times. \name synthesizes 9 must strategies (and also 9 equivalent may strategies since there is no adversary in \tempgame~game), e.g., \inlinef{$2 \le \strategyA.n \le 7 \land \strategyA.k = 1 \Rightarrow \ucolor{w}$}, which states that the thermostat can keep the temperature in $[20, 25]$ by increasing two degree in the first second of every $n(2\le n \le 7)$ seconds. 

The \numgame{} benchmarks consider games played over a one-dimensional grid---i.e., an integer.
Each game is a 4-tuple $(v, A_1, A_2, S)$, where $v$  is an initial integer value, $A_1 = \{f_1, f_2,\cdots, f_n\}$ and $A_2 = \{g_1, g_2, \cdots, g_m\}$ (such that $f_i, g_i \in \mathbb{Z} \to \mathbb{Z}$) are the actions set that \playerA~and \playerB~use to manipulate the integer (e.g., increments, decrements, etc.), and $S \subseteq \mathbb{Z}$ is a the set of integers \playerA~ wants to stay in to win the game.

% \loris{either I messed up when copying or the games are the same?}
The problems \numgame{1}=$(v, \{nop, -1\}, \{\times 2, +1\}, [0, 4])$ and \numgame{2}=$(\{nop, -1\}, \{nop, +1\}, [0, 4])$ are two instances of the game above, where the initial integer value $v$ is left unspecified.

We want to use \name to understand the relationship between the value of $v$ and winning strategies, and thus supply \name with the following DSL:
\begin{equation}
\label{eq:num1-must-grammar}
\begin{array}{rcl}
    B_{must}  &:= &  G \Rightarrow R   \qquad \qquad 
    B_{may} \ \ := \ \ G \land R \\
     G & := & \top \mid \nonap \mid \nonap \land \nonap \mid \cdots \mid \nonap \land \nonap \land \nonap \land \nonap \land \nonap\\
     \nonap & := & \decision(\strategyA, S_1, A_1) \mid \ucolor{v} = \{0 | 1 |2 |3 |4\}\\
     S_1 &:= & 0 \mid 1 \mid 2 \mid 3 \mid 4 \\
     A_1 &:= & nop \mid -1 \\
     R & := & \ucolor{w} = \{T\mid F\}
\end{array}
\end{equation}
Using the DSL rooted at $B_{must}$, for \numgame{1} \name synthesized the \loverpropertiesp{\langover} that consists of 
\begin{equation}
\label{eq:num1-over-result}
(\ucolor{v} = 0 \land \strategyA[0] = nop \land \strategyA[2] = -1) \Rightarrow \ucolor{w} = T
\end{equation}
as well as other 9 \loverpropertiesp{\langover} in 15 seconds. 
The property states that if the initial value is $0$, \playerA can remain in the range $[0,5]$ by performing the action $nop$ at state 0 and the action $-1$ at state $2$.
Note that the strategy does not need to be defined at any of the other infinitely many states.

% \loris{modify above DSL with two initial nonterms and modify english like we did for other benchmarks to have two diff DSL for over/under}
Using the DSL rooted at $B_{may}$, for \numgame{1} \name synthesized the \lunderpropertiesp{\langunder} that consists of 
\begin{equation}
\label{eq:num1-under-result}
 \strategyA[1] = nop \land \strategyA[2] = nop \land \ucolor{w} = F
\end{equation} as well as other 16 \lunderpropertiesp{\langunder} in 10 seconds. 
The property states that by performing the action $nop$ at states 0 and 2, there exists a \playerB strategy that makes  \playerA lose the game.

% \begin{equation}
% \label{eq:num1-over-result}
% \begin{array}{c}
%     (\ucolor{v} = 0 \land \decision(\strategyA, 0, nop)\land \decision(\strategyA, 2, -1)) \Rightarrow \ucolor{w} \\
%     (\ucolor{v} = 0 \land \decision(\strategyA, 0, nop)\land \decision(\strategyA, 1, -1)) \Rightarrow \ucolor{w} \\
%     (\decision(\strategyA, 0, -1)\land \decision(\strategyA, 1, nop)\land \decision(\strategyA, 2, nop)\land \decision(\strategyA, 3, nop)\land \decision(\strategyA, 4, nop)) \Rightarrow \neg \ucolor{w} \\
% \end{array}
% \end{equation}

\end{document}